\DeclareMathOperator*{\argmin}{argmin}
\newlength{\fixboxwidth}
\newtheorem{theorem}{Theorem}
\newtheorem{lemma}[theorem]{Lemma}
\theoremstyle{definition}
\theoremstyle{remark}
\numberwithin{equation}{section}
\def\EE{\mathbb E}
\def\II{\mathbb I}
\def\PP{\mathbb P}
\def\RR{\mathbb R}
\def\hat{\widehat}
\def\calP{\mathcal P}
\def\calF{\mathcal F}
\def\rank{\mathrm{rank}}
\def\tr{\mathrm{trace}}
\def\bu{{\mathbf u}}
\def\bv{{\mathbf v}}
\begin{document}

\title{On Recovering the Best Rank-$r$ Approximation from Few Entries$^\ast$}

\date{}

\author{Shun Xu and Ming Yuan$^\dag$\\
Columbia University}

\maketitle

\begin{abstract}
In this note, we investigate how well we can reconstruct the best rank-$r$ approximation of a large matrix from a small number of its entries. We show that even if a data matrix is of full rank and cannot be approximated well by a low-rank matrix, its best low-rank approximations may still be reliably computed or estimated from a small number of its entries. This is especially relevant from a statistical viewpoint: the best low-rank approximations to a data matrix are often of more interest than itself because they capture the more stable and oftentimes more reproducible properties of an otherwise complicated data-generating model. In particular, we investigate two agnostic approaches: the first is based on spectral truncation; and the second is a projected gradient descent based optimization procedure. We argue that, while the first approach is intuitive and reasonably effective, the latter has far superior performance in general. We show that the error depends on how close the matrix is to being of low rank. Both theoretical and numerical evidence is presented to demonstrate the effectiveness of the proposed approaches.
\end{abstract}

\footnotetext[1]{
Research supported in part by NSF Grant DMS-2015285.}
\footnotetext[2]{
Address for Correspondence: Department of Statistics, Columbia University, 1255 Amsterdam Avenue, New York, NY 10027.}

\newpage

\section{Introduction}
Low-rank approximations and other related spectral methods are among the most fundamental and ubiquitous tools in data analysis. Their computational and statistical aspects have been studied extensively and are among the central themes in numerical analysis and multivariate statistics respectively. The classical approaches however face new challenges with massive data sets are being generated every day across diverse fields: gene expression analysis \citep[see, e.g.,][]{kluger2003spectral}, protein-to-protein interaction \citep[see, e.g.,][]{stelzl2005human}, MRI image analysis \citep[see, e.g.,][]{smith2004advances}, and the analysis of large graphs and social networks \citep[see, e.g.,][]{clauset2004finding, scott2017social}, to name a few.  Scalable computation and valid statistical inferences in high dimensions for low-rank approximations are the subjects of fervent research interests in recent years.

Consider a data matrix $A\in \mathbbm{R}^{d_1\times d_2}$. As Eckart-Young Theorem indicates, its best rank-$r$ approximation, denoted by $A_r$ hereafter, can be uniquely identified by its top $r$ singular values and their corresponding vectors, assuming that its $r$th and $(r+1)$th singular values are different for simplicity. From a statistical perspective, we are interested in how well properties of a data-generating model can be inferred from $A_r$, be they principal components, statistical factors, empirical orthogonal functions, or else. Suppose that we are interested in estimating a parameter that can be represented by a matrix $\Theta\in {\mathbb R}^{d_1\times d_2}$ in the Frobenius norm $\|\cdot\|_{\rm F}$. These analyses often tell us about the stochastic behavior of the ``estimation error'' $\|A_r-\Theta\|_{\rm F}$. For example, if $A=\Theta+E$ where $E$'s entries are independent $N(0,\sigma^2)$ variables, then $\|A_r-\Theta\|_{\rm F}^2=O_p(r(d_1+d_2)\sigma^2)$ so that $A_r$ is a consistent estimate of $\Theta$ under the Frobenius norm whenever $\sigma^2\ll [r(d_1+d_2)]^{-1}$. Classical multivariate analysis has primarily focused on ``thin'' data matrices where $d_1$ (sample size) is large and $d_2$ (dimensionality or number of features) is small \citep[see, e.g.,][]{anderson2003introduction}. On the other hand, the emphasis of recent effort is on the so-called high dimensional paradigm where both $d_1$ and $d_2$ are large \citep[see, e.g.,][]{bai2010spectral, wainwright2019high}.

Similarly, from a computational perspective, tremendous progress has been made in the past couple of decades towards fast computation of low-rank approximations to $A$ when both $d_1$ and $d_2$ are large, and randomized algorithms have become increasingly popular for this purpose \citep[see, e.g.,][]{mahoney2011randomized, woodruff2014sketching}. The goal is to find an $\hat{A}$ which may or may not be of rank $r$ that can approximate $A$ nearly as well as $A_r$ under suitable matrix norms. Typical guarantees, in terms of Frobenius norm, for instance, for the output $\hat{A}$ from these algorithms take the form:
\begin{equation}
\label{eq:multerr}
\|A-\hat{A}\|_{\rm F}\le (1+\epsilon)\|A-A_r\|_{\rm F},
\end{equation}
for a sufficiently small factor $\epsilon>0$.

It is, however, worth noting a subtle distinction between these two perspectives. While from a computational point of view, we are interested in finding a ``good'' approximation to $A$; from a statistical viewpoint, the focus is often on $A_r$ itself rather than the original data matrix $A$ because $A_r$ captures the more stable and oftentimes more reproducible properties of an otherwise complicated data-generating model, even if the data matrix $A$ itself may not necessarily be of low rank. These two goals are closely connected yet could also be rather different. In particular, \eqref{eq:multerr} cannot ensure that $\hat{A}$ would inherit the nice statistical properties one may be able to establish for $A_r$: unless $A$ can be well approximated by $A_r$, \eqref{eq:multerr} does not imply that $\|A_r-\hat{A}\|_{\rm F}$ is small so we may not be able to infer from \eqref{eq:multerr} that $\hat{A}$ is necessarily a ``good'' estimate of $\Theta$ even if $\|A_r-\Theta\|_{\rm F}$ is known to be small. Consider, for example, the signal-plus-noise model $A=\Theta+E$ discussed before. Recall that $\|A-A_r\|_{\rm F}^2\asymp_p d_1d_2\sigma^2$ so that $\hat{A}$ may be inconsistent even though $A_r$ is a consistent estimate of $\Theta$ when $(d_1d_2)^{-1}\ll \sigma^2\ll (d_1+d_2)^{-1}$ at least for small $r$s. Moreover, $\hat{A}$ oftentimes has a rank much greater than $r$ and therefore unsuitable for situations where an exact rank-$r$ estimate is sought. Our work aims to fill in this gap between the two strands of fruitful research by investigating how well we can reconstruct $A_r$ from random sparsification of a general matrix $A$, and therefore contributes to growing literature to unify both statistical and computational perspectives \citep[see, e.g.,][]{raskutti2016statistical}.

To facilitate the storage, communication, or manipulation of a large data matrix, one often approximates the original data matrices with a more manageable amount of sketches. See \cite{mahoney2011randomized, woodruff2014sketching} for recent reviews. A popular idea behind many of these approaches is sparsification---creating a sparse matrix by zeroing out some entries of the original data matrix. Sparse sketching of a large data matrix not only reduces space complexity but also allows for efficient computations. See, e.g., \cite{frieze2004fast, arora2006fast, achlioptas2007fast, drineas2008relative, boutsidis2009improved, mahoney2009cur, drineas2011note, achlioptas2013near, krishnamurthy2013low}, among many others. In particular, we shall focus on matrix sparsification schemes that sample each entry of $A$ independently with a prescribed probability.

Denote by $\omega_{ij}$ the indicator that the $(i,j)$ entry of $A$ is sampled and $\Omega=(\omega_{ij})_{1\le i\le d_1,1\le j\le d_2}$. Write $\calP_\Omega(A)=(a_{ij}\omega_{ij}/p_{ij})_{1\le i\le d_1,1\le j\le d_2}$ where $p_{ij}={\mathbb P}(\omega_{ij}=1)$ and we shall adopt the convention that $0/0=0$ in the rest of the paper. Many different sampling schemes have been developed in recent years so that the spectral error $\|\calP_\Omega(A)-A\|$ can be made small even with a tiny fraction of the entries sampled, e.g., $\|\Omega\|_{\ell_1}=\sum_{i,j}|\omega_{ij}|\ll d_1d_2$. See, e.g., \cite{arora2006fast,achlioptas2007fast,drineas2011note} among numerous others. Given the closeness of $\calP_\Omega(A)$ to $A$, it is natural to consider estimating $A_r$ by the best rank-$r$ approximation $[\calP_\Omega(A)]_r$ of $\calP_\Omega(A)$. We show that this is indeed a reasonable approach in light of the tight bounds controlled by $\|\calP_\Omega(A) - A\|$ under the spectral error. But more interestingly, we argue that for an arbitrary $A$, by choosing $p_{ij}$s appropriately, we can derive a much better estimate of $A_r$ for all $r$s. To this end, we introduce a sampling scheme and a companion rank-$r$ estimator $\hat{A}_r$ of $A_r$ from $\calP_\Omega(A)$. We show that $\|\hat{A}_r-A_r\|_{\rm F}$ can be bounded via the sampling error of $\|\calP_\Omega(N_r)-N_r\|$ instead of $\|\calP_\Omega(A)-A\|$ where $N_r=A-A_r$. This leads to a significantly improved estimate of $A_r$. In particular, if $A$ is of rank up to $r$, then $N_r=0$ so that our estimator $\hat{A}_r$ recovers $A$ exactly with high probability. This makes an immediate connection with the popular matrix completion problem where one seeks the exact recovery of a large matrix from partial observations of its entries \citep[see, e.g.,][]{candes2009exact, gross2011recovering, recht2011simpler}

Despite such a similarity, there are also significant distinctions between our results and those typical for matrix completion. For exact matrix completion, it is usually assumed that the rank of $A$ is small and known apriori, and that its singular vectors are incoherent so that each entry carries a similar amount of information. On the other hand, our approach does not make such stipulations and is generally applicable. As we shall show, $\hat{A}_r$ remains a good estimate of $A_r$ even if $A$ is of full rank, and the incoherence condition can be done away with through carefully designed sampling schemes.

The rest of the paper is organized as follows. We first discuss the na\"ive approach to approximate $A_r$ by $[\calP_\Omega(A)]_r$ and show how it connects with existing results for matrix sparsification and completion in the next section. Section 3 introduces our sampling and estimation scheme for the improved approximation of $A_r$. We complement the theoretical results with numerical experiments in Section 4 to further demonstrate the merits of our approach. Proofs of the main results are presented in Section 5 with more technical details relegated to the Appendix.

\section{Na\"ive Estimate of $A_r$}
In the rest of the paper, we shall denote by $A_{i.}$, $A_{.j}$ and $a_{ij}$ the $i$-th row, $j$-th column, and $(i,j)$ entry of a matrix $A$, respectively. Similarly, $v_i$ is the $i$-th element of a vector $v$.  Following the convention, $\|\cdot\|_\textup{F}$ denotes the Frobenius norm, $\|\cdot\|$ the spectral norm for a matrix and the $\ell_2$ norm for a vector, $\|\cdot\|_\infty$ the element-wise infinity norm, $\|\cdot\|_{\ell_1}$ the (vectorized) $\ell_1$ norm. 
 We shall also write the inner product between two matrices as $\langle A,B\rangle=\tr(A^\top B)$. 

Most existing studies of the efficacy of sparsification algorithms focus on bounding the spectral error. For concreteness, consider a scheme that observes $a_{ij}$ independently with probability
\begin{align}\label{p:sparse}
p_{ij}=\min\left\{{n\over 2}\left({a_{ij}^2\over \|A\|_{\rm F}^2}+{|a_{ij}|\over \|A\|_{\ell_1}}\right),1\right\}.
\end{align}
It is not hard to see that, from Chernoff's bound, the total number $\|\Omega\|_{\ell_1}$ of observed entries is within $[n/2, 2n]$ with high probability. Also, there exists a numerical constant $C>4$ such that
$$
\left\|\calP_\Omega(A)-A\right\|\le 4\|A\|_{\rm F}\sqrt{d\over n}+C{\|A\|_{\ell_1}\over n}\sqrt{\log (d)+t},
$$
with probability at least $1-e^{-t}$, where $d=\max\{d_1,d_2\}$ \citep[see, e.g.,][]{o2018random}. In what follows, we shall use $C$, and similarly $C_1$, $C_2$, etc., as generic numerical constants that may take different values at each appearance. Because
$$
\|A\|_{\ell_1}\le d\|A\|_{\rm F},
$$
we have, under the above event,
$$
\left\|\calP_\Omega(A)-A\right\|\le C\|A\|_{\rm F}\left(\sqrt{d\over n}+{d\sqrt{\log (d)+t}\over n}\right).
$$
In particular, whenever $n\ge (\alpha+1)d\log d$, we get
\begin{equation}\label{specerr}
\left\|\calP_\Omega(A)-A\right\|\le C\|A\|_{\rm F}\sqrt{d\over n},
\end{equation}
with probability at least $1- d^{-\alpha}$.
In light of such a bound on the spectral norm, it is natural to consider estimating $A_r$ by the respective best low-rank approximation of $\calP_\Omega(A)$. The connection is made more precise by the following observation that links the closeness between $A_r$ and $[\calP_\Omega(A)]_r$ with the spectral error $\|\calP_\Omega(A)-A\|$.

More specifically, for a matrix $A\in \RR^{d_1\times d_2}$,
its singular values are denoted by
$$
\sigma_1(A)\ge \sigma_2(A)\ge\cdots\ge \sigma_d(A).
$$
Then we have

\begin{lemma}
\label{le:lowrankappr}
Let $B=A+E$. If $\sigma_r(A)>\sigma_{r+1}(A)$ and $\|E\| \leq (\sigma_r(A) - \sigma_{r+1}(A))/2$, then there exists a numerical constant $C_0>0$ such that for any $1\le r<d$,
$$
\|B_r-A_r\|_{\rm F}\le C_0\sqrt{r}\|E\| \left(1+{\sqrt{\sigma_{r+1}(A)\sigma_1(A)}\over \sigma_r(A)-\sigma_{r+1}(A)}\right).
$$
\end{lemma}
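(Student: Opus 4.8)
The plan is to pass from the singular vectors to the orthogonal projections they induce and to bound $B_r-A_r$ block by block. Write the SVD $A=U\Sigma V^\top$ and split $U=[U_1,U_2]$, $V=[V_1,V_2]$, $\Sigma=\diag(\Sigma_1,\Sigma_2)$ along the $r$th gap, so that $A_r=U_1\Sigma_1V_1^\top$, $N_r:=A-A_r=U_2\Sigma_2V_2^\top$, and set $P=U_1U_1^\top$, $Q=V_1V_1^\top$, with $P^\perp=I-P$, $Q^\perp=I-Q$. Do the same for $B=A+E$, writing $\hat U,\hat V,\hat\Sigma$ and $\hat P=\hat U_1\hat U_1^\top$, $\hat Q=\hat V_1\hat V_1^\top$, so that $B_r=\hat P B\hat Q$. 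I would record at the outset the elementary identities $A_r=PAQ$, $PN_r=N_rQ=0$, $A_rV_1=U_1\Sigma_1$, and the SVD block relations $\hat P^\perp B\hat Q=\hat P B\hat Q^\perp=0$ together with $\hat U_2^\top B=\hat\Sigma_2\hat V_2^\top$. First I would use Weyl's inequality together with the hypothesis $\|E\|\le(\sigma_r(A)-\sigma_{r+1}(A))/2=:\delta/2$ to guarantee a strictly positive singular gap for $B$, and then invoke Wedin's $\sin\Theta$ theorem to obtain $\max\{\|\hat U_2^\top U_1\|,\|\hat V_2^\top V_1\|\}\le C\|E\|/\delta$.

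Next I would decompose $B_r-A_r$ according to the four blocks cut out by $\hat P,\hat P^\perp$ on the left and $\hat Q,\hat Q^\perp$ on the right. Since $B_r=\hat P B\hat Q$ is supported on the $(\hat P,\hat Q)$ block, the three off-blocks of $B_r-A_r$ are simply $-\hat P^\perp A_r\hat Q$, $-\hat P A_r\hat Q^\perp$, and $-\hat P^\perp A_r\hat Q^\perp$, while the $(\hat P,\hat Q)$ block equals $\hat P(B-A_r)\hat Q=\hat P E\hat Q+\hat P N_r\hat Q$. Each piece has rank at most $r$, so $\|\cdot\|_{\rm F}\le\sqrt r\|\cdot\|$ and it suffices to bound spectral norms. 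The term $\hat P E\hat Q$ contributes the leading $\sqrt r\|E\|$, and the term $\hat P N_r\hat Q=\hat U_1(\hat U_1^\top N_r\hat V_1)\hat V_1^\top$ is controlled by $\|N_r\hat V_1\|\le\sigma_{r+1}(A)\|V_2^\top\hat V_1\|\le C\sigma_{r+1}(A)\|E\|/\delta$, using $N_r\hat V_1=U_2\Sigma_2(V_2^\top\hat V_1)$.

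The main obstacle is the doubly-off-diagonal block $\hat P^\perp A_r\hat Q^\perp$ (and the two singly-off-diagonal ones), where the naive estimate $\|\hat P^\perp A_r\|\le\sigma_1(A)\|\hat U_2^\top U_1\|\lesssim\sigma_1(A)\|E\|/\delta$ carries a spurious factor $\sigma_1(A)$. The key step that removes it is the identity $\|\hat P^\perp A_r\|=\|\hat U_2^\top A_rV_1\|$ followed by the rewriting $\hat U_2^\top A_rV_1=\hat U_2^\top AV_1=\hat U_2^\top BV_1-\hat U_2^\top EV_1=\hat\Sigma_2(\hat V_2^\top V_1)-\hat U_2^\top EV_1$, where the first equality uses $N_rV_1=0$ and the third uses $\hat U_2^\top B=\hat\Sigma_2\hat V_2^\top$. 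Bounding $\|\hat\Sigma_2\|=\sigma_{r+1}(B)\le\sigma_{r+1}(A)+\|E\|$ and $\|\hat V_2^\top V_1\|\le C\|E\|/\delta$ then yields $\|\hat P^\perp A_r\|\le C\|E\|\left(1+\sigma_{r+1}(A)/\delta\right)$, trading the $\sigma_1(A)$ for $\sigma_{r+1}(A)$; this dominates both $\hat P^\perp A_r\hat Q$ and $\hat P^\perp A_r\hat Q^\perp$, and a transposed argument gives the same bound for $\|A_r\hat Q^\perp\|$, which controls $\hat P A_r\hat Q^\perp$.

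Finally I would collect the four block bounds, multiply by $\sqrt r$, and absorb constants into $C_0$; since $\sigma_{r+1}(A)\le\sqrt{\sigma_{r+1}(A)\sigma_1(A)}$, the resulting bound $C\sqrt r\|E\|\left(1+\sigma_{r+1}(A)/\delta\right)$ is in particular dominated by the stated $C_0\sqrt r\|E\|\left(1+\sqrt{\sigma_{r+1}(A)\sigma_1(A)}/\delta\right)$. The only points needing care are the precise form of the Wedin bound and verifying that the gap condition forces its denominator to be at least $\delta/2$; the remainder is bookkeeping with the projection identities recorded in the first step.
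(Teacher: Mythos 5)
Your proposal is correct, but it takes a genuinely different route from the paper. The paper's proof is variational: it uses the fact that $B_r$ minimizes $\|M-B\|_{\rm F}^2$ over rank-$r$ matrices, expands the square with $M=A_r$, and reduces everything to bounding the cross term $\langle B_r, A-A_r\rangle$, which is where Wedin's theorem enters; the product $\sigma_{r+1}(A)\sigma_1(B)$ appears there because $B_r$ is bounded crudely by its top singular value, and taking a square root at the end produces the $\sqrt{\sigma_{r+1}(A)\sigma_1(A)}$ in the statement. You instead decompose $B_r-A_r$ directly into the four blocks cut out by $\hat P,\hat P^\perp,\hat Q,\hat Q^\perp$, use $\|\cdot\|_{\rm F}\le\sqrt{r}\|\cdot\|$ blockwise, and, crucially, exploit the SVD identity $\hat U_2^\top B=\hat\Sigma_2\hat V_2^\top$ together with $N_rV_1=0$ to replace the naive factor $\sigma_1(A)$ in the off-diagonal blocks by $\sigma_{r+1}(B)\le\sigma_{r+1}(A)+\|E\|$. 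Both proofs lean on Weyl plus Wedin's $\sin\Theta$ theorem (with the hypothesis $\|E\|\le(\sigma_r(A)-\sigma_{r+1}(A))/2$ guaranteeing an effective gap of at least $\delta/2$), but your bookkeeping yields the strictly sharper bound $C\sqrt{r}\|E\|\left(1+\sigma_{r+1}(A)/\delta\right)$, which dominates the stated one since $\sigma_{r+1}(A)\le\sqrt{\sigma_{r+1}(A)\sigma_1(A)}$. The trade-off is that your argument requires carefully tracking the full (rectangular) SVD blocks and the several projection identities, whereas the paper's variational argument is shorter and needs only the two $\sin\Theta$ bounds; but the extra care buys a genuinely better dependence on the spectrum.
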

In light of Lemma \ref{le:lowrankappr} and \eqref{specerr}, we can bound the difference between $A_r$ and $[\calP_\Omega(A)]_r$ as follows:
\begin{theorem}
\label{th:old}
Assume that each entry of $\Omega$ is independently sampled from binomial trails with probability given by
\begin{equation*}
p_{ij}=\min\left\{{n\over 2}\left({a_{ij}^2\over \|A\|_{\rm F}^2}+{|a_{ij}|\over \|A\|_{\ell_1}}\right),1\right\}.
\end{equation*}
If $\sigma_r(A)>\sigma_{r+1}(A)$, then there exist numerical constants $C_1,\ C_2>0$ such that for any $\alpha>0$,
$$
\|[\calP_\Omega(A)]_r-A_r\|_{\rm F}\le C_1\|A\|_{\rm F}\sqrt{rd\over n}\left(1+{\sqrt{\sigma_{r+1}(A)\sigma_1(A)}\over \sigma_r(A)-\sigma_{r+1}(A)}\right),
$$
with probability at least $1-d^{-\alpha}$ provided that
$$
n\ge C_2d\max\left\{(1+\alpha)\log d,{\|A\|_{\rm F}^2\over (\sigma_{r}(A) - \sigma_{r+1}(A))^2}\right\}.
$$
\end{theorem}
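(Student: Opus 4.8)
The plan is to recognize this statement as a direct composition of the two ingredients already assembled in the excerpt: the deterministic perturbation bound of Lemma~\ref{le:lowrankappr} and the high-probability spectral estimate \eqref{specerr} for the chosen sampling scheme. Concretely, I would set $B=\calP_\Omega(A)$ so that $[\calP_\Omega(A)]_r=B_r$ and the perturbation is $E=\calP_\Omega(A)-A$. Lemma~\ref{le:lowrankappr} then bounds $\|B_r-A_r\|_{\rm F}$ in terms of $\|E\|$, and all that remains is to control $\|E\|$ in spectral norm and to verify that the hypotheses of the lemma hold on a high-probability event.

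First I would invoke \eqref{specerr}: since the theorem assumes exactly the sampling probabilities \eqref{p:sparse}, the condition $n\ge C_2 d(1+\alpha)\log d$ (the first term inside the max) guarantees $n\ge(\alpha+1)d\log d$, so that $\|E\|=\|\calP_\Omega(A)-A\|\le C\|A\|_{\rm F}\sqrt{d/n}$ holds with probability at least $1-d^{-\alpha}$. Second, I would use the other term in the max, $n\ge C_2 d\|A\|_{\rm F}^2/(\sigma_r(A)-\sigma_{r+1}(A))^2$, to check the applicability condition $\|E\|\le(\sigma_r(A)-\sigma_{r+1}(A))/2$ of Lemma~\ref{le:lowrankappr}: on the same event, $\|E\|\le C\|A\|_{\rm F}\sqrt{d/n}\le C(\sigma_r(A)-\sigma_{r+1}(A))/\sqrt{C_2}$, which is at most $(\sigma_r(A)-\sigma_{r+1}(A))/2$ once $C_2$ is taken large enough (e.g.\ $C_2\ge 4C^2$). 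Feeding the spectral bound on $\|E\|$ into Lemma~\ref{le:lowrankappr} and absorbing $C_0$ and $C$ into a single constant $C_1$ then yields the claimed bound, with the factor $\sqrt{r}\cdot\sqrt{d/n}$ combining into $\sqrt{rd/n}$ and the eigengap-dependent factor carried over verbatim from the lemma.

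The substantive mathematical content therefore lives in the two preparatory results rather than in this theorem, and I expect no serious obstacle beyond careful bookkeeping of the numerical constants. The one point deserving attention is the interplay between the two terms in the lower bound on $n$: the logarithmic term is what makes the spectral estimate \eqref{specerr} valid, while the eigengap term is precisely what forces $\|E\|$ below the separation threshold so that the perturbation argument of Lemma~\ref{le:lowrankappr} applies. The latter is the genuinely restrictive requirement when $A$ is far from rank $r$, since then $\|A\|_{\rm F}$ is large relative to the gap $\sigma_r(A)-\sigma_{r+1}(A)$; it is exactly this dependence on $\|A\|_{\rm F}$, rather than on the residual $\|A-A_r\|_{\rm F}$, that the improved estimator of Section~3 is designed to remove.
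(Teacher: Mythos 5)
Your proposal is correct and follows exactly the route the paper intends: Theorem~\ref{th:old} is presented there as an immediate consequence of Lemma~\ref{le:lowrankappr} applied with $B=\calP_\Omega(A)$, $E=\calP_\Omega(A)-A$, combined with the spectral bound \eqref{specerr}. Your handling of the two terms in the sample-size condition (the logarithmic term validating \eqref{specerr}, the eigengap term ensuring $\|E\|\le(\sigma_r(A)-\sigma_{r+1}(A))/2$ so the lemma applies) matches the paper's intent precisely.
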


Theorem \ref{th:old} justifies $[\calP_\Omega(A)]_r$ as a valid estimator of $A_r$. It is worth noting that both the sample size requirement and error bound depend on the eigengap $\sigma_r(A)-\sigma_{r+1}(A)$. This is inevitable since the eigengap characterizes the stability of best rank-$r$ approximation and in the extreme case where eigengap is 0, $A_r$ is not uniquely defined. We hereafter assume that $\sigma_r(A) > \sigma_{r+1}(A)$.

It is instructive to revisit the signal-plus-noise model we discussed in the Introduction: $A=\Theta+E\in \RR^{d\times d}$ where, to fix ideas, we assume that the signal $\Theta=\bu\bv^\top$ is a rank-one square matrix with both $\bu$ and $\bv$ of unit length, and that the noise $E$ has independent $N(0,\sigma^2)$ entries. As noted before, it is of special interest to focus on the case when $d^{-2}\ll\sigma^2\ll d^{-1}$ and there is a mismatch between the existing results from the statistical and computational sides. It is not hard to see that in this case $\|A\|_{\rm F}=O_p(d\sigma)$ and $\|A_1-\Theta\|_{\rm F}=O_p(\sigma\sqrt{d})$. In light of Theorem \ref{th:old}, this suggests that
$$
\|[\calP_\Omega(A)]_1-\Theta\|_{\rm F}\le \|[\calP_\Omega(A)]_1-A_1\|_{\rm F}+\|A_1-\Theta\|_{\rm F}=O_p\left(\sigma d^{3/2} n^{-1/2}\right).
$$
Hence $[\calP_\Omega(A)]_1$ is consistent under the sample size requirement that $n\ge C\sigma^2d^3\log d$. In particular, if $\sigma^2\sim d^{-\alpha}$ for some $\alpha\in (1,2)$, then the sample size requirement can be expressed as $n\gtrsim d^{3-\alpha}\log d$ which means that a consistent estimate of $\Theta$ can be obtained from a vanishing proportion ($n/d^2\to 0$) of the entries of $A$.

Another interesting test case here is when $A$ is of rank up to $r$ so that $A=A_r$. If, in addition, all entries of $A$ are of the same order so that $p_{ij}\sim n/(d_1d_2)$, then it is not hard to see that
$$
\EE\|\calP_\Omega(A)-A\|_{\rm F}^2\le {C\|A\|_{\rm F}^2d^2\over n},
$$
so that we have the following performance bound for $\calP_\Omega(A)$ as an approximation to $A$ under the Frobenius norm:
$$
\|\calP_\Omega(A)-A\|_{\rm F}=O_p\left(\|A\|_{\rm F}\sqrt{d^2\over n}\right).
$$
On the other hand, Theorem \ref{th:old} indicates that
$$
\|[\calP_\Omega(A)]_r-A\|_{\rm F} = O_p\left(\|A\|_{\rm F}\sqrt{rd\over n}\right).
$$
This immediately suggests that $[\calP_\Omega(A)]_r$ is a better estimate of $A$ by leveraging the fact that $\rank(A)$ is small. Although this example shows the efficacy of $[\calP_\Omega(A)]_r$ as an estimate of $A_r$, it also points to room for further improvement at least when $A$ is of low rank. In fact, with additional conditions, it may even be possible to recover a low-rank matrix $A$ exactly from $\calP_\Omega(A)$!

For this to be possible, one usually assumes that $\rank(A)$ is small and known apriori. Another essential concept is the so-called incoherence condition. Denote by $A = U\Sigma V^\top$ its singular value decomposition. We say a rank-$r$ matrix $A$ is $\mu$-incoherent if
$$\max\left\{{d_1\over r} \|U_{i.}\|^2, {d_2\over r} \|V_{j.}\|^2\right\}\le \mu.$$
Intuitively, the incoherence condition ensures that each entry of $A$ is of similar importance so that missing any one of them will not prevent us from being able to recover $A$.  Numerous tractable algorithms have been developed to reconstruct $A$ assuming that it is $\mu$-incoherent and each of its entries is sampled independently with probability $n/(d_1d_2)$. See, e.g., \cite{candes2009exact,candes2010power,keshavan2010matrix,sun2016guaranteed} among many others. Interested readers are also referred to \cite{davenport2016overview, chen2018harnessing} for a couple of recent surveys. For example, a natural approach is to reconstruct $A$ by the solution to
$$\min_{\rank(B)\le r} \left\langle \calP_\Omega\left(A-B\right),A-B \right\rangle.$$
In particular, results from the recent work of \cite{chen2020noisy} indicate that we can recover $A$ exactly this way, with high probability, if
$$n \geq C\kappa^4\mu^2r^2(d_1+d_2)\log^3 (d_1+d_2),$$
where $\kappa=\sigma_1(A)/\sigma_r(A)$ is the condition number of $A$.

These results immediately suggest that we can do even better than $[\calP_\Omega(A)]_r$, albeit under additional assumptions. But can we do better than $[\calP_\Omega(A)]_r$ in general? Especially, can we do away with the incoherence assumption and the need to know $\rank(A)$ apriori? We shall now argue that the answer is indeed affirmative.

\section{Improved Estimate of $A_r$}

There are two main ingredients to our approach: sparsification with carefully chosen sampling probabilities to remove the need for incoherence; and an agnostic procedure based on projected gradient descent to estimate $A_r$ for any $r$. 

The fact that incoherence is closely related to the uniform sampling was noted first by \citet{chen2015completing}. They observed that a rank-$r$ matrix $A$ could be recovered exactly without the incoherence condition by taking 
$$p_{ij} \propto {d_1\over r} \|U_{i.}\|^2+ {d_2\over r} \|V_{j.}\|^2.$$
The difficulty of course is that it requires that $A$ be of rank $r$. Nonetheless, motivated by the observation that
$${\|A_{i.}\|\over\sigma_1(A)} \le \|U_{i.}\| \le {\|A_{i.}\|\over\sigma_r(A)},$$
and
$${\|A_{.j}\|\over\sigma_1(A)} \le \|V_{j.}\| \le {\|A_{j.}\|\over\sigma_r(A)},$$
we shall consider sampling $a_{ij}$ with probability
\begin{equation}
\label{eq:defp}
p_{ij}=\min\left\{{n\over 3}\left({\|A_{i\cdot}\|^2\over d_2\|A\|_{\rm F}^2}+{\|A_{\cdot j }\|^2\over d_1\|A\|_{\rm F}^2}+{|a_{ij}|\over \|A\|_{\ell_1}}\right), 1\right\},
\end{equation}
Compared with \eqref{p:sparse}, we essentially replaced $a_{ij}^2$ with $\|A_{i.}\|^2+\|A_{.j}\|^2$ which plays a similar role as $(d_1\|U_{i.}\|^2+d_2\|V_{j.}\|^2)/r$ when $A$ is indeed of rank $r$. Unlike the factor $(d_1\|U_{i.}\|^2+d_2\|V_{j.}\|^2)/r$, however, our choice of $p_{ij}$ does not depend on $r$. This is critical in allowing us to estimate $A_r$ for any $r$ from the sampled entries.

We shall now describe how to reconstruct $A_r$ from $\calP_\Omega (A)$. Our approach is similar in spirit to \cite{ge2017no} and \cite{chen2020noisy}. See also \cite{jain2017non}. Recall that $A_r$ is the solution to
$$
\argmin_{\rank(B)\le r}\|A-B\|_{\rm F}^2.
$$
It is clear by construction
$$\EE \left\langle \calP_\Omega\left(A-B\right),A-B \right\rangle=A-B,$$
so that we can consider minimizing $\left\langle \calP_\Omega\left(A-B\right),A-B \right\rangle$. Any matrix $B$ with rank up to $r$ can be written as $B=XY^\top$ for some $X\in \RR^{d_1\times r}$ and $Y\in \RR^{d_2\times r}$ so that this is equivalent to minimizing $\left\langle \calP_\Omega\left(A-XY^\top\right),A-XY^\top \right\rangle$ over the couple $(X,Y)$. However, decomposition $B=XY^\top$ is not unique. To overcome such an identifiable issue, we shall consider estimating $A_r$ by the solution to
\begin{equation}\label{defopt}
\min_{\calF_\beta} f(X,Y) =\frac{1}{2} \left\langle \calP_\Omega\left(A-XY^\top\right),A-XY^\top \right\rangle + \frac{1}{8} \left\|X^\top X - Y^\top Y\right\|_{\rm F}^2 ,
\end{equation}
where 
$$\calF_\beta=\left\{ (X,Y)\in  \RR^{d_1 \times r}\times \RR^{d_2\times r}: \|X_{i.}\|\leq \|A_{i.}\|/\beta,\ \|Y_{j.}\|\leq \|A_{.j}\|/\beta \right\},$$ 
and $\beta$ is a tuning parameter to be specified later. 

The second term on the right-hand side  of \eqref{defopt} forces $X^\top X=Y^\top Y$ so that if $XY^\top =\tilde{U}\tilde{\Sigma}\tilde{V}^\top$ then $X=\tilde{U}\tilde{\Sigma}^{1/2}$ and $Y= \tilde{V}\tilde{\Sigma}^{1/2}$. Note that
$$\|(A_r)_{i.}\|^2  = \sum_{j=1}^r \sigma_j(A)^2u_{ij}^2 \geq \sigma_r(A) \sum_{j=1}^r \sigma_j(A) u_{ij}^2.$$
When $X$ to be close to $U_r\Sigma^{1/2}_r$, then
$$\|X_{i.}\|^2 \approx \sum_{j=1}^r\sigma_j(A) u_{ij}^2 \leq \|(A_r)_{i.}\|^2/\sigma_r(A) \le \|A_{i.}\|^2/\sigma_r(A),$$
so that the constraint on $X_{i\cdot}$, and similarly that on $Y_{j\cdot}$, becomes inactive at least for a sufficiently small $\beta$.

The objective function $f$ in \eqref{defopt} is non-convex jointly over $(X,Y)$. Nonetheless, it is natural to solve it by projected gradient descent with a good initialization. To this end, let $\calP_{\calF_\beta}$ be the projection on the convex set $\calF_\beta$, that is
$$
\calP_{\calF_\beta}(X)_{i.} = \frac{X_{i.}}{\|X_{i.}\|} \cdot \min\left\{\|X_{i.}\|,\frac{\|A_{i.}\|}{\beta} \right\},
$$
and
$$
\calP_{\calF_\beta}(Y)_{j.} = \frac{Y_{j.}}{\|Y_{j.}\|} \cdot \min\left\{\|Y_{j.}\|,\frac{\|A_{.j}\|}{\beta} \right\}.
$$

\begin{algorithm}[H]
\begin{algorithmic}[1]
\caption{Projected Gradient Descent}\label{alg1}
\REQUIRE Step size $\eta$, number of iterations $T$, tuning parameter $\beta$
\STATE  \textbf{Initialization}: $X_0=U_0\Sigma_0^{1/2}$, $Y_0=V_0\Sigma_0^{1/2}$, where $[\calP_\Omega(A)]_r =U_0\Sigma_0V_0^\top$ is its SVD
\STATE \textbf{for}: $t=0,1,2,\dots,T-1$
\STATE $\qquad X_{t+1} = X_t-\eta\nabla_X f(X_t,Y_t)$, $Y_{t+1} = Y_t-\eta\nabla_Y f(X_t,Y_t)$
\STATE $\qquad X_{t+1} = \calP_{\calF_\beta}(X_{t+1})$, $Y_{t+1} = \calP_{\calF_\beta}(Y_{t+1})$
\STATE \textbf{end for}
\ENSURE $(X_T,Y_T)$
\end{algorithmic}
\end{algorithm}

The intuition behind our approach is as follows. Write $A=A_r+N_r$. In light of the discussion from the previous subsection, we may be able to exactly recover $A_r$ from $\calP_\Omega(A_r)$. However, what we have is $\calP_\Omega(A) = \calP_\Omega(A_r) + \calP_\Omega(N_r)$, with an extra term $\calP_\Omega(N_r)$, and we now hope to be able to control the error bound for $\hat{A}_r:=X_TY_T^{\top}$ by the spectral error $\left\|\calP_\Omega (N_r)-N_r\right\|$. The main advantage of our scheme is that we have tighter control of the perturbation $\left\|\calP_\Omega(N_r)-N_r\right\|$ than $\left\|\calP_\Omega(A)-A\right\|$ used in Theorem \ref{th:old}. Denote by
$$
\mu_r(A) = \max_{\substack{1\le i \le d_1\\1\le j \le d_2}}\left\{{\|(A_r)_{i\cdot}\|\over \|A_{i\cdot}\|},{\|(A_r)_{\cdot j}\|\over \|A_{\cdot j}\|}\right\},
$$
$$
\nu_r(A)^2=1-\min_{\substack{1\le i \le d_1\\1\le j \le d_2}}\left\{{\|(A_r)_{i\cdot}\|^2\over \|A_{i\cdot}\|^2},{\|(A_r)_{\cdot j}\|^2\over \|A_{\cdot j}\|^2}\right\},
$$
and
$$
\nu_r(A)_\infty = \max_{\substack{1\le i \le d_1\\1\le j \le d_2}}\left\{{|a_{ij}-(a_r)_{ij}|\over |a_{ij}| + (\|A_{i \cdot}\|^2 + \|A_{\cdot j}\|^2)/\|A\|_{\rm F}}\right\}.$$
We have
\begin{lemma}
\label{le:remainder}
Assume that each entry of $\Omega$ is independently sampled from binomial trails with probability given by \eqref{eq:defp}. If $\sigma_r(A)>\sigma_{r+1}(A)$, then there exist numerical constants $C_0,C_1>0$, with probability at least $1-d^{-\alpha}$, such that
\begin{equation}
\label{remainder:bound1}
\left\|\calP_\Omega(N_r)-N_r\right\|\le C_0\left(\nu_r(A)+ \nu_r(A)_\infty\sqrt{d\log d\over n}\right)\|A\|_{\rm F}\sqrt{d\over n}.
\end{equation}
Further, an upper bound of $\nu_r(A)_\infty$ gives us 
\begin{equation}
\label{remainder:bound2}
\left\|\calP_\Omega(N_r)-N_r\right\|\le C_1\max \left\{ \nu_r(A),\left(1+{\mu_r(A)^2\|A\|_{\rm F}\over\sigma_r(A)}\right)\sqrt{d\log d\over n}\right\}\|A\|_{\rm F}\sqrt{d\over n}.
\end{equation}
\end{lemma}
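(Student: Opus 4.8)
The plan is to write $\calP_\Omega(N_r)-N_r$ as a sum of independent mean-zero random matrices and apply a spectral-norm concentration inequality for random matrices with independent entries, of exactly the type used to derive \eqref{specerr} \citep{o2018random}. Since $(N_r)_{ij}=a_{ij}-(a_r)_{ij}$ and the entries are sampled independently,
\begin{equation*}
\calP_\Omega(N_r)-N_r=\sum_{i,j}\left({\omega_{ij}\over p_{ij}}-1\right)(n_r)_{ij}\,e_ie_j^\top,
\end{equation*}
where the summands are independent and mean-zero, and any $(i,j)$ with $p_{ij}=1$ contributes nothing because then $\omega_{ij}\equiv 1$. Such a bound is governed by the variance parameter
\begin{equation*}
v=\max\left\{\max_i\sum_j{(n_r)_{ij}^2\over p_{ij}},\ \max_j\sum_i{(n_r)_{ij}^2\over p_{ij}}\right\},
\end{equation*}
together with the uniform bound $L=\max_{i,j}|(n_r)_{ij}|/p_{ij}$ on the norm of a single summand, and yields $\|\calP_\Omega(N_r)-N_r\|\lesssim \sqrt{v}+L\sqrt{\log d+t}$ with probability at least $1-e^{-t}$.

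First I would bound $v$. For every contributing entry, $p_{ij}\ge (n/3)\,\|A_{i\cdot}\|^2/(d_2\|A\|_{\rm F}^2)$, so the $i$-th row sum obeys
\begin{equation*}
\sum_j{(n_r)_{ij}^2\over p_{ij}}\le {3d_2\|A\|_{\rm F}^2\over n\|A_{i\cdot}\|^2}\,\|(N_r)_{i\cdot}\|^2.
\end{equation*}
The key structural fact is that $(A_r)_{i\cdot}$ is the orthogonal projection of $A_{i\cdot}$ onto the top-$r$ right singular subspace, so $\|A_{i\cdot}\|^2=\|(A_r)_{i\cdot}\|^2+\|(N_r)_{i\cdot}\|^2$ and hence $\|(N_r)_{i\cdot}\|^2/\|A_{i\cdot}\|^2\le \nu_r(A)^2$ by definition of $\nu_r(A)$. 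Running the same argument over columns with the $\|A_{\cdot j}\|^2/(d_1\|A\|_{\rm F}^2)$ term gives $v\le 3d\|A\|_{\rm F}^2\nu_r(A)^2/n$. For $L$ I would use all three terms of $p_{ij}$: writing $|(n_r)_{ij}|\le \nu_r(A)_\infty\big(|a_{ij}|+(\|A_{i\cdot}\|^2+\|A_{\cdot j}\|^2)/\|A\|_{\rm F}\big)$ and matching each of the three numerator terms against the corresponding term of $p_{ij}$, the mediant inequality $(\sum_k a_k)/(\sum_k b_k)\le \max_k a_k/b_k$ together with $\|A\|_{\ell_1}\le d\|A\|_{\rm F}$ gives $L\le 3\nu_r(A)_\infty d\|A\|_{\rm F}/n$. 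Substituting these estimates of $\sqrt{v}$ and $L$ into the concentration bound and taking $t\asymp\alpha\log d$ (absorbing the $(d_1+d_2)$ union factor) produces \eqref{remainder:bound1}.

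It then remains to obtain \eqref{remainder:bound2} from \eqref{remainder:bound1}, which only requires bounding $\nu_r(A)_\infty$. Since $|(n_r)_{ij}|\le |a_{ij}|+|(a_r)_{ij}|$, the $|a_{ij}|$ part is absorbed by the denominator in $\nu_r(A)_\infty$, so I need only control $|(a_r)_{ij}|\,\|A\|_{\rm F}/(\|A_{i\cdot}\|^2+\|A_{\cdot j}\|^2)$. Writing $(a_r)_{ij}=\sum_{k\le r}\sigma_k(A)u_{ik}v_{jk}$ and applying Cauchy--Schwarz with $\sigma_k(A)\le \sigma_k(A)^2/\sigma_r(A)$ for $k\le r$ gives $|(a_r)_{ij}|\le \|(A_r)_{i\cdot}\|\,\|(A_r)_{\cdot j}\|/\sigma_r(A)$; then bounding $\|(A_r)_{i\cdot}\|\le \mu_r(A)\|A_{i\cdot}\|$ and $\|(A_r)_{\cdot j}\|\le \mu_r(A)\|A_{\cdot j}\|$ and using $\|A_{i\cdot}\|\,\|A_{\cdot j}\|\le(\|A_{i\cdot}\|^2+\|A_{\cdot j}\|^2)/2$ yields $\nu_r(A)_\infty\le 1+\mu_r(A)^2\|A\|_{\rm F}/(2\sigma_r(A))$. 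Plugging this into \eqref{remainder:bound1} and replacing the sum by twice the maximum gives \eqref{remainder:bound2}.

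The main obstacle is placing the variance term $\sqrt{v}$ in the leading position \emph{without} an extraneous $\sqrt{\log d}$ factor: a vanilla matrix Bernstein inequality would yield $\sqrt{v\log d}$, which is too lossy to reproduce the first term of \eqref{remainder:bound1}. This forces the use of the sharper independent-entry spectral-norm bound underlying \eqref{specerr}, in which the logarithmic factor attaches only to the uniform term $L$. A secondary technical point is the bookkeeping around the capped entries $p_{ij}=1$ and the verification that, once the minimum in \eqref{eq:defp} is removed, each of its three terms may be used separately as a lower bound on $p_{ij}$ when controlling the variance and the uniform norm respectively.
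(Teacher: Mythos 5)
Your proposal is correct and follows essentially the same route as the paper: the paper's Lemma \ref{le:concentration} is exactly the independent-entry spectral bound you describe (proved via Theorem 4.9 of \citealp{latala2018dimension}, so the logarithm attaches only to the uniform term), the variance is controlled through the orthogonal decomposition $\|A_{i\cdot}\|^2=\|(A_r)_{i\cdot}\|^2+\|(N_r)_{i\cdot}\|^2$ giving the $\nu_r(A)$ factor, the uniform term through the definition of $\nu_r(A)_\infty$ and the term-by-term comparison with \eqref{eq:defp}, and the second bound through the same Cauchy--Schwarz estimate $|(a_r)_{ij}|\le \|(A_r)_{i\cdot}\|\|(A_r)_{\cdot j}\|/\sigma_r(A)$. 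Your identification of the need for a Latala-type bound rather than vanilla matrix Bernstein matches the paper's actual argument.
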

It is worth pointing out that the sampling probability, and consequently $\Omega$ is determined by $A$ instead of $N_r$. As a result, $\calP_\Omega(N_r)-N_r$ is not determined by $N_r$ alone and the error bound above depends also on properties of $A$. Compared with \eqref{specerr}, for sufficiently large $n$, $\nu_r(A)$ describes how much tighter control we can have for $\left\|\calP_\Omega(N_r)-N_r\right\|$ than $\left\|\calP_\Omega(A)-A\right\|$. By definition, $ \nu_r(A) \leq 1$, this means $\calP_\Omega(N_r)$ concentrates around its mean more tightly than $\calP_\Omega(A)$. We are now in the position to present our main theoretical guarantee which shows that the quality of recovery indeed rests upon the perturbation of $\calP_\Omega(N_r)$, not $\calP_\Omega(A)$. 

To this end, write $X^* = U_r\Sigma_r^{1/2}, Y^* = V_r\Sigma^{1/2}_r$ and denote by 
$$F_t = \begin{bmatrix}
X_t \\Y_t
\end{bmatrix},\quad F^* = \begin{bmatrix}
X^* \\Y^*
\end{bmatrix},$$ 
where $A_r= U_r\Sigma_r V_r^\top$ is its SVD. Due to rotation symmetry, the difference between $F_t$ ad $F^\ast$ can be measured by
$$\delta_t = \min_{RR^\top = R^\top R = I} \left\|F_t - F^*R \right\|_\textup{F}.$$
We have

\begin{theorem}
\label{th:main}
Assume $\sigma_r(A)>\sigma_{r+1}(A)$ and each entry of $\Omega$ is independently sampled with probability given by \eqref{eq:defp}, denote by $(X_t,Y_t)$ the iteration sequence from Algorithm \ref{alg1} with $\eta \leq (\sigma_r(A)-\sigma_{r+1}(A))\beta^4(1-\nu_r(A)^2)^2/(500\|A_r\|_{\rm F}^4)$. Then there exist numerical constants $C_0, C_1>0$ such that
$$\delta_{t}   \leq \left[ 1-\frac{1}{5}\eta\left(\sigma_r(A)-\sigma_{r+1}(A)\right) \right]^{t/2}\cdot \delta_{0} +
 C_0 \sqrt{r}\left\|\calP_\Omega(N_r)-N_r\right\| \cdot \frac{\sqrt{\sigma_r(A)}}{\sigma_r(A)-\sigma_{r+1}(A)},$$
with probability at least $1-d^{-\alpha}$ provided that $\beta \leq \sqrt{\sigma_r(A)/(1-\nu_r(A)^2)}$ and
\begin{eqnarray*}
n&\ge& C_1(1+\alpha)d\log d \cdot {\|A\|_{\rm F}^2\over(\sigma_r(A)-\sigma_{r+1}(A))^4}\\
&&\cdot \max\left\{ \nu_r(A)^2r\sigma_1(A)\sigma_r(A),{r\sigma_1(A)^2\sigma_{r+1}(A)\over\sigma_r(A)},{\sigma_r^2(A)\over \beta^4(1-\nu_r(A))^2} \mu_r^2(A) \|A_r\|_{\rm F}^2\right\},
\end{eqnarray*}
where $d=\max\{d_1,d_2\}$. Moreover,
\begin{eqnarray*}
\|X_tY_t^\top-A_r\|_{\rm F}\le \sqrt{3\sigma_1(A)}\left[ 1-\frac{1}{5}\eta\left(\sigma_r(A)-\sigma_{r+1}(A)\right) \right]^{t/2}\cdot \delta_{0}\\
+C_0 \sqrt{r}\left\|\calP_\Omega(N_r)-N_r\right\|\cdot \frac{\sigma_r(A)}{\sigma_r(A)-\sigma_{r+1}(A)}.
\end{eqnarray*}
\end{theorem}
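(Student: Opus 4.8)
The plan is to prove a local linear-convergence-plus-noise-floor guarantee, following the now-standard template for non-convex low-rank factorization but adapted to the fact that the target $A_r$ is only the leading part of $A=A_r+N_r$. Writing $F=(X;Y)$ and $\bar f(X,Y)=\frac12\|A-XY^\top\|_{\rm F}^2+\frac18\|X^\top X-Y^\top Y\|_{\rm F}^2$ for the ``population'' objective, the starting observation is that $\EE[\calP_\Omega(M)]=M$, so the empirical gradient splits as $\nabla f=\nabla\bar f+(\nabla f-\nabla\bar f)$, where the perturbation enters through $\calP_\Omega(A-XY^\top)-(A-XY^\top)$. Decomposing $A-XY^\top=(A_r-XY^\top)+N_r$ further separates this perturbation into a low-rank part $\calP_\Omega(A_r-XY^\top)-(A_r-XY^\top)$, whose spectral size is controlled by \eqref{specerr}-type bounds and scales multiplicatively with $\|A_r-XY^\top\|=O(\sqrt{\sigma_1(A)}\,\delta_t)$, and an irreducible part $\calP_\Omega(N_r)-N_r$ controlled by Lemma \ref{le:remainder}. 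The first drives the sample-complexity requirement, since it must be made $o(\delta_t)$ uniformly along the trajectory, while the second produces the additive noise floor.

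Next I would establish a regularity (restricted strong convexity and smoothness) condition for $\bar f$ in a neighborhood of $F^*$. Because $A_r$ minimizes $\|A-B\|_{\rm F}^2$ over rank-$r$ matrices $B$, one checks directly that $F^*$ is a stationary point, $\nabla\bar f(F^*)=0$, using $N_rV_r=0$ and $N_r^\top U_r=0$, and the balancing term $\frac18\|X^\top X-Y^\top Y\|_{\rm F}^2$ keeps the factors balanced so that the factor-space geometry is well conditioned. The key point specific to this problem is that the effective curvature is governed by the eigengap $\sigma_r(A)-\sigma_{r+1}(A)$ rather than $\sigma_r(A)$: perturbations that rotate the recovered subspace toward the $(r+1)$-th singular direction are penalized only at rate $\sigma_r(A)-\sigma_{r+1}(A)$, because $N_r$ partially offsets the fit loss. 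This should yield a bound of the form $\langle\nabla\bar f(F),F-F^*R\rangle\ge c\,(\sigma_r(A)-\sigma_{r+1}(A))\,\delta^2+c'\|\nabla\bar f(F)\|_{\rm F}^2$ with $R$ the optimal alignment. Expanding one step, $\|F_{t+1}-F^*R\|_{\rm F}^2\le\|F_t-F^*R\|_{\rm F}^2-2\eta\langle\nabla f,F_t-F^*R\rangle+\eta^2\|\nabla f\|_{\rm F}^2$, substituting the regularity bound, and splitting the $N_r$ cross term by AM--GM then gives the per-step contraction factor $1-\frac15\eta(\sigma_r(A)-\sigma_{r+1}(A))$ plus a squared-noise term. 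Since $F^*R\in\calF_\beta$ and $\calF_\beta$ is convex and invariant under $F\mapsto FR$, the projection step is non-expansive toward $F^*R$ and only helps; minimizing over $R$ then controls $\delta_{t+1}$.

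I would then verify the initialization and assemble the recursion. Applying Lemma \ref{le:lowrankappr} with $E=\calP_\Omega(A)-A$ together with \eqref{specerr} bounds $\|[\calP_\Omega(A)]_r-A_r\|_{\rm F}$, and a deterministic factor-recovery lemma converts this into a bound on $\delta_0$; the stated sample-complexity condition is exactly what forces $\delta_0$ into the basin of attraction and, simultaneously, renders the multiplicative low-rank perturbation negligible relative to the contraction. Iterating the one-step bound and summing the geometric series converts the per-step $N_r$-noise into the additive term $C_0\sqrt{r}\,\|\calP_\Omega(N_r)-N_r\|\,\sqrt{\sigma_r(A)}/(\sigma_r(A)-\sigma_{r+1}(A))$. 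Finally, the bound on $\|X_tY_t^\top-A_r\|_{\rm F}$ follows from the deterministic conversion from factor-space to product-space distance, $\|XY^\top-X^*Y^{*\top}\|_{\rm F}\lesssim\sqrt{\sigma_1(A)}\,\delta_t+\delta_t^2$, which accounts for the $\sqrt{3\sigma_1(A)}$ prefactor on the transient term.

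The hard part will be twofold. First, proving the population regularity condition with the correct eigengap dependence, that is, tracking how the tail $N_r$ lowers the curvature from $\sigma_r(A)$ to $\sigma_r(A)-\sigma_{r+1}(A)$ while still ruling out spurious descent directions that would let the iterates drift toward the $(r+1)$-th singular component. Second, the uniform control, over the entire trajectory within $\calF_\beta$, of the multiplicative low-rank sampling perturbation $\calP_\Omega(A_r-XY^\top)-(A_r-XY^\top)$; this is where the intricate sample-complexity factors involving $\mu_r(A)$, $\nu_r(A)$, and $\beta$ enter, and where the tailored sampling law \eqref{eq:defp}, which removes the incoherence requirement, is essential.
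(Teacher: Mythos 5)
Your proposal is correct and follows essentially the same route as the paper: a local descent (regularity) condition with curvature $\sigma_r(A)-\sigma_{r+1}(A)$ arising from the cross term $\langle N_r,\Delta_XY^\top+X\Delta_Y^\top\rangle\le\sigma_{r+1}(A)\|\Delta_F\|_{\rm F}^2$, a gradient-smoothness bound, concentration inequalities tailored to the non-uniform sampling \eqref{eq:defp} for the multiplicative low-rank perturbation, non-expansiveness of $\calP_{\calF_\beta}$, spectral initialization via Lemma \ref{le:lowrankappr}, and the factor-to-product conversion $\|X_tY_t^\top-A_r\|_{\rm F}^2\le 3\sigma_1(A)\delta_t^2$. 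The only cosmetic differences are that the paper bounds the empirical inner product directly (rather than splitting off a population gradient) and closes the induction with a $\delta_{t+1}\le\max\{\cdot,\cdot\}$ case analysis rather than summing a geometric series.
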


A few remarks follow immediately. A benchmark case is when $A$ is of low rank. More specifically if $\rank(A) \le r$, then $A_r = A$ and $\nu_r(A) =\nu_r(A)_\infty =0$. From Theorem \ref{th:main} and \eqref{remainder:bound1}, 
$$\delta_{t}   \leq \left[ 1-\frac{1}{5}\eta\sigma_r(A) \right]^{t/2}\cdot \delta_{0},$$
which implies that $F_t\to F^\ast$ and consequently $X_t{Y_t}^\top \to  A$ as $t\to \infty$. It is worth pointing out that if $\rank(A) < r$, the $\sigma_r(A)$ in the above inequality can be replaced by $\sigma_{\rank(A)}(A)$. In other words, the exact recovery of $A$ can be achieved without either the incoherence condition or knowing the precise value of $\rank(A)$ apriori. Furthermore, if the incoherence condition indeed holds, the sample size requirement from Theorem \ref{th:main} is comparable to those typical in the matrix completion literature.

In general, Theorem \ref{th:main} and \eqref{remainder:bound2} suggests that whenever $\sigma_r(A)>\sigma_{r+1}(A)$,
\begin{eqnarray}
\label{newerr}
\|X_tY_t^\top-A_r\|_{\rm F}&\le& C_0\|A\|_{\rm F}\sqrt{rd\over n}\cdot \frac{\sigma_r(A)}{\sigma_r(A)-\sigma_{r+1}(A)} \nonumber\\
&& \cdot\max \left\{ \nu_r(A),\left(1+{\mu_r(A)^2\|A\|_{\rm F}\over\sigma_r(A)}\right)\sqrt{d\log d\over n}\right\},
\end{eqnarray}
for large enough $t$. This is to be compared with the bound given by Theorem \ref{th:old}:
$$
\|[\calP_\Omega(A)]_r-A_r\|_{\rm F}\le C_0\|A\|_{\rm F}\sqrt{rd\over n}\left(1+{\sqrt{\sigma_{r+1}(A)\sigma_1(A)}\over \sigma_r(A)-\sigma_{r+1}(A)}\right).
$$
It is clear that the bound in \eqref{newerr} is smaller because
$$\nu_r(A)\le1,\quad \mu_r(A)\le1,\quad n \gtrsim d\log d\|A\|_{\rm F}^2/\sigma_r(A)^2,$$ 
and
\begin{eqnarray*}
1+{\sqrt{\sigma_{r+1}(A)\sigma_1(A)}\over \sigma_r(A)-\sigma_{r+1}(A)}&=&\frac{\sigma_r(A)}{\sigma_r(A)-\sigma_{r+1}(A)}+{\sqrt{\sigma_{r+1}(A)\sigma_1(A)}-\sigma_{r+1}(A)\over \sigma_r(A)-\sigma_{r+1}(A)}\\
&>&\frac{\sigma_r(A)}{\sigma_r(A)-\sigma_{r+1}(A)}.
\end{eqnarray*}
To gain further insights of the difference between the two estimates of $A_r$, we shall now take a closer look at these factors.

For brevity, we shall assume $d_1=d_2=d$ for the discussion. Recall that $A=UDV^\top$. It is not hard to see that
$$\nu_r(A)^2 = \max_{1\le j\le d}  \left\{ \frac{\sum_{i=r+1}^d\sigma_i^2 u_{ij}^2}{\sum_{i=1}^d\sigma_i^2 u_{ij}^2},  \frac{\sum_{i=r+1}^d\sigma_i^2 v_{ij}^2}{ \sum_{i=1}^d\sigma_i^2 v_{ij}^2} \right\}.$$
If the entries of $U$ and $V$ are of similar order, then 
$$\nu_r(A)^2\approx \sum_{i=r+1}^d\sigma_i^2/\sum_{i=1}^d\sigma_i^2=\|A-A_r\|_{\rm F}^2/\|A\|^2_{\rm F}.$$
Similarly, $\mu_r(A) \approx d\|A-A_r\|_\infty/\|A\|_{\rm F}$ so that, from \eqref{newerr}, there is more improvement over the na\"ive estimate $[\calP_\Omega(A)]_r$ when $A$ is close to being of rank $r$.

On the other hand, it is instructive to consider the case when $\sigma_i(A)= i^{-\alpha}$. In this case,
$$
\frac{\sigma_r(A)}{\sigma_r(A)-\sigma_{r+1}(A)}\sim r,
$$
and on the other hand,
$$
1+{\sqrt{\sigma_{r+1}(A)\sigma_1(A)}\over \sigma_r(A)-\sigma_{r+1}(A)}\sim r^{\alpha/2+1}.
$$
This implies that much better bound can be achieved by our approach as $r$ and $\alpha$ increase.

\section{Numerical Experiments}

To complement the theoretical developments from the previous sections and further demonstrate the merits of our approach, we also conducted several sets of numerical experiments. In particular, we focused on several key operating characteristics of our method including the ability to recover low-rank matrices exactly; the impact of the target rank $r$; and the role of the eigengap $\sigma_r(A)-\sigma_{r+1}(A)$. We consider estimating $A_r$ by the na\"ive estimate $[\calP_\Omega(A)]_r$ with $\Omega$ generated according to \eqref{p:sparse}, and by $\hat{A}_r:=X_TY^\top_T$ as the output from Algorithm 1 with $\Omega$ generated according to \eqref{eq:defp}. For a fair comparison, we adjusted $n$ of the sampling probability \eqref{p:sparse} (for na\"ive method) and \eqref{eq:defp} (for the improved estimate) to ensure that both methods sample the same number of entries on average. In our implementation of the projected gradient descent,  $\beta$ was set to be $\sqrt{\sigma_r(\calP_\Omega(A))/2}$. To accelerate the optimization, we used line search (the line\_search function of scipy.optimize in Python) to determine the optimal stepsize. The stopping criterion was set to be $\|\nabla f(X,Y)\|_\infty \leq 10^{-6}$. In what follows, we shall refer to the ``relative error'' as $\|\hat{A}_r-A_r\|_\textup{F}/\|A_r\|_\textup{F}$ or $\|[\calP_\Omega(A)]_r-A_r\|_\textup{F}/\|A_r\|_\textup{F}$, and the ``spectral error'' as $\|\hat{A}_r-A_r\|/\|A_r\|$ or $\|[\calP_\Omega(A)]_r-A_r\|/\|A_r\|$. All the results presented are based on 100 independent simulation runs.

In our first set of experiments, we set $d_1 = d_2 =1000$ and generated $A$ as a low-rank matrix $A^*$ with additive noise $\sigma E$. More specifically, 
$$A^* = U^*\Sigma^* {U^*}^\top/\|U^*\|_\textup{F}^2,\qquad \Sigma^* = \text{diag}\{1,0.9,0.8,0.7,0.6\},$$ 
where $U^\ast\in \RR^{1000\times 5}$ is a Gaussian ensemble with entries independently sampled from the standard normal distribution, and $E$'s entries are independently from $N(0,1/4000)$. The variance of $E_{ij}$ is chosen such that $\|E\| \approx 1$.

We began with the case where $A$ is of low rank. More specifically, we set $A = A^*$ and $r=5$. In Figure \ref{fig:low rank}, we plotted the relative error of $[\calP_\Omega(A)]_r$ and $\hat{A}_r$ with the max number of iterations set to be $T=5$ or $10$. It is clear that the relative error of both estimates decreases quickly with the increased sample size, as predicted by our theoretical results. But the relative error of $\hat{A}_r$ decreases quickly to 0 while the error of the na\"ive estimate levels off around 0.15 even with as many as 50\% of the entries observed. This highlights the ability of exact recovery for the improved approach as discussed in the previous section. Empirically, the projected gradient descent algorithm converges fairly quickly: there is little difference between setting the max number of iterations to 5 and 10, and 10 seems to suffice as a rule of thumb. In the rest of the experiments, we shall fix the max number of iterations at ten for consideration of computational speed. 

\begin{figure}
\centering
\includegraphics[width=0.6\textwidth]{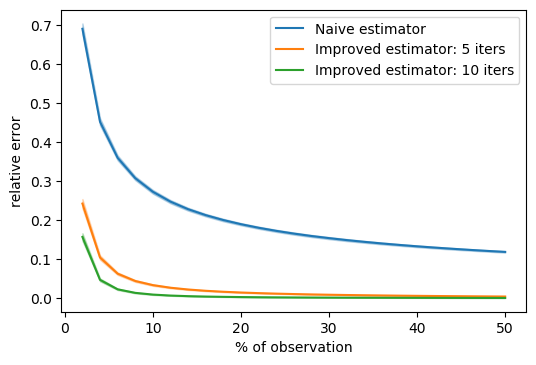}
\caption{``Relative error'' of different methods under low-rank scenario, averaged over 100 simulation runs. }
\label{fig:low rank}
\end{figure}

Next, we consider the case where $A$ is full rank yet we are primarily interested in its best rank approximations. To this end, we set $A = A^* + 0.05E$. We adjusted $n$ of the sampling probability so that on average 10\% of the entries were sampled. Figure \ref{fig: target rank} reports the mean and two-standard-deviance bands of both relative spectral and Frobenius errors from 100 simulation runs. A few interesting observations can be made. In particular, it is evident that $\hat{A}_r$ is superior to the na\"ive estimator, in either error metric. Moreover, the na\"ive estimate is much more vulnerable to overshooting the ``effective'' rank of $A$. Note that $A$ is ``close'' to being of rank $5$. When consider estimating $A_r$ for $r>5$, the performance of $\hat{A}_r$ only deteriorates mildly with an increasing $r$ yet on the other hand, for the na\"ive estimate the impact is much more significant.

\captionsetup[subfigure]{labelfont=rm}
\begin{figure}
\centering
\begin{subfigure}{0.45\textwidth}
\includegraphics[width=\textwidth]{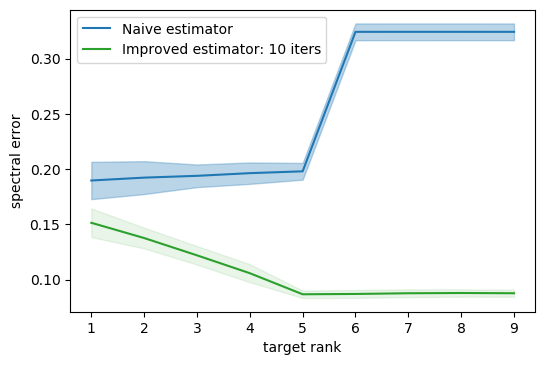}
\end{subfigure}
\begin{subfigure}{0.45\textwidth}
\includegraphics[width=\textwidth]{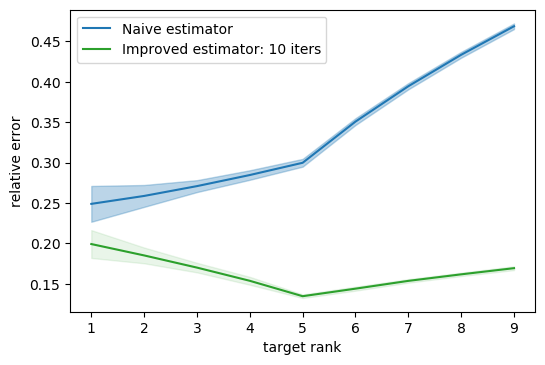}
\end{subfigure}
\caption{Effect of the targeted rank $r$ on different methods: averaged errors and $\pm$ two-standard-error bands were based on 100 simulation runs.}
\label{fig: target rank}
\end{figure}

To further investigate the impact of the eigengap $\sigma_r(A)-\sigma_{r+1}(A)$, we considered $A = A^*+ \sigma E$ with $\sigma$ varying from 0.02 to 0.4. Here $\sigma$ serves as a proxy of the relative eigengap $1-\sigma_6(A)/\sigma_5(A)$ as $\|E\| \approx 1$. The results, again based upon 100 simulation runs, were summarized in Figure \ref{fig:eigengap}. It is interesting to note that the error increases as the relative eigengap decreases for both methods, but the impact on $\hat{A}_r$ is minimal when compared with the na\"ive method, especially with increased sampling proportions.
\captionsetup[subfigure]{labelfont=rm}
\begin{figure}
\centering
\begin{subfigure}{0.45\textwidth}
\includegraphics[width=\textwidth]{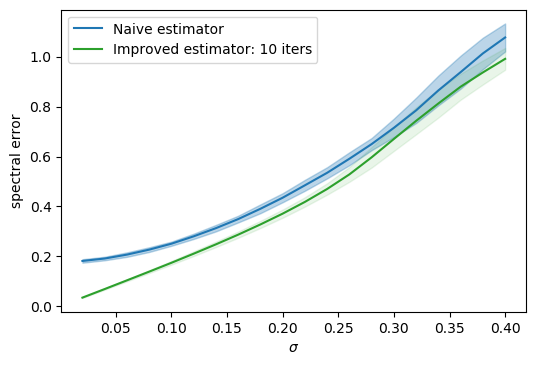}
\caption{10\% observations}
\end{subfigure}
\begin{subfigure}{0.45\textwidth}
\includegraphics[width=\textwidth]{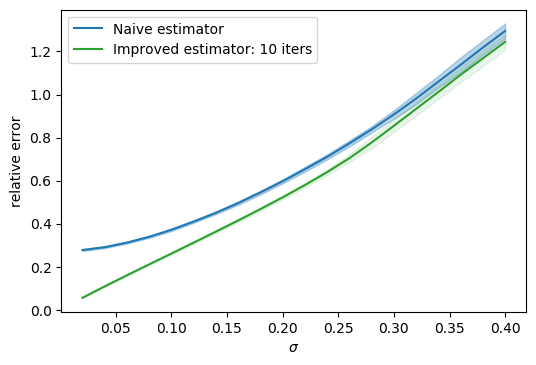}
\caption{10\% observations}   
\end{subfigure}
\vfill
\begin{subfigure}{0.45\textwidth}
\includegraphics[width=\textwidth]{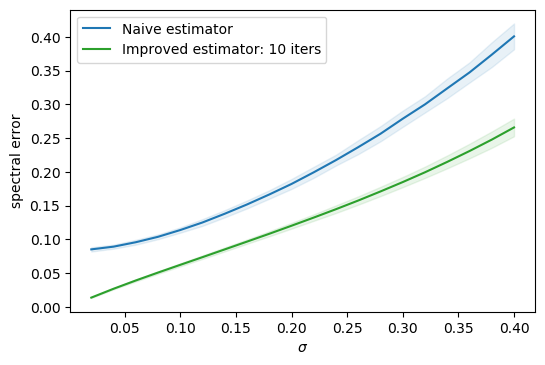}
\caption{40\% observations}
\end{subfigure}
\begin{subfigure}{0.45\textwidth}
\includegraphics[width=\textwidth]{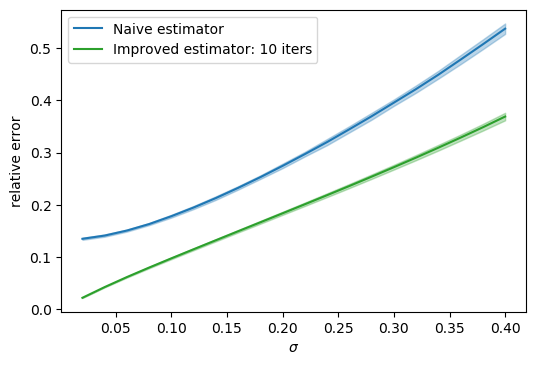}
\caption{40\% observations}   
\end{subfigure}
\caption{Effect of eigengap and sampling proportion on different methods: averaged errors and $\pm$ two-standard-error bands were based on 100 simulation runs.}
\label{fig:eigengap}
\end{figure}

In our final example, we computed both estimators to the waterfall image \footnote{The original color version of image can be downloaded at \url{https://media.cntraveler.com/photos/571945e380cf3e034f974b7d/master/pass/waterfalls-Seljalandsfoss-GettyImages-457381095.jpg}.}. The original image was converted into greyscale. 
The leading singular value of the $1536 \times2056$ matrix consisting of the pixel intensities accounts for 57.5\% of the total variation. The best rank-10, rank-20, and rank-30 approximations explain 88.1\%, 91.6\%, and 93.0\% of the variation respectively. We considered estimating the best rank-$r$ approximation to the image with $r=5, 10$, $20$, and $30$. For each $r$, we set the max number of iteration to be 10 and sampled 10\% pixels. Table \ref{table:waterfall} reports the mean and standard deviation of the ``relative error'' from 100 simulation runs. They again confirm that $\hat{A}_r$ is a far more accurate estimate of $A_r$.

\begin{table}
\centering
\begin{tabular}{|c|c|c|c|c|}
\hline
 Target rank $r$& Na\"ive Estimate &Improved Estimate  \\
\hline
5 & 0.290 (0.0017)& 0.118 (0.0036) \\
10 & 0.432 (0.0012)& 0.154 (0.0018)\\
20& 0.609 (0.0009)& 0.174 (0.0012)\\
30&0.738 (0.0009)& 0.184 (0.0010)\\
\hline
\end{tabular}
\caption{Comparison between the na\"ive and the improved estimates of the best rank-$r$ approximations of the greyscale waterfall image for different targeted ranks. Reported are averaged ``relative error'' over 100 runs and numbers in parentheses are standard deviations.}
\label{table:waterfall}
\end{table}
To facilitate visual comparison between the improved and na\"ive methods, we now focus on the best rank-$30$ approximation of the image as shown in panel (a) of Figure \ref{fig:waterfall}. We fixed $n$ in the sampling probabilities so that the expected sampling proportion is 40\%. One typical realization of both estimates is given in Panels (b) and (c) of Figure \ref{fig:waterfall}. For this specific realization, $\hat{A}_r$ has a relative error 0.11. This is to be compared with the na\"ive method which has a relative error 0.37, which again is in agreement with our theoretical findings.

\captionsetup[subfigure]{labelfont=rm}
\begin{figure}
\centering
\begin{subfigure}{0.3\textwidth}
\includegraphics[width=\textwidth]{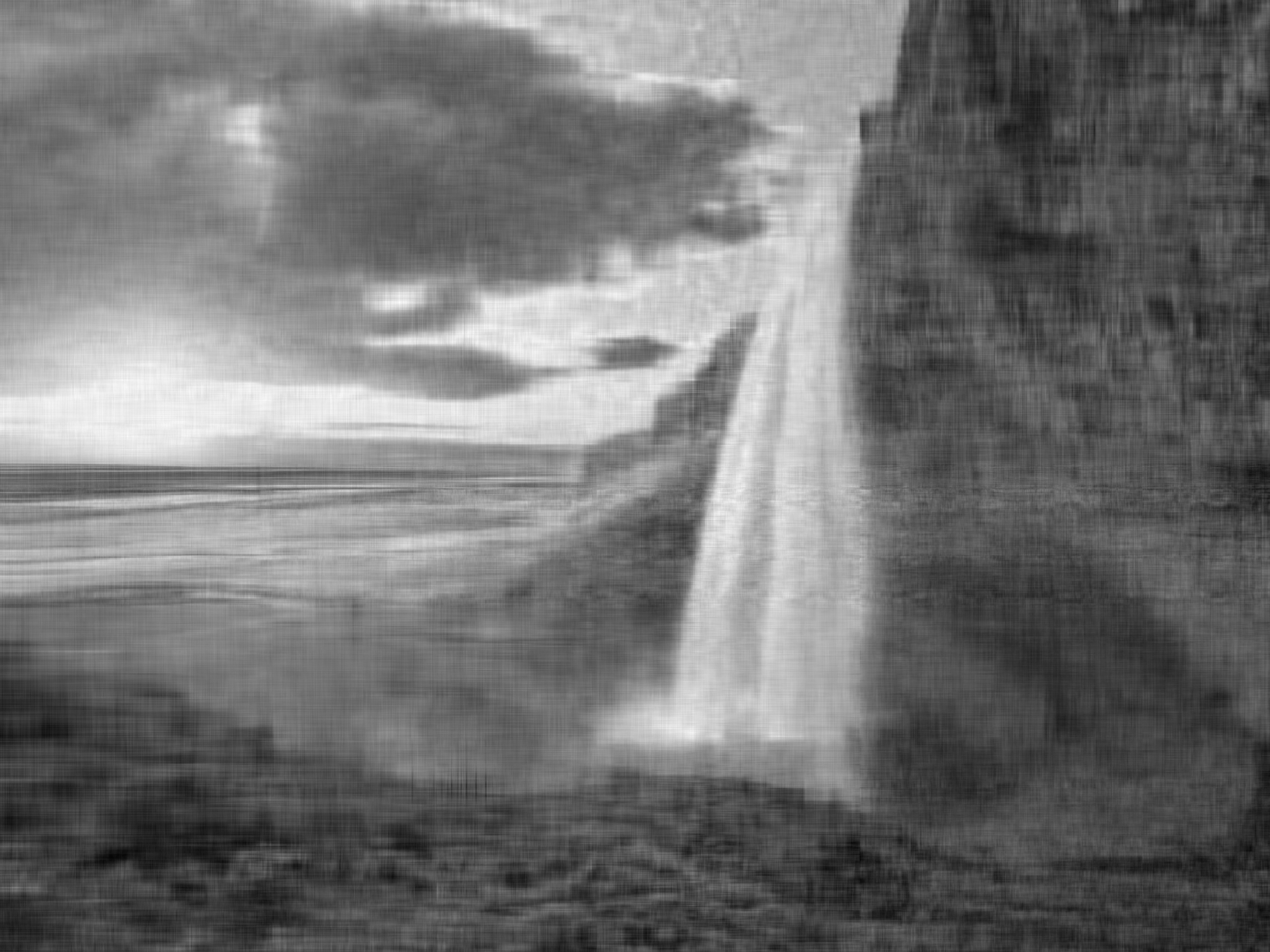}
\caption{Exact approximation}   
\end{subfigure}
\begin{subfigure}{0.3\textwidth}
\includegraphics[width=\textwidth]{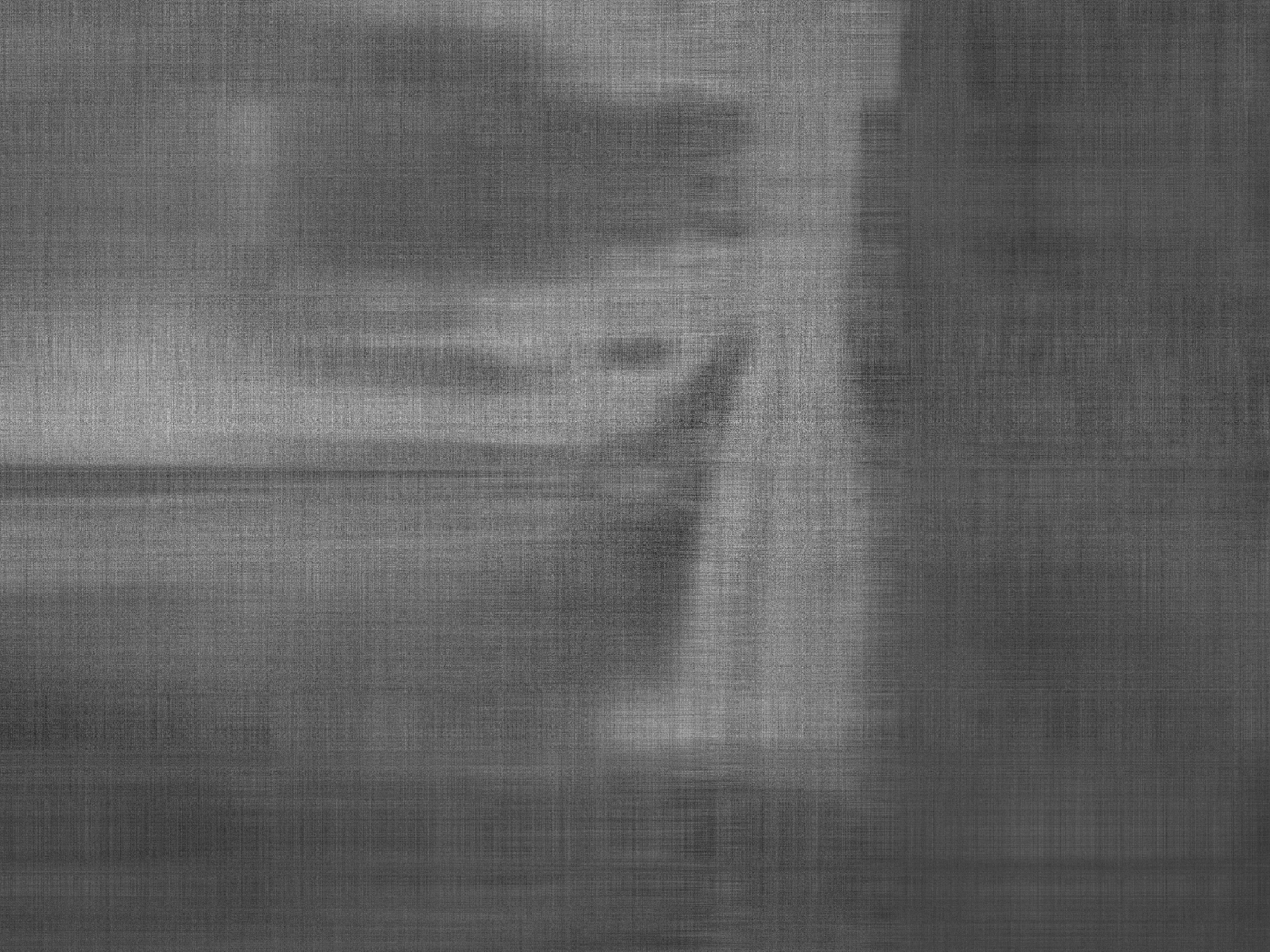}
\caption{Na\"ive Method}
\end{subfigure}
\begin{subfigure}{0.3\textwidth}
\includegraphics[width=\textwidth]{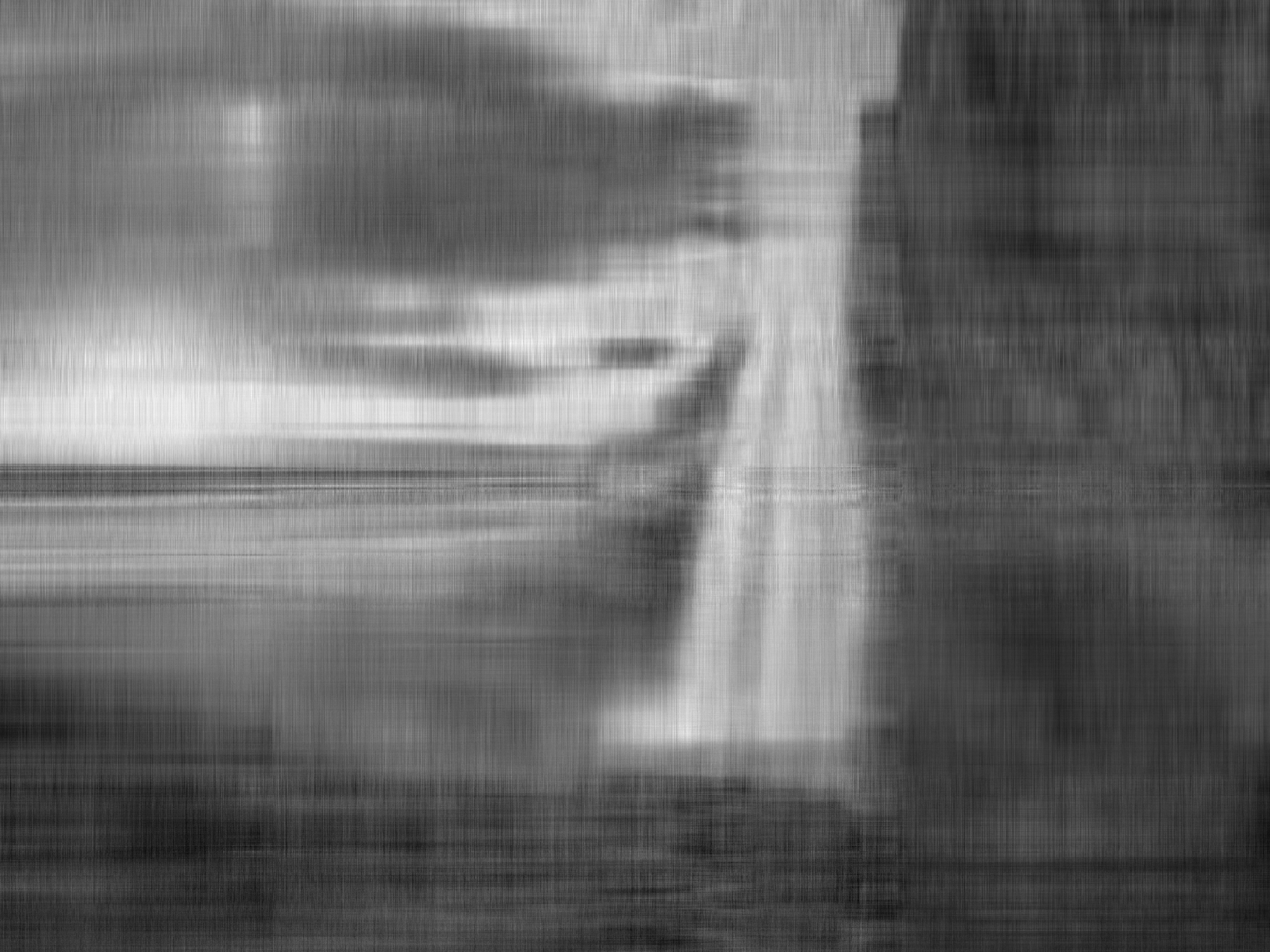}
\caption{Improved Method}   
\end{subfigure}
\caption{The exact best rank-30 approximation of the original greyscale image is given in panel (a). Panels (b) and (c) are the output by the na\"ive method and the improved method.}
\label{fig:waterfall}
\end{figure}

%
%

\section{Proofs}
\begin{proof}
[Proof of Lemma \ref{le:lowrankappr}]

By definition,
$$
\|B_r-B\|_{\rm F}^2\le \inf_{\rank(M)\le r}\|M-B\|_{\rm F}^2.
$$
Observe that
$$
\|M-B\|_{\rm F}^2=\|M-A\|_{\rm F}^2+\|E\|_{\rm F}^2-2\langle E, M-A\rangle.
$$
We get
$$
\|B_r-A\|_{\rm F}^2\le \|M-A\|_{\rm F}^2+2\langle E, B_r-M\rangle\le \|M-A\|_{\rm F}^2+2\|E\|\|B_r-M\|_\ast,
$$
where we replaced $A$ with $B-E$ in the first inequality. Recall that
$$
\rank(B_r-M)\le \rank(B_r)+\rank(M)\le 2r.
$$
By Cauchy-Schwartz inequality,
$$
\|B_r-M\|_\ast\le \sqrt{2r}\|B_r-M\|_{\rm F}.
$$
Therefore,
$$
\|B_r-A\|_{\rm F}^2\le \|M-A\|_{\rm F}^2+2\sqrt{2r}\|E\|\|B_r-M\|_{\rm F}.
$$
In the case when $\rank(A)\le r$, we can take $M=A$ to yield
$$
\|B_r-A\|_{\rm F}^2\le 8r\|E\|^2.
$$
Now consider the case when $\rank(A)>r$. Taking $M=A_r$ yields
$$
\|B_r-A\|_{\rm F}^2\le \|A_r-A\|_{\rm F}^2+2\sqrt{2r}\|E\|\|B_r-A_r\|_{\rm F}.
$$
Observe that
$$
\|B_r-A\|_{\rm F}^2=\|B_r-A_r\|_{\rm F}^2+\|A_r-A\|_{\rm F}^2+2\langle B_r, A_r-A\rangle.
$$
We have
$$
\|B_r-A_r\|_{\rm F}^2\le 2\sqrt{2r}\|E\|\|B_r-A_r\|_{\rm F}+2\langle B_r, A-A_r\rangle.
$$
Let $A_r=U_r\Sigma_r V_r^\top$ and $B_r=\tilde{U}_r\tilde{\Sigma}_r \tilde{V}_r^\top$ be their respective singular value decomposition. Then
\begin{eqnarray*}
\langle B_r, A-A_r\rangle&=&\langle P_{U_r}^\perp B_rP_{V_r}^\perp, A-A_r\rangle\\
&\le& \sigma_{r+1}(A)\|P_{U_r}^\perp B_rP_{V_r}^\perp\|_\ast\\
&\le& r\sigma_{r+1}(A)\|P_{U_r}^\perp P_{\tilde{U}_r} B_rP_{\tilde{V}_r}P_{V_r}^\perp\|\\
&\le& r\sigma_{r+1}(A)\sigma_1(B)\|P_{U_r}^\perp P_{\tilde{U}_r}\|\|P_{\tilde{V}_r}P_{V_r}^\perp\|.
\end{eqnarray*}
By Davis-Kahan-Wedin's Theorems,
$$
\max\{\|P_{U_r}^\perp P_{\tilde{U}_r}\|,\|P_{\tilde{V}_r}P_{V_r}^\perp\|\}\le {4\|E\|\over \sigma_r(A)-\sigma_{r+1}(A)},
$$
$$|\sigma_1(A)-\sigma_1(B)| \le \|E\| < \sigma_1(A), $$
so that
$$
\langle B_r, A-A_r\rangle\le {16r\sigma_{r+1}(A)\sigma_1(B)\|E\|^2\over [\sigma_r(A)-\sigma_{r+1}(A)]^2}\le {32r\sigma_{r+1}(A)\sigma_1(A)\|E\|^2\over [\sigma_r(A)-\sigma_{r+1}(A)]^2}.
$$
This implies that
$$
\|B_r-A_r\|_{\rm F}\le 8\sqrt{r}\|E\| \left({1\over \sqrt{2}}+{\sqrt{\sigma_{r+1}(A)\sigma_1(A)}\over \sigma_r(A)-\sigma_{r+1}(A)}\right).
$$
\end{proof}

\begin{proof}[Proof of Lemma \ref{le:remainder}] 
The proof relies on the following concentration bound.

\begin{lemma}
\label{le:concentration}
Let $B\in \RR^{d_1\times d_2}$ be a fixed matrix and each entry of $\Omega$ is independently sampled from binomial trails with probability given by \eqref{eq:defp}. Then there exists a numerical constant $C_0>0$ such that
$$
\left\|\calP_\Omega(B)-B\right\|\le 4\|A\|_{\rm F}\sqrt{d\over n}\max_{\substack{1\le i\le d_1\\1\le j \le d_2}}\left\{{\| B_{i\cdot}\|\over \|A_{i\cdot}\|},{\|B_{\cdot j}\|\over \|A_{\cdot j}\|}\right\}+ C_0\max_{\substack{i,j:\\ 0<p_{ij}<1}}\left\{ |b_{ij}|\over p_{ij}\right\}\sqrt{\log (d)+t},
$$
with probability at least $1-e^{-t}$, where $d=\max\{d_1,d_2\}$.
\end{lemma}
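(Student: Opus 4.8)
The plan is to write $\calP_\Omega(B)-B$ as a sum of independent, mean-zero random matrices and feed it into a matrix Bernstein-type concentration inequality, namely the one underlying \eqref{specerr} from \cite{o2018random}, whose leading term is governed by the row and column variances (free of the dimensional logarithm) and whose tail term is governed by a uniform bound on the summands. Concretely, set
$$
Z_{ij}=b_{ij}\left(\frac{\omega_{ij}}{p_{ij}}-1\right)e_ie_j^\top,\qquad \calP_\Omega(B)-B=\sum_{i,j}Z_{ij},
$$
where $e_i,e_j$ are standard basis vectors. Since the $\omega_{ij}$ are independent $\mathrm{Bernoulli}(p_{ij})$, each $Z_{ij}$ has mean zero and the $Z_{ij}$ are mutually independent. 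Terms with $p_{ij}=1$ vanish identically, and under the convention $0/0=0$ a degenerate entry $p_{ij}=0$ (which forces both $\|A_{i\cdot}\|=0$ and $\|A_{\cdot j}\|=0$) contributes nothing whenever $b_{ij}=0$; I would note at the outset that the stated bound is vacuous unless every nonzero row and column of $B$ is supported on a nonzero row and column of $A$, so I may assume all the ratios below are finite.

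The two parameters I need are a uniform bound $L$ on $\|Z_{ij}\|$ and the matrix variance proxy $v=\max\{\|\sum_{ij}\EE[Z_{ij}Z_{ij}^\top]\|,\|\sum_{ij}\EE[Z_{ij}^\top Z_{ij}]\|\}$. Because $\omega_{ij}\in\{0,1\}$ and $p_{ij}<1$ on the support, $|\omega_{ij}/p_{ij}-1|\le 1/p_{ij}$, so $\|Z_{ij}\|=|b_{ij}|\,|\omega_{ij}/p_{ij}-1|\le |b_{ij}|/p_{ij}$, and taking the maximum over $0<p_{ij}<1$ reproduces exactly the factor $\max_{i,j}\{|b_{ij}|/p_{ij}\}$ in the second term. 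This is where the third piece $|a_{ij}|/\|A\|_{\ell_1}$ of \eqref{eq:defp} earns its keep in applications, since it lower-bounds $p_{ij}$ and hence shrinks $L$; for the present general statement I simply keep $L$ as is.

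The heart of the argument is the variance computation. Since $Z_{ij}Z_{ij}^\top=b_{ij}^2(\omega_{ij}/p_{ij}-1)^2 e_ie_i^\top$ and $\EE[(\omega_{ij}/p_{ij}-1)^2]=(1-p_{ij})/p_{ij}\le 1/p_{ij}$, the first variance sum is diagonal and
$$
\left\|\sum_{i,j}\EE[Z_{ij}Z_{ij}^\top]\right\|=\max_i\sum_j\frac{b_{ij}^2(1-p_{ij})}{p_{ij}}\le \max_i\sum_j\frac{b_{ij}^2}{p_{ij}}.
$$
Here I would invoke the key lower bound $p_{ij}\ge \tfrac{n}{3}\,\|A_{i\cdot}\|^2/(d_2\|A\|_{\rm F}^2)$, obtained by discarding the other two nonnegative terms of \eqref{eq:defp}; this turns the inner sum into $\tfrac{3d_2\|A\|_{\rm F}^2}{n\|A_{i\cdot}\|^2}\|B_{i\cdot}\|^2$, so the row-variance norm is at most $\tfrac{3d_2\|A\|_{\rm F}^2}{n}\max_i(\|B_{i\cdot}\|/\|A_{i\cdot}\|)^2$. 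The symmetric computation on $Z_{ij}^\top Z_{ij}$, using $p_{ij}\ge \tfrac{n}{3}\,\|A_{\cdot j}\|^2/(d_1\|A\|_{\rm F}^2)$, bounds the column-variance norm by $\tfrac{3d_1\|A\|_{\rm F}^2}{n}\max_j(\|B_{\cdot j}\|/\|A_{\cdot j}\|)^2$. Combining the two and using $d_1,d_2\le d$ gives $v\le \tfrac{3d\|A\|_{\rm F}^2}{n}M_B^2$, where $M_B=\max_{i,j}\{\|B_{i\cdot}\|/\|A_{i\cdot}\|,\|B_{\cdot j}\|/\|A_{\cdot j}\|\}$ is precisely the quantity multiplying $\|A\|_{\rm F}\sqrt{d/n}$ in the first term.

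Feeding $L$ and $v$ into the concentration inequality then yields $\|\calP_\Omega(B)-B\|\lesssim \sqrt{v}+L\sqrt{\log d+t}$ with probability at least $1-e^{-t}$; tracking the numerical constants (the $\tfrac13$ in the sampling rate together with the inequality's leading constant) produces the displayed factor $4$ on the first term and the generic $C_0$ on the second. I expect the main obstacle to be bookkeeping rather than conceptual: confirming that the leading variance term of the cited inequality is genuinely free of the $\log d$ factor (so that it lands in the first, logarithm-free summand rather than the tail), and selecting the correct one of the three terms of \eqref{eq:defp} for the row versus the column bound. The degenerate cases $p_{ij}\in\{0,1\}$ and the restriction to nonzero rows and columns also call for a careful, if routine, treatment.
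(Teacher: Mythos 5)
Your proposal is correct and follows essentially the same route as the paper: the identical decomposition $\calP_\Omega(B)-B=\sum_{i,j}Z_{ij}$ with $Z_{ij}=b_{ij}(\omega_{ij}/p_{ij}-1)e_ie_j^\top$, the same row/column variance bounds via the lower bound $p_{ij}\ge \tfrac{n}{3}\|A_{i\cdot}\|^2/(d_2\|A\|_{\rm F}^2)$ (resp.\ its column analogue), the same uniform bound $|b_{ij}|/p_{ij}$ on the summands, and an application of a dimension-free matrix concentration inequality (the paper invokes Theorem 4.9 of \citet{latala2018dimension}, which is exactly the log-free-leading-term bound you correctly identify as the needed ingredient).
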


In particular, consider applying Lemma \ref{le:concentration} to $B=A-A_r$. It is not hard to see that
$$
{\|B_{i\cdot}\|^2\over \|A_{i\cdot}\|^2}=1-{\|(A_r)_{i\cdot}\|^2\over \|A_{i\cdot}\|^2} \le \nu_r(A)^2,
$$
and
$$
{\|B_{\cdot j}\|^2\over \|A_{\cdot j}\|^2}=1-{\|(A_r)_{\cdot j}\|^2\over \|A_{\cdot j}\|^2} \le \nu_r(A)^2.
$$
By definition of $\nu_r(A)_\infty$, we get
\begin{eqnarray*}
&&\max_{(i,j): 1>  p_{ij}>0}\left\{|b_{ij}|\over p_{ij}\right\}\\
&\le& \nu_r(A)_\infty\cdot\max_{(i,j): 1>  p_{ij}>0}\left\{|a_{ij}| + (\|A_{i \cdot}\|^2+\|A_{\cdot j}\|^2)/\|A\|_{\rm F}\over p_{ij}\right\}\\
&\leq &\nu_r(A)_\infty \cdot \max\left\{{3\|A\|_{\ell_1}\over n},{3d\|A\|_{\rm F}\over n} \right\}\\
&\leq &\nu_r(A)_\infty \cdot {3d\|A\|_{\rm F}\over n} ,
\end{eqnarray*}
where we used the fact that $\|A\|_{\ell_1} \le d\|A\|_{\rm F}$ and thus the first claim follows.

For the second claim, let $A = U\Sigma V^\top$ be its singular value decomposition,
\begin{eqnarray*}
| b_{ij}|&=&\left|\sum_{k>r}\sigma_k(A)u_{ik}v_{jk}\right|\\
&\le& |a_{ij}|+\left|\sum_{k\le r}\sigma_k(A)u_{ik}v_{jk}\right|\\
&\le& |a_{ij}|+\left(\sum_{k\le r}\sigma_k(A)u_{ik}^2\right)^{1/2}\left(\sum_{k\le r}\sigma_k(A)v_{jk}^2\right)^{1/2}\\
&\le& |a_{ij}|+{1\over \sigma_r(A)}\left(\sum_{k\le r}\sigma_k(A)^2u_{ik}^2\right)^{1/2}\left(\sum_{k\le r}\sigma_k(A)^2v_{jk}^2\right)^{1/2}\\
&=& |a_{ij}|+ {1\over \sigma_r(A)}\|(A_r)_{ i\cdot}\|\|(A_r)_{ \cdot j}\|.
\end{eqnarray*}
Hence,
$$
\nu_r(A)_\infty \le  1 + {\|A\|_{\rm F}\over \sigma_r(A)}\max\left\{{\|(A_r)_{i\cdot}\|^2\over \|A_{i\cdot}\|^2},{\|(A_r)_{\cdot j}\|^2\over \|A_{\cdot j}\|^2}\right\}.
$$
This implies that
\begin{eqnarray*}
&&\left\|\calP_\Omega(A-A_r)-(A-A_r)\right\| \\
&\le&C_0\left(1+{\mu_r(A)^2\|A\|_{\rm F}\over \sigma_r(A)}\right){d\|A\|_{\rm F}\sqrt{\log d+t}\over n}\\
 &&+4\nu_r(A)\|A\|_{\rm F}\sqrt{d\over n},
\end{eqnarray*}
with probability at least $1-e^{-t}$.
\end{proof}

\begin{proof}[Proof of Theorem \ref{th:main}]

In the rest of the proof, we shall omit $A$ in $\sigma_r(A)$ for brevity. Let
$$R_t = \argmin_{RR^\top = R^\top R = I} \left\|F_t - F^*R \right\|_\textup{F}.$$
We shall then write $S_t = X^*R_t$, and $T_t = Y^*R_t$. Moreover, denote by
$$\Delta_{X_t} =X_t- S_t,\qquad \Delta_{Y_t} =Y_t- T_t,\qquad \Delta_{F_t} =F_t- F^*R_t.$$
Our analysis relies on the follow two technical lemmas. 

\begin{lemma}\label{b1}
Assume that $\beta \leq \sqrt{\sigma_r/(1-\nu_r(A)^2)}$ and $(X_t,Y_t)\in \calF_\beta$ such that
$$\|\Delta_{F_t}\|_{\rm F} \leq \frac{\sigma_r-\sigma_{r+1}}{20\sqrt{\sigma_1}}.$$
There exists a numerical constant $C_1>0$ such that, with probability at least $1-d^{-\alpha}$, if
\begin{equation}
\label{eq:b1cond}
\max\left\{\|X_tY_t^\top-A_r\|_{\rm F},\sqrt{\sigma_r}\| \Delta_{F_t}\|_{\rm F}\right\}>\frac{31\sqrt{r}}{10(1-\sigma_{r+1}/\sigma_r)}\left\|\calP_\Omega(N_r)-N_r\right\|,
\end{equation}
then
\begin{align*}
\langle \nabla_{X} f(X_t,Y_t),\Delta_{X_t}\rangle + \langle \nabla_Y f(X_t,Y_t),\Delta_{Y_t}\rangle 
\geq {19(\sigma_r-\sigma_{r+1})\over 100}\|\Delta_{F_t}\|_{\rm F}^2;
\end{align*}
and otherwise,
\begin{align*}
\langle \nabla_{X} f(X_t,Y_t),\Delta_{X_t}\rangle + \langle \nabla_Y f(X_t,Y_t),\Delta_{Y_t}\rangle
\geq  \frac{69(\sigma_r-\sigma_{r+1})}{100}\|\Delta_{F_t}\|_{\rm F}^2 - {5r\| \calP_\Omega(N_r)-N_r\|^2\over (1-\sigma_{r+1}/\sigma_r)},
\end{align*}
provide that $n \geq C_1(1+\alpha)d(\log d)\sigma_r^2/\left[\beta^4(1-\nu_r(A)^2)^2\right] \mu_r^2(A)\|A_r\|_{\rm F}^2 \|A\|_{\rm F}^2/(\sigma_r-\sigma_{r+1})^4$.
\end{lemma}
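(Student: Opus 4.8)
The plan is to prove a local restricted-strong-convexity (regularity) inequality for the sampled objective $f$ by comparing its gradient with that of the population objective $\frac12\|A-XY^\top\|_{\rm F}^2+\frac18\|X^\top X-Y^\top Y\|_{\rm F}^2$ and isolating the sampling error. First I would write out the gradients: for the data term $\nabla_X=-\calP_\Omega(A-XY^\top)Y$ and $\nabla_Y=-[\calP_\Omega(A-XY^\top)]^\top X$, and for the balance term $\frac12 X(X^\top X-Y^\top Y)$ and $-\frac12 Y(X^\top X-Y^\top Y)$. Expanding $\langle\nabla_X f,\Delta_{X_t}\rangle+\langle\nabla_Y f,\Delta_{Y_t}\rangle$ and replacing $\calP_\Omega$ by the identity, the inner product becomes a population curvature term plus a deviation $-\langle\calP_\Omega(A-X_tY_t^\top)-(A-X_tY_t^\top),\,\Delta_{X_t}Y_t^\top+X_t\Delta_{Y_t}^\top\rangle$. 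Using $A=A_r+N_r$ and linearity of $\calP_\Omega$, I would split this deviation into a low-rank residual piece built from $\calP_\Omega(A_r-X_tY_t^\top)-(A_r-X_tY_t^\top)$, which has rank at most $2r$, and the noise piece built from $\calP_\Omega(N_r)-N_r$, which is the quantity appearing in the conclusion.

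The core is the population curvature. On the neighborhood $\|\Delta_{F_t}\|_{\rm F}\le(\sigma_r-\sigma_{r+1})/(20\sqrt{\sigma_1})$ I would show the population inner product is at least $\tfrac{69}{100}(\sigma_r-\sigma_{r+1})\|\Delta_{F_t}\|_{\rm F}^2$, and simultaneously controls $\|X_tY_t^\top-A_r\|_{\rm F}^2$ with a dimensionless constant. The crucial structural facts are that $F^*$ is a balanced global minimizer and that $N_rY^*=0$, $N_r^\top X^*=0$ (because $V_r,U_r$ are the leading singular subspaces of $A$, so $N_rV_r=AV_r-A_rV_r=0$); this makes the population gradient vanish at $F^*$ and forces the cross terms involving $N_r$ to enter only through $\|N_r\|=\sigma_{r+1}$, which is precisely what turns the usual $\sigma_r$ curvature into the eigengap $\sigma_r-\sigma_{r+1}$. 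The balance regularizer supplies the positive curvature along the unbalanced and rotational directions. Tracking the sharp constant together with the gap is the main technical content here.

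Next I would control the two deviation pieces. For the low-rank residual, since $A_r-X_tY_t^\top$ has rank $\le 2r$ and $\|A_r-X_tY_t^\top\|_{\rm F}\lesssim\sqrt{\sigma_1}\|\Delta_{F_t}\|_{\rm F}$ on the neighborhood, I would invoke a uniform (restricted-isometry) version of Lemma \ref{le:concentration} over the low-rank set, where the row-norm constraints defining $\calF_\beta$ keep $X_tY_t^\top$ suitably incoherent; this is exactly where the sample-size requirement on $n$ (and the factors $\beta$, $\mu_r(A)$, $\nu_r(A)$) enters, making the residual a small fraction of the population term so that population plus residual stays $\ge\tfrac{69}{100}(\sigma_r-\sigma_{r+1})\|\Delta_{F_t}\|_{\rm F}^2$. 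The noise piece I would bound by trace duality, $|\langle\calP_\Omega(N_r)-N_r,\Delta_{X_t}Y_t^\top+X_t\Delta_{Y_t}^\top\rangle|\le\sqrt{2r}\,\|\calP_\Omega(N_r)-N_r\|\,\|\Delta_{X_t}Y_t^\top+X_t\Delta_{Y_t}^\top\|_{\rm F}$, and then use $\Delta_{X_t}Y_t^\top+X_t\Delta_{Y_t}^\top=(X_tY_t^\top-A_r)+\Delta_{X_t}\Delta_{Y_t}^\top$ with the neighborhood bound $\|\Delta_{F_t}\|_{\rm F}^2\le\tfrac{1}{20}\sqrt{\sigma_r}\|\Delta_{F_t}\|_{\rm F}$ to get a bound $\lesssim\sqrt{r}\,\|\calP_\Omega(N_r)-N_r\|\cdot\max\{\|X_tY_t^\top-A_r\|_{\rm F},\sqrt{\sigma_r}\|\Delta_{F_t}\|_{\rm F}\}$.

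Finally I would split on condition \eqref{eq:b1cond}. When it holds, $\sqrt{r}\|\calP_\Omega(N_r)-N_r\|$ is dominated by $\tfrac{10}{31}(1-\sigma_{r+1}/\sigma_r)$ times that maximum, so the noise deviation is at most a multiple of $(\sigma_r-\sigma_{r+1})\|\Delta_{F_t}\|_{\rm F}^2$ on the $\sqrt{\sigma_r}\|\Delta_{F_t}\|_{\rm F}$ branch (absorbed into the $\|\Delta_{F_t}\|_{\rm F}^2$ curvature, dropping $\tfrac{69}{100}$ to $\tfrac{19}{100}$) and of $\|X_tY_t^\top-A_r\|_{\rm F}^2$ on the other branch (absorbed into the separate residual control); either way the first conclusion follows. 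When it fails, the same maximum is at most the noise level, so I would instead retain the noise deviation as an explicit additive term and bound it by $\tfrac{5r\|\calP_\Omega(N_r)-N_r\|^2}{1-\sigma_{r+1}/\sigma_r}$, keeping the $\tfrac{69}{100}$ coefficient and giving the second conclusion. The main obstacle I anticipate is the population restricted-strong-convexity bound with the sharp eigengap constant, specifically handling the balance-regularizer curvature and the $N_r$ cross terms to convert $\sigma_r$ into $\sigma_r-\sigma_{r+1}$, in tandem with the uniform low-rank concentration that determines the sample complexity; the two-case bookkeeping afterward is a routine Cauchy–Schwarz and absorption argument.
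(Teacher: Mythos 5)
Your proposal follows essentially the same route as the paper: the same gradient decomposition into a curvature term, a balance-regularizer term, and an $N_r$ term; the same use of $S^\top N_r=0$, $N_rT=0$ to reduce the cross terms to $\sigma_{r+1}\|\Delta_F\|_{\rm F}^2$ and produce the eigengap; trace duality plus $\rank\le 2r$ for the $\calP_\Omega(N_r)-N_r$ piece; concentration of the sampling operator on the tangent space and on $\Delta_X\Delta_Y^\top$ (using the $\calF_\beta$ row-norm bounds) to absorb the sampling deviation, which is where $n$, $\beta$, $\mu_r$, $\nu_r$ enter; and the identical two-case absorption. Your ``population plus deviation'' framing is just a regrouping of the paper's direct bounds $\langle\calP_\Omega(D_1),D_1\rangle\approx\|D_1\|_{\rm F}^2$ and $\langle\calP_\Omega(D_2),D_2\rangle$ small, so the argument is the same in substance.
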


Lemma \ref{b1} verifies the so-called local descent condition from \cite{chen2015fast}. It states that, if $\|\Delta_{F_t}\|_\textup{F} \geq O(\| \calP_\Omega(N_r)-N_r\|)$, $\nabla f(X_t,Y_t)$ will have an acute angle with $\Delta_{F_t}$ and $f(X_t,Y_t)$ shows similar behavior to a convex function. It implies that if $\nabla f \approx 0$, then it is necessarily true that
$$\|\Delta_{F_t}\|_{\rm F}=O\left(\left\|\calP_\Omega(N_r)-N_r\right\|\right),$$
so that, to bound $\|\Delta_{F_t}\|_\textup{F}$, it suffices to do so for $\| \calP_\Omega(N_r)-N_r\|$.

\begin{lemma}\label{b2}
Under the conditions in Lemma \ref{b1}, there exists a numerical constant $C_1>0$ such that
\begin{align*}
\| \nabla_X f(X_t,Y_t)\|_{\rm F}^2 + \| \nabla_Y f(X_t,Y_t)\|_{\rm F}^2\leq \frac{81\|A_r\|_{\rm F}^4}{\beta^4(1-\nu_r(A)^2)^2}\|\Delta_{F_t}\|_\textup{F}^2 +18r\sigma_1\| \calP_\Omega(N_r)-N_r\|^2,
\end{align*}
with probability at least $1-d^{-\alpha}$ provide $n \geq C_1\alpha d\log d\|A\|_{\rm F}^2/\|A_r\|_{\rm F}^2$.
\end{lemma}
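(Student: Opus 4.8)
The plan is to reuse the population/perturbation decomposition behind Lemma \ref{b1}, but to turn it into an \emph{upper} bound on the gradient. First I would record the exact gradients
$$
\nabla_X f(X,Y)=-\calP_\Omega(A-XY^\top)Y+\tfrac12 X(X^\top X-Y^\top Y),\qquad \nabla_Y f(X,Y)=-[\calP_\Omega(A-XY^\top)]^\top X-\tfrac12 Y(X^\top X-Y^\top Y),
$$
and their noiseless counterparts $\nabla\bar f$, obtained by dropping $\calP_\Omega$. The key structural fact is that $F^\ast$, and hence $F^\ast R_t$ by the orthogonal invariance of $\bar f$, is a stationary point of $\bar f$: the balancing term vanishes because $X^{\ast\top}X^\ast=Y^{\ast\top}Y^\ast=\Sigma_r$, and the data term vanishes because $N_rY^\ast=0$ and $N_r^\top X^\ast=0$ (the leading singular vectors are orthogonal to the trailing subspaces). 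Thus $\nabla_X\bar f(S_t,T_t)=\nabla_Y\bar f(S_t,T_t)=0$.

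Writing $M=A-X_tY_t^\top$ and subtracting these zeros, I split
$$
\nabla_X f(X_t,Y_t)=\underbrace{\big[\nabla_X\bar f(X_t,Y_t)-\nabla_X\bar f(S_t,T_t)\big]}_{P_X}-\underbrace{\big[\calP_\Omega(M)-M\big]Y_t}_{Q_X},
$$
and analogously for $Y$, so it suffices to bound the $P$'s and $Q$'s separately. For the population part I would substitute $X_t=S_t+\Delta_{X_t}$, $Y_t=T_t+\Delta_{Y_t}$ and expand; every surviving term is at least linear in $\Delta_{F_t}$, and the cubic pieces such as $X_tY_t^\top Y_t-S_tT_t^\top T_t$ dominate the $A_r\Delta_{Y_t}$ and $N_r\Delta_{Y_t}$ contributions. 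The decisive estimate is that $(X_t,Y_t)\in\calF_\beta$ forces $\|X_t\|_{\rm F}^2,\|Y_t\|_{\rm F}^2\le\|A\|_{\rm F}^2/\beta^2\le\|A_r\|_{\rm F}^2/[\beta^2(1-\nu_r(A)^2)]$, using $1-\nu_r(A)^2=\min_{i,j}\{\|(A_r)_{i\cdot}\|^2/\|A_{i\cdot}\|^2,\|(A_r)_{\cdot j}\|^2/\|A_{\cdot j}\|^2\}$; the hypothesis $\beta^2\le\sigma_r/(1-\nu_r(A)^2)$ gives the same bound for $\|S_t\|_{\rm F}^2=\|T_t\|_{\rm F}^2=\tr\Sigma_r$. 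Collecting the finitely many resulting terms and using $\|X\|\le\|X\|_{\rm F}$ throughout produces $\|P_X\|_{\rm F}+\|P_Y\|_{\rm F}\le 9\|A_r\|_{\rm F}^2/[\beta^2(1-\nu_r(A)^2)]\cdot\|\Delta_{F_t}\|_{\rm F}$, which is exactly the first term of the claimed bound.

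For the perturbation part I split $M=(A_r-X_tY_t^\top)+N_r$, so that $Q_X$ separates into the sampling error of the rank-$\le2r$ residual $A_r-X_tY_t^\top=-\Delta_{X_t}T_t^\top-X_t\Delta_{Y_t}^\top$ and the term $[\calP_\Omega(N_r)-N_r]Y_t$. In the latter I write $Y_t=T_t+\Delta_{Y_t}$ and use $\|[\calP_\Omega(N_r)-N_r]T_t\|_{\rm F}\le\|\calP_\Omega(N_r)-N_r\|\,\|T_t\|_{\rm F}$ with $\|T_t\|_{\rm F}^2=\tr\Sigma_r\le r\sigma_1$, which yields the $18r\sigma_1\|\calP_\Omega(N_r)-N_r\|^2$ term; the leftover $[\calP_\Omega(N_r)-N_r]\Delta_{Y_t}$ is a product of two small factors, absorbed using $\|\Delta_{F_t}\|_{\rm F}\le(\sigma_r-\sigma_{r+1})/(20\sqrt{\sigma_1})$. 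The residual term is bounded by $\|\calP_\Omega(A_r-X_tY_t^\top)-(A_r-X_tY_t^\top)\|\,\|Y_t\|_{\rm F}$; since $\|A_r-X_tY_t^\top\|_{\rm F}=O(\sqrt{\sigma_1}\,\|\Delta_{F_t}\|_{\rm F})$, once the sampling error of this low-rank matrix is controlled it feeds only into the $\|\Delta_{F_t}\|_{\rm F}^2$ term.

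That last control is the main obstacle. Because $(X_t,Y_t)$ is itself a function of $\Omega$, I cannot apply a fixed-matrix bound like Lemma \ref{le:concentration} to $A_r-X_tY_t^\top$ directly; I need a restricted-isometry estimate $\|\calP_\Omega(Z)-Z\|\le\epsilon\|Z\|_{\rm F}$ holding \emph{uniformly} over all row-norm-bounded rank-$2r$ matrices $Z$ arising from feasible $(X,Y)\in\calF_\beta$. The plan is to establish this by a covering argument over that constrained manifold (or to invoke the corresponding appendix lemma), checking that the leverage-score probabilities \eqref{eq:defp} debias such matrices and that $\epsilon$ becomes arbitrarily small precisely when $n\gtrsim(1+\alpha)d\log d\,\|A\|_{\rm F}^2/\|A_r\|_{\rm F}^2$ --- the ratio $\|A\|_{\rm F}^2/\|A_r\|_{\rm F}^2$ appearing because the probabilities are calibrated to the full rows $\|A_{i\cdot}\|$ while the residual lives in the rank-$r$ geometry of $A_r$. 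The rest is bookkeeping of numerical constants to match the factors $81$ and $18$.
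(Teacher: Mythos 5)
Your decomposition into a population part (using that $\nabla\bar f(S_t,T_t)=0$ because $N_rT_t=0$ and $S_t^\top S_t=T_t^\top T_t$) and a sampling-error part is sound, and your treatment of the $[\calP_\Omega(N_r)-N_r]Y_t$ piece via $\|T_t\|_{\rm F}^2=\mathrm{trace}(\Sigma_r)\le r\sigma_1$ recovers the $18r\sigma_1\|\calP_\Omega(N_r)-N_r\|^2$ term in essentially the same way the paper does. The problem is exactly the step you flag as ``the main obstacle'': you need a uniform deviation bound on $\|\calP_\Omega(Z)-Z\|$ over the data-dependent, rank-$\le 2r$ residuals $Z=A_r-X_tY_t^\top$, and neither the covering argument you sketch nor ``the corresponding appendix lemma'' delivers it. The appendix lemma (Lemma \ref{f1}) is a strictly weaker, one-sided statement --- an upper bound on the quadratic form $\langle\calP_\Omega(XY^\top),XY^\top\rangle$ --- not a bound of the form $\|\calP_\Omega(Z)-Z\|\le\epsilon\|Z\|_{\rm F}$; and a genuine restricted-isometry property over all feasible rank-$2r$ matrices would at minimum require a sample size scaling with $rd$ (the degrees of freedom of that class) together with control of the mismatch between the leverage scores of $A$ and the row geometry of the iterates, none of which follows from your heuristic that $n\gtrsim d\log d\,\|A\|_{\rm F}^2/\|A_r\|_{\rm F}^2$ suffices.

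The paper avoids this obstacle entirely. It tests the gradient against a unit-Frobenius-norm direction $V$, writes $\langle\nabla_Xf,V\rangle$ in terms of $\langle\calP_\Omega(XY^\top-ST^\top),VY^\top\rangle$, and applies the Cauchy--Schwarz inequality for the positive semidefinite bilinear form $\langle\calP_\Omega(\cdot),\cdot\rangle$ (valid since every weight $\omega_{ij}/p_{ij}$ is nonnegative):
$$\langle\calP_\Omega(Z),VY^\top\rangle^2\le\langle\calP_\Omega(Z),Z\rangle\,\langle\calP_\Omega(VY^\top),VY^\top\rangle.$$
Both factors are then controlled by Lemma \ref{f1}, whose uniformity in $(X,Y)$ comes for free because it rests only on the event $\sum_j\omega_{ij}\|(A_r)_{\cdot j}\|^2/p_{ij}\le 2\|A_r\|_{\rm F}^2$ for every $i$ (and its column analogue), a property of $\Omega$ alone. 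No two-sided restricted isometry and no covering argument are needed, which is why the stated sample size carries no factor of $r$. If you replace your bound $\|[\calP_\Omega(Z)-Z]Y_t\|_{\rm F}\le\|\calP_\Omega(Z)-Z\|\,\|Y_t\|_{\rm F}$ with this Cauchy--Schwarz step, your argument essentially collapses into the paper's proof; as written, the proposal has a genuine unfilled gap at its central step.
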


We shall now use these lemmas to prove that for any $t$
\begin{equation}
\label{inducta}
\|\Delta_{F_t}\|_{\rm F} \leq \frac{\sigma_r-\sigma_{r+1}}{20\sqrt{\sigma_1}},
\end{equation}
and
\begin{equation}
\label{inductc}
\delta_t 
\leq
\left[ 1-\frac{1}{5}\eta\left(\sigma_r-\sigma_{r+1}\right) \right]^{1/2}\cdot \delta_{t-1}+\frac{4\sqrt{r /\sigma_r}}{1-\sigma_{r+1}/\sigma_r}\left\|\calP_\Omega(N_r)-N_r\right\|.
\end{equation}



We shall do so by induction. In fact, when $t=0$, it suffices to verify \eqref{inducta}. In light of Lemma \ref{le:remainder} and Lemma \ref{le:lowrankappr}, we have
$$\left\| X_0Y_0^\top -A_r \right\|_\textup{F} \leq O\left(\frac{\sigma_r-\sigma_{r+1}}{\sqrt{\sigma_1/\sigma_r}}\right),$$
with probability at least $1-d^{-\alpha}$ provided that 
$$n\ge C_1(1+\alpha)rd(\log d)\cdot{(\sigma_1/\sigma_r)\|A\|_{\rm F}^2\over (\sigma_r-\sigma_{r+1})^2}\cdot\left(1+{\sigma_{r+1}\sigma_1\over (\sigma_r-\sigma_{r+1})^2}\right).$$
Then, \eqref{inducta} follows immediately from Lemma 6 and Lemma 42 of \cite{ge2017no}.

Now assume that \eqref{inducta} and \eqref{inductc} hold for $t$, we show that the same is true for $t+1$. We shall first verify \eqref{inductc}. Because the projection $\calP_{\calF_\beta}$ is contractive, we have 
\begin{eqnarray*}
\|F_{t+1}-F^*R_t\|_\textup{F}^2 & \le& \|X_t-\eta\nabla_X f(X_t,Y_t)-S_t\|_\textup{F}^2 +  \|Y_t-\eta\nabla_Y f(X_t,Y_t)-T_t\|_\textup{F}^2\\
&=& \|\Delta_{F_t}\|_\textup{F}^2 +\eta^2(\| \nabla_X f\|_{\rm F}^2 + \| \nabla_Y f\|_{\rm F}^2) - 2\eta \langle \nabla_{X} f(X_t,Y_t),\Delta_{X_t}\rangle \\
&&\hskip50pt -2\eta \langle \nabla_Y f(X_t,Y_t),\Delta_{Y_t}\rangle \\
&\leq& \|\Delta_{F_t}\|_\textup{F}^2 +{  81\eta^2\|A_r\|_{\rm F}^4 \over \beta^4(1-\nu_r(A)^2)^2}\|\Delta_{F_t}\|_\textup{F}^2 +18\eta^2r\sigma_1\| \calP_\Omega(N_r)-N_r\|^2\\
&&\hskip50pt - 2\eta \langle \nabla_{X} f(X_t,Y_t),\Delta_{X_t}\rangle -2\eta \langle \nabla_Y f(X_t,Y_t),\Delta_{Y_t}\rangle ,\label{b.4}
\end{eqnarray*}
where we used Lemma \ref{b2}.

By Lemma \ref{b1}, if \eqref{eq:b1cond} holds, then
\begin{eqnarray}
\delta_{t+1}^2 &\leq &\|F_{t+1}-F^*R_t\|_\textup{F}^2 \nonumber\\
&\leq& \|\Delta_{F_t}\|_\textup{F}^2 +{  81\eta^2\|A_r\|_{\rm F}^4 \over \beta^4(1-\nu_r(A)^2)^2}\|\Delta_{F_t}\|_\textup{F}^2 \nonumber\\
&&+18\eta^2r\sigma_1\| \calP_\Omega(N_r)-N_r\|^2 - \eta {19(\sigma_r-\sigma_{r+1})\over 50}\|\Delta_{F_t}\|_{\rm F}^2  \nonumber\\
&\leq&  \left[ 1-\frac{1}{5}\eta(\sigma_r -\sigma_{r+1}) \right]\delta_t^2,\label{eq:main1}
\end{eqnarray}
where we used the condition that $\eta \leq (\sigma_r-\sigma_{r+1})\beta^4(1-\nu_r(A)^2)^2/(500\|A_r\|_{\rm F}^4)$.

On the other hand, if \eqref{eq:b1cond} does not hold, then by Lemma \ref{b1},
\begin{eqnarray*}
\delta_{t+1}^2 &\leq &\|F_{t+1}-F^*R_t\|_\textup{F}^2 \\
&\leq&  \frac{31^2 \cdot r/\sigma_r \cdot \|\calP_\Omega(N_r)-N_r\|^2}{10^2(1-\sigma_{r+1}/\sigma_r)^2} + \eta{5r\| \calP_\Omega(N_r)-N_r\|^2\over (1-\sigma_{r+1}/\sigma_r)}\\
&&+{  81\eta^2\|A_r\|_{\rm F}^4 \over \beta^4(1-\nu_r(A)^2)^2}\|\Delta_{F_t}\|_\textup{F}^2 +18\eta^2r\sigma_1\| \calP_\Omega(N_r)-N_r\|^2\\
&&- \eta {69(\sigma_r-\sigma_{r+1})\over 50}\|\Delta_{F_t}\|_{\rm F}^2\\
&\leq&  \frac{4^2 \cdot r/\sigma_r \cdot \|\calP_\Omega(N_r)-N_r\|^2}{(1-\sigma_{r+1}/\sigma_r)^2},
\end{eqnarray*}
where we again used $\eta \leq (\sigma_r-\sigma_{r+1})\beta^4(1-\nu_r(A)^2)^2/(500\|A_r\|_{\rm F}^4)$.

In light of above two inequalities, we have,
$$\delta_{t+1} \le \max\left\{\delta_{t}, \frac{4\sqrt{r/\sigma_r}}{1-\sigma_{r+1}/\sigma_r}\cdot\|\calP_\Omega(N_r)-N_r\|\right\}.$$
It is clear that \eqref{inducta} also continues to hold for $t+1$ in light of the inequality above and the Lemma \ref{le:remainder}. The first claim of Theorem \ref{th:main} follows immediately.

The second claim also follows, in light of the following bound:
\begin{lemma}
\label{le:apprest}
Under the assumptions of Theorem \ref{th:main},
$$
\|X_tY_t^\top-A_r\|_{\rm F}^2\le 3\sigma_1(A)\delta_t^2.
$$
\end{lemma}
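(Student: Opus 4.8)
The plan is to reduce the claim to a bound on $\|X_tY_t^\top - S_tT_t^\top\|_{\rm F}$, which is legitimate because the aligning matrix $R_t$ is orthogonal: since $S_tT_t^\top = X^\ast R_t R_t^\top (Y^\ast)^\top = X^\ast(Y^\ast)^\top = A_r$, we have the exact identity $X_tY_t^\top - A_r = X_tY_t^\top - S_tT_t^\top$. Substituting $X_t = S_t + \Delta_{X_t}$ and $Y_t = T_t + \Delta_{Y_t}$ and expanding the product then yields
$$X_tY_t^\top - A_r = \Delta_{X_t}T_t^\top + S_t\Delta_{Y_t}^\top + \Delta_{X_t}\Delta_{Y_t}^\top,$$
a sum of two terms linear in the perturbation and one quadratic cross term.

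For the two linear terms I would use the elementary inequality $\|MN^\top\|_{\rm F}\le \|M\|\,\|N\|_{\rm F}$ together with the fact that $\|S_t\| = \|X^\ast\| = \sqrt{\sigma_1}$ and $\|T_t\| = \|Y^\ast\| = \sqrt{\sigma_1}$: the operator norm is invariant under the orthogonal $R_t$, and $X^\ast = U_r\Sigma_r^{1/2}$, $Y^\ast = V_r\Sigma_r^{1/2}$ each have largest singular value $\sqrt{\sigma_1}$. This gives $\|\Delta_{X_t}T_t^\top\|_{\rm F}\le \sqrt{\sigma_1}\,\|\Delta_{X_t}\|_{\rm F}$ and likewise for $S_t\Delta_{Y_t}^\top$. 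Writing $a = \|\Delta_{X_t}\|_{\rm F}$ and $b = \|\Delta_{Y_t}\|_{\rm F}$ so that $a^2 + b^2 = \delta_t^2$, and bounding $a + b \le \sqrt{2}\,\delta_t$, the linear part contributes at most $\sqrt{2\sigma_1}\,\delta_t$.

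The quadratic term is the only genuine obstacle, and it is exactly where the induction hypothesis \eqref{inducta} enters. Using $\|\Delta_{X_t}\Delta_{Y_t}^\top\|_{\rm F}\le \|\Delta_{X_t}\|_{\rm F}\|\Delta_{Y_t}\|_{\rm F} = ab \le \delta_t^2/2$ and then the a priori smallness $\delta_t \le (\sigma_r - \sigma_{r+1})/(20\sqrt{\sigma_1}) \le \sqrt{\sigma_1}/20$ from \eqref{inducta}, this term is at most $\delta_t^2/2 \le \sqrt{\sigma_1}\,\delta_t/40$, i.e. genuinely lower order relative to the linear part. Combining,
$$\|X_tY_t^\top - A_r\|_{\rm F} \le \left(\sqrt{2} + \tfrac{1}{40}\right)\sqrt{\sigma_1}\,\delta_t,$$
and squaring gives the constant $(\sqrt{2} + 1/40)^2 < 3$, which is precisely the asserted bound. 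The crux is therefore not any individual estimate but the structural observation that the product error splits into a leading term of size $\sqrt{\sigma_1}\,\delta_t$ plus a quadratic remainder that the smallness of $\delta_t$ renders negligible; without \eqref{inducta} the absolute constant could not be pinned down at $3$.
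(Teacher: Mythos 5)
Your argument is correct and follows essentially the same route as the paper: both expand $X_tY_t^\top - S_tT_t^\top$ into perturbation terms, bound each via an operator-norm times Frobenius-norm estimate with $\|S_t\|=\|T_t\|=\sqrt{\sigma_1(A)}$, and invoke the a priori bound \eqref{inducta} to absorb the second-order contribution (the paper folds the quadratic piece into the bound $\|X_t\|\le 1.05\sqrt{\sigma_1(A)}$ rather than isolating $\Delta_{X_t}\Delta_{Y_t}^\top$, but this is only a difference in bookkeeping). Your constant $(\sqrt{2}+1/40)^2<3$ matches the paper's $(1.05)^2\sigma_1+\sigma_1<3\sigma_1$.
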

If the second claim doesn't hold, then \eqref{eq:b1cond} holds and we have \eqref{eq:main1}. Combined with Lemma \ref{le:apprest}, there is a contradiction.
\end{proof}

\begin{proof}[Proof of Lemma \ref{b1}]
For brevity, we shall omit the subscript $t$ in what follows. Denote by $\kappa_r = \sigma_{r+1}/\sigma_r$. Observe that
$$
\nabla_X f = \calP_{\Omega}\left(XY^\top - A\right)Y +{1\over 2}X(X^\top X - Y^\top Y),
$$
and
$$
\nabla_Y  f= \left[\calP_{\Omega}\left(XY^\top - A\right)\right]^\top X -{1\over 2}Y(X^\top X - Y^\top Y) .
$$
Therefore,
\begin{eqnarray}
&&\langle \nabla_X f,\Delta_X\rangle + \langle \nabla_Y f,\Delta_Y\rangle \nonumber\\
&=& \left\langle \calP_\Omega\left(XY^\top-A\right)Y,\Delta_X\right\rangle + \left\langle \left[\calP_\Omega\left(X Y^\top-A\right)\right]^\top X,\Delta_Y\right\rangle\nonumber\\
&& \hskip 75pt+ {1\over 2} \left\langle X^\top X - Y^\top Y ,X^\top(\Delta_X) - (\Delta_Y)^\top Y\right\rangle\nonumber\\
 &=& \left\langle \calP_\Omega \left(XY^\top-ST^\top\right),(\Delta_X)Y^\top+X(\Delta_Y)^\top\right\rangle \label{c.1} \\
&&\hskip 75pt + {1\over 2} \left\langle X^\top X - Y^\top Y ,X^\top(\Delta_X) - (\Delta_Y)^\top Y\right\rangle\label{c.3}\\
&& \hskip 75pt- \left\langle \calP_\Omega(N_r),(\Delta_X)Y^\top+X(\Delta_Y)^\top\right\rangle \label{c.2}.
\end{eqnarray}
We now bound each of the term on the rightmost hand side to show that it can be lower bounded by $\|\Delta_{F_t}\|_{\rm F}^2$.

\paragraph{\indent Bounding \eqref{c.1}:} The term \eqref{c.1} can be bounded in a similar way as \cite{ge2016matrix}. However, since we are dealing with different $p_{ij}$ instead of uniform sampling in \cite{ge2016matrix}, new concentration inequalities are needed. 

Denote by $D_1=S\Delta_Y^\top+\Delta_XT^\top$, $D_2=\Delta_X \Delta_Y^\top$, and
$$D_3 = \Delta_X^\top S + S^\top\Delta_X - \Delta_Y^\top T-T^\top \Delta_Y.$$
It is clear that
$$XY^\top-ST^\top=D_1+D_2,$$
and
$$(\Delta_X)Y^\top+X(\Delta_Y)^\top = D_1+2D_2.$$
Thus, by Cauchy-Schwartz inequality
\begin{eqnarray*}
 &&\langle \calP_\Omega(XY^\top-ST^\top),(\Delta_X)Y^\top+X(\Delta_Y)^\top\rangle  \\
& =&  \langle \calP_\Omega(D_1+D_2),D_1+2D_2\rangle \\
& =& \langle \calP_\Omega(D_1),D_1\rangle + 2 \langle \calP_\Omega(D_2),D_2\rangle- 3\langle \calP_\Omega(D_1), D_2 \rangle  \\
& \geq& \langle \calP_\Omega(D_1),D_1\rangle - 3\langle \calP_\Omega(D_1), D_2 \rangle \\
& \geq&\langle \calP_\Omega(D_1),D_1\rangle -3\sqrt{\langle \calP_\Omega(D_1),D_1\rangle\langle \calP_\Omega(D_2),D_2\rangle},
\end{eqnarray*}
We shall make use of the following concentration inequalities:
\begin{lemma}
\label{c1}
Under the assumptions of Theorem \ref{th:main}, with probability at least $1-d^{-\alpha}$,
$$\left| \langle \calP_\Omega(D_1),D_1\rangle - \|D_1\|^2_\textup{F} \right| \leq \frac{1-\kappa_{r}}{20}\cdot \|D_1\|_\textup{F}^2,$$
and
$$\langle \calP_\Omega(D_2),D_2\rangle \leq \left(\frac{1-\kappa_{r}}{20}\right)^2\cdot \sigma_r\|\Delta_F\|_\textup{F}^2.$$
\end{lemma}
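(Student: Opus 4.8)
The plan is to recognize both estimates as weighted-sampling concentration statements for matrices confined to a low-dimensional set, and to exploit the fact that the probabilities in \eqref{eq:defp} are tailored to the row and column norms of $A$. Writing out the inner products,
$$\langle\calP_\Omega(D_1),D_1\rangle=\sum_{i,j}\frac{\omega_{ij}}{p_{ij}}(D_1)_{ij}^2,\qquad \langle\calP_\Omega(D_2),D_2\rangle=\sum_{i,j}\frac{\omega_{ij}}{p_{ij}}(D_2)_{ij}^2,$$
each is a sum of independent terms whose means are $\|D_1\|_{\rm F}^2$ and $\|D_2\|_{\rm F}^2$ (using $0/0=0$). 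The first claim is a two-sided concentration bound and the second only an upper bound, but both hinge on controlling the per-entry ratios $(D_k)_{ij}^2/p_{ij}$.

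For the first claim I would fix $D_1$ and apply Bernstein's inequality to $\sum_{ij}(\omega_{ij}/p_{ij}-1)(D_1)_{ij}^2$, which needs the uniform magnitude $L=\max_{i,j}(D_1)_{ij}^2/p_{ij}$ and a variance proxy bounded by $L\|D_1\|_{\rm F}^2$. To bound $L$, write $D_1=S\Delta_Y^\top+\Delta_XT^\top$ so that $|(D_1)_{ij}|\le\|S_{i\cdot}\|\,\|(\Delta_Y)_{j\cdot}\|+\|(\Delta_X)_{i\cdot}\|\,\|T_{j\cdot}\|$ by Cauchy--Schwarz. The row norms of the factors are all controlled by those of $A$: since $RR^\top=I$, $\|S_{i\cdot}\|^2=\|(X^\ast)_{i\cdot}\|^2=\sum_{k\le r}\sigma_k u_{ik}^2\le\|(A_r)_{i\cdot}\|^2/\sigma_r\le\mu_r(A)^2\|A_{i\cdot}\|^2/\sigma_r$, and because $(X_t,Y_t)\in\calF_\beta$ with $\beta\le\sqrt{\sigma_r/(1-\nu_r(A)^2)}$ one also gets $\|(\Delta_X)_{i\cdot}\|\lesssim\|A_{i\cdot}\|/(\beta\sqrt{1-\nu_r(A)^2})$, with analogous bounds for $T$ and $\Delta_Y$ in terms of $\|A_{\cdot j}\|$. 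Meanwhile $p_{ij}\ge\frac{n}{3}\bigl(\|A_{i\cdot}\|^2/(d_2\|A\|_{\rm F}^2)+\|A_{\cdot j}\|^2/(d_1\|A\|_{\rm F}^2)\bigr)\gtrsim n\,\|A_{i\cdot}\|\,\|A_{\cdot j}\|/(d\|A\|_{\rm F}^2)$ by AM--GM. The row/column norms then cancel against those in $p_{ij}$, and tracking the factor norms carefully (or, to get the tight $\mu_r(A)^2\|A_r\|_{\rm F}^2\|A\|_{\rm F}^2$ dependence, controlling the full variance $\sum_{ij}(D_1)_{ij}^2/p_{ij}$ rather than only the entrywise maximum) leaves $L$ of the order dictated by the sample-size hypothesis of Lemma \ref{b1}, so that choosing $n$ as stated makes the Bernstein deviation at most $\frac{1-\kappa_r}{20}\|D_1\|_{\rm F}^2$. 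This is exactly where the choice \eqref{eq:defp} pays off.

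The step I expect to be the crux is upgrading this fixed-$D_1$ bound to one valid for the \emph{random} $D_1$ produced by the algorithm: $\Omega$ and the iterates $(X_t,Y_t)$ are not independent, since the iterates are computed from $\calP_\Omega(A)$. The resolution is a uniform (weighted restricted-isometry) bound, and the key structural observation is that $D_1$ always lies in the \emph{fixed} tangent space $\mathcal{T}=\{U_rB^\top+CV_r^\top:\ B\in\RR^{d_2\times r},\,C\in\RR^{d_1\times r}\}$ at $A_r$, because the column spaces of $S=X^\ast R$ and $T=Y^\ast R$ coincide with those of $U_r$ and $V_r$ for every orthogonal $R$. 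Intersecting $\mathcal{T}$ with the row-norm constraints induced by $\calF_\beta$ gives a set of dimension $O((d_1+d_2)r)$, which admits an $\epsilon$-net of log-cardinality $O(rd)$; establishing the Bernstein bound on the net, taking a union bound, and extending to the whole set by Lipschitz continuity of the quadratics $D_1\mapsto\langle\calP_\Omega(D_1),D_1\rangle$ and $D_1\mapsto\|D_1\|_{\rm F}^2$ produces the $(1+\alpha)d\log d$ factor in the sample-size requirement. (Equivalently, one may bound the operator norm of $\calP_\Omega-\mathcal{I}$ restricted to $\mathcal{T}$ by matrix Bernstein.) This mirrors \cite{ge2016matrix}, with incoherence replaced throughout by the weighting in \eqref{eq:defp}.

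For the second claim I would reuse the same weighted concentration, now for the rank-$\le r$ matrix $D_2=\Delta_X\Delta_Y^\top$, where only an upper bound $\langle\calP_\Omega(D_2),D_2\rangle\le 2\|D_2\|_{\rm F}^2$ is needed (the relevant set is again low-dimensional and nettable). The gain comes from the smallness of $\Delta_F$: submultiplicativity gives $\|D_2\|_{\rm F}^2=\|\Delta_X\Delta_Y^\top\|_{\rm F}^2\le\|\Delta_X\|_{\rm F}^2\|\Delta_Y\|_{\rm F}^2\le\tfrac14\|\Delta_F\|_{\rm F}^4$, and the standing hypothesis $\|\Delta_F\|_{\rm F}\le(\sigma_r-\sigma_{r+1})/(20\sqrt{\sigma_1})$ of Lemma \ref{b1} turns one factor of $\|\Delta_F\|_{\rm F}^2$ into $\tfrac{(1-\kappa_r)^2\sigma_r}{400}$ after using $\sigma_r^2/\sigma_1\le\sigma_r$. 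Chaining these yields $\langle\calP_\Omega(D_2),D_2\rangle\le\bigl(\tfrac{1-\kappa_r}{20}\bigr)^2\sigma_r\|\Delta_F\|_{\rm F}^2$, as claimed.
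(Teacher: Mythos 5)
Your treatment of the first claim is essentially the paper's: the decisive observation that $D_1$ always lies in the fixed tangent space $\mathcal T=\{U_rB^\top+CV_r^\top\}$, combined with the cancellation between the row/column norms in $|(D_1)_{ij}|$ and those in $p_{ij}$, is exactly how the paper proceeds (it packages the uniform statement as $\|\calP_T\calP_\Omega\calP_T-\calP_T\|\le (1-\kappa_r)/20$ via matrix Bernstein, the route you mention parenthetically; your $\epsilon$-net over the unit sphere of $\mathcal T$ is an equivalent way to prove the same operator-norm bound).

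The second claim is where there is a genuine gap. You invoke a uniform multiplicative bound $\langle\calP_\Omega(D_2),D_2\rangle\le 2\|D_2\|_{\rm F}^2$ over the rank-$\le r$ matrices $D_2=\Delta_X\Delta_Y^\top$, but no such bound holds: unlike $D_1$, the matrix $D_2$ does not live in a fixed linear subspace whose elements are forced to be spread out relative to the weights in \eqref{eq:defp}. Concretely, take $\Delta_X$ and $\Delta_Y$ each supported on a single row (permissible under the $\calF_\beta$ row-norm constraints for small enough magnitude); then $D_2$ is supported on a single entry $(i,j)$ and $\langle\calP_\Omega(D_2),D_2\rangle=(\omega_{ij}/p_{ij})\|D_2\|_{\rm F}^2$, which exceeds $2\|D_2\|_{\rm F}^2$ whenever $\omega_{ij}=1$ and $p_{ij}<1/2$ --- an event of non-negligible probability, and one that a net over the constraint set cannot avoid. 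The scale $\|D_2\|_{\rm F}^2$ is simply wrong here: it can be arbitrarily small (through cancellations $\langle(\Delta_X)_{i\cdot},(\Delta_Y)_{j\cdot}\rangle\approx 0$) while the sampling fluctuations remain governed by the row-norm budgets.

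The paper's argument for $D_2$ is structurally different and avoids this. It first applies Cauchy--Schwarz entrywise, $\langle\calP_\Omega(D_2),D_2\rangle\le\sum_{ij}(\omega_{ij}/p_{ij})\|(\Delta_X)_{i\cdot}\|^2\|(\Delta_Y)_{j\cdot}\|^2=\hat b_1^\top[\calP_\Omega(B)]\hat b_2$ with $B_{ij}=\|(A_r)_{i\cdot}\|\,\|(A_r)_{\cdot j}\|$ and $\hat b_1,\hat b_2$ the vectors of normalized squared row norms, and only then does concentration --- on the \emph{fixed} matrix $B$, via a spectral-norm bound $\|\calP_\Omega(B)-B\|\lesssim\sqrt{d/n}\,\mu_r(A)\|A\|_{\rm F}\|A_r\|_{\rm F}$ in the spirit of Lemma \ref{le:concentration}. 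This yields the mean term $\|\Delta_X\|_{\rm F}^2\|\Delta_Y\|_{\rm F}^2$ plus an \emph{additive} deviation proportional to $\|\Delta_X\|_{\rm F}\|\Delta_Y\|_{\rm F}$, both of which are then beaten down using the standing hypothesis $\|\Delta_F\|_{\rm F}\le(\sigma_r-\sigma_{r+1})/(20\sqrt{\sigma_1})$ and the sample-size condition (your final arithmetic from $\|\Delta_X\|_{\rm F}^2\|\Delta_Y\|_{\rm F}^2\le\tfrac14\|\Delta_F\|_{\rm F}^4$ onward is fine and is what the paper does). To repair your proof you would need to replace the claimed multiplicative RIP for $D_2$ with this Cauchy--Schwarz-then-concentrate step, or otherwise justify concentration at the scale $\|\Delta_X\|_{\rm F}^2\|\Delta_Y\|_{\rm F}^2$ rather than $\|D_2\|_{\rm F}^2$.
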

In light of Lemma \ref{c1}, we have
\begin{align}
\label{c1bound}
& \langle \calP_\Omega(XY^\top-ST^\top),(\Delta_X)Y^\top+X(\Delta_Y)^\top\rangle \nonumber \\
& \geq \left(1- \frac{1-\kappa_r}{20}\right)\|D_1\|_\textup{F}^2-3 \sqrt{\sigma_r}\|\Delta_F\|_\textup{F} \cdot \frac{1-\kappa_r}{20}\cdot \sqrt{1+\frac{1-\kappa_r}{20}}\|D_1\|_\textup{F}.
\end{align}

\paragraph{\indent Bounding \eqref{c.3}:} It is not hard to see that $S^\top S = T^\top T$. Thus, by rewriting $X = S + \Delta_X$ and $Y = T + \Delta_Y$, we have
\begin{align*}
&{1\over 2} \left\langle X^\top X - Y^\top Y ,X^\top(\Delta_X) - (\Delta_Y)^\top Y\right\rangle \\
=& {1\over 2} \left\langle X^\top X - Y^\top Y,\Delta_X^\top\Delta_X -\Delta_Y^\top \Delta_Y\right\rangle +{1\over 2} \left\langle\Delta_X^\top\Delta_X -\Delta_Y^\top \Delta_Y,  S^\top \Delta_X - \Delta_Y^\top T \right\rangle \\
&\hskip 50pt+ {1\over 2} \left\langle \Delta_X^\top S + S^\top\Delta_X - \Delta_Y^\top T-T^\top \Delta_Y,S^\top \Delta_X - \Delta_Y^\top T \right\rangle\\
=& {1\over 2} \left\langle \Delta_X^\top S + 2S^\top\Delta_X - 2\Delta_Y^\top T-T^\top \Delta_Y+ \Delta_X^\top\Delta_X -\Delta_Y^\top \Delta_Y,\Delta_X^\top\Delta_X -\Delta_Y^\top \Delta_Y\right\rangle \\
&\hskip 50pt+ {1\over 4} \left\| \Delta_X^\top S + S^\top\Delta_X - \Delta_Y^\top T-T^\top \Delta_Y\right\|_{\rm F}^2,
\end{align*}
where we used the result that $\Delta_X^\top S + \Delta_Y^\top T$ is symmetric from Lemma 6 of \cite{ge2017no}. The first term on the righthand side can be bounded by
\begin{eqnarray*}
&&\left| {1\over 2}\left\langle \Delta_X^\top S + 2S^\top\Delta_X - 2\Delta_Y^\top T-T^\top \Delta_Y+ \Delta_X^\top\Delta_X -\Delta_Y^\top \Delta_Y,\Delta_X^\top\Delta_X -\Delta_Y^\top \Delta_Y\right\rangle \right| \\
&\leq&{1\over 2}(3\|S\|\|\Delta_X\|_{\rm F}+3\|T\|\|\Delta_Y\|_{\rm F}+ \|\Delta_X^\top\Delta_X -\Delta_Y^\top \Delta_Y\|_{\rm F})\|\Delta_X^\top\Delta_X -\Delta_Y^\top \Delta_Y\|_{\rm F} \\
&\leq&2\sqrt{\sigma_1}(\|\Delta_X\|_{\rm F}+\|\Delta_Y\|_{\rm F})(\|\Delta_X\|_{\rm F}^2+\|\Delta_Y\|_{\rm F}^2)\\
& \leq& 3\sqrt{\sigma_1}\|\Delta_F\|_{\rm F}^3,
\end{eqnarray*}
where the last inequality follow from the fact that $\max\{ \|\Delta_X\|, \|\Delta_Y\| \} \leq \|\Delta_F\|_{\rm F} \le \sqrt{\sigma_1}$.
\paragraph{\indent Bounding \eqref{c.2}:}
Observe that
\begin{align}
\left\langle \calP_\Omega(N_r),(\Delta_X)Y^\top+X(\Delta_Y)^\top \right\rangle  
= &\left\langle \calP_\Omega(N_r)-N_r,(\Delta_X)Y^\top+X(\Delta_Y)^\top \right\rangle  \nonumber\\
&+ \left\langle N_r,(\Delta_X)Y^\top+X(\Delta_Y)^\top \right\rangle.\label{c.4}
\end{align}
The first term on the righthand side can be bounded by
\begin{eqnarray*}
&& \left\langle \calP_\Omega(N_r)-N_r,(\Delta_X)Y^\top+X(\Delta_Y)^\top \right\rangle  \nonumber\\
&\leq &  \left\| \calP_\Omega(N_r)-N_r \right\| \left\| (\Delta_X)Y^\top+X(\Delta_Y)^\top \right\|_*  \nonumber\\
&= &  \left\| \calP_\Omega(N_r)-N_r \right\| \sqrt{2r}\left\| (\Delta_X)Y^\top+X(\Delta_Y)^\top \right\|_\textup{F}  \nonumber\\
&\leq &  \sqrt{2r}\left\| \calP_\Omega(N_r)-N_r \right\| \left(\left\|D_1 \right\|_\textup{F} + 2\|D_2\|_{\rm F}\right)  \nonumber\\
&\leq & \sqrt{2r} \left\| \calP_\Omega(N_r)-N_r \right\| (\|D_1\|_\textup{F} + \|\Delta_F\|_{\rm F}^2) ,
\end{eqnarray*}
where we made use of the facts that, for two arbitrary matrices $A$ and $B$,
$$\langle A,B \rangle \leq \|A\| \|B\|_*,\qquad \|A\|_*^2 \leq \rank(A)\|A\|_\textup{F}^2.$$
Note that by definition $S^\top N_r = 0$ and $N_r T =0$. This implies that
\begin{eqnarray*}
&&\left\langle N_r,(\Delta_X)Y^\top+X(\Delta_Y)^\top \right\rangle  \nonumber\\
&=&  \tr( (\Delta_X)^\top N_rY) +  \tr( X^\top N_r (\Delta_Y)) \nonumber \\
&=&  2 \cdot \tr(\Delta_X^\top N_r \Delta_Y)  \\
&\leq& 2\sigma_{r+1}\|\Delta_X\|_\textup{F}\|\Delta_Y\|_\textup{F}\nonumber \\ 
&\leq  &\sigma_{r+1}\|\Delta_F\|_\textup{F}^2 .
\end{eqnarray*}
In summary, we have
\begin{eqnarray*}
&&\left\langle \calP_\Omega(N_r)-N_r,(\Delta_X)Y^\top+X(\Delta_Y)^\top \right\rangle \\
&\leq&    \sqrt{2r} \left\| \calP_\Omega(N_r)-N_r \right\| (\|D_1\|_\textup{F} + \|\Delta_F\|_{\rm F}^2) + \sigma_{r+1}\|\Delta_F\|_\textup{F}^2.
\end{eqnarray*}
Together, the bounds for \eqref{c.1}, \eqref{c.3} and \eqref{c.2} imply that
\begin{eqnarray*}
&& \langle \nabla_X f,\Delta_X\rangle + \langle \nabla_Y f,\Delta_Y\rangle \\
 &\geq & \|D_1\|_\textup{F}^2 +  {1\over 4}\|D_3\|_{\rm F}^2- \sigma_{r+1}\|\Delta_F\|_\textup{F}^2 -\sqrt{2r} \left\| \calP_\Omega(N_r)-N_r \right\|(\|D_1\|_\textup{F} + \|\Delta_F\|_{\rm F}^2)\\
&&  - \frac{1-\kappa_r}{20}\|D_1\|_\textup{F}^2-3\sqrt{\sigma_r} \|\Delta_F\|_\textup{F} \cdot \frac{1-\kappa_r}{20}\cdot \sqrt{1+\frac{1-\kappa_r}{20}}\|D_1\|_\textup{F} -3\sqrt{\sigma_1}\|\Delta_F\|_{\rm F}^3 .
\end{eqnarray*}
Denote by 
\begin{eqnarray*}
\epsilon^2 &= &\|\Delta_XT^\top +S\Delta_Y^\top\|_{\rm F}^2 + {1\over 4}\| \Delta_X^\top S + S^\top\Delta_X - \Delta_Y^\top T-T^\top \Delta_Y\|_{\rm F}^2\\
&= &\|\Delta_X T^\top\|_{\rm F}^2 +\|S \Delta_Y^\top\|_{\rm F}^2 + \langle \Delta_X T^\top, S\Delta_Y^\top \rangle + {1\over 2}\| \Delta_X^\top S\|_{\rm F}^2 + {1\over 2}\| \Delta_Y^\top T\|_{\rm F}^2\\
&&\hskip50pt  + {1\over 2}\langle\Delta_X^\top S, S^\top\Delta_X \rangle + {1\over 2}\langle\Delta_Y^\top T, T^\top \Delta_Y \rangle - \langle \Delta_X^\top S ,\Delta_Y^\top T \rangle \\
&= & \|\Delta_X T^\top\|_{\rm F}^2 +\|S \Delta_Y^\top\|_{\rm F}^2  + {1\over 2}\| \Delta_X^\top S-\Delta_Y^\top T \|_{\rm F}^2 \\
&&\hskip50pt + {1\over 2}\langle \Delta_X^\top S+\Delta_Y^\top T ,  S^\top \Delta_X+T^\top \Delta_Y \rangle \\
&= & \|\Delta_X T^\top\|_{\rm F}^2 +\|S \Delta_Y^\top\|_{\rm F}^2  + {1\over 2}\| \Delta_X^\top S-\Delta_Y^\top T \|_{\rm F}^2+ {1\over 2}\| \Delta_X^\top S+\Delta_Y^\top T \|_{\rm F}^2,
\end{eqnarray*}
where we again used the result that $\Delta_X^\top S + \Delta_Y^\top T$ is symmetric from Lemma 6 of \cite{ge2017no} and the fact that $ \langle \Delta_X T^\top, S\Delta_Y^\top \rangle = \langle\Delta_X^\top S, T^\top\Delta_Y \rangle$.

This implies
\begin{align}
\label{epsilon:low}
\epsilon^2 \geq \sigma_r\|\Delta_F\|_{\rm F}^2.
\end{align}
Note that
\begin{align}
\label{deltaf}
\|\Delta_F\|_{\rm F}^2= \|\Delta_F\|_{\rm F}\cdot\|\Delta_F\|_{\rm F} \leq \frac{1-\kappa_{r}}{20\sqrt{\sigma_1}}\sigma_r\|\Delta_F\|_{\rm F}
\leq \frac{1-\kappa_{r}}{20}\sqrt{\sigma_r}\|\Delta_F\|_{\rm F} \leq \frac{1-\kappa_{r}}{20}\epsilon.
\end{align}
We get
\begin{eqnarray}
\label{gradient:lowbound}
&& \langle \nabla_X f,\Delta_X\rangle + \langle \nabla_Y f,\Delta_Y\rangle \nonumber\\
 &\geq & \epsilon^2 (1  -\kappa_r)-\sqrt{2} \left\| \calP_\Omega(N_r)-N_r \right\|(\|D_1\|_\textup{F} + \|\Delta_F\|_{\rm F}^2)- {31\over 5}\cdot\frac{1-\kappa_r}{20}\epsilon^2 \\
&\geq & \frac{69(1-\kappa_{r})}{100}\epsilon^2- 1.5\sqrt{r} \left\| \calP_\Omega(N_r)-N_r \right\|\epsilon. \nonumber
\end{eqnarray}
In light of \eqref{deltaf}, if
\begin{equation}
\label{condition}
\max\left\{\|D_1+D_2\|_{\rm F},\sqrt{\sigma_r}\| \Delta_{F}\|_{\rm F}\right\}>\frac{31\sqrt{r}}{10(1-\kappa_{r})}\left\|\calP_\Omega(N_r)-N_r\right\|,
\end{equation}
then
\begin{eqnarray*}
\|D_1+D_2\|_{\rm F}^2 &\leq& (\|D_1\|_{\rm F} + 0.5\|\Delta_F\|_{\rm F}^2)^2 \\
&\leq& (\|D_1\|_{\rm F} +\epsilon/40)^2\\ 
&\leq& \|D_1\|_{\rm F}^2 + \|D_1\|_{\rm F}\epsilon/20 + \epsilon^2/400\\
&<& 1.06\epsilon^2.
\end{eqnarray*}
Together with  \eqref{condition}, we get
$${(1-\kappa_{r})\epsilon\over 3} > \sqrt{r}\left\|\calP_\Omega(N_r)-N_r\right\|.$$ 
In light of \eqref{gradient:lowbound}, we have
\begin{eqnarray*}
&&\langle \nabla_X f,\Delta_X\rangle + \langle \nabla_Y f,\Delta_Y\rangle \\
&\geq& \frac{69(1-\kappa_{r})}{100}\epsilon^2-1.5\sqrt{r}\left\| \calP_\Omega(N_r)-N_r \right\|\epsilon \\
&\geq& \left(\frac{69}{100}-{1\over 2}\right)(1-\kappa_{r})\epsilon^2\\
&\geq& {19\over 100}(1-\kappa_{r})\sigma_r\|\Delta_F\|_{\rm F}^2 .
\end{eqnarray*}
The case when \eqref{condition} does not hold can be handled in a similar fashion: since
\begin{align*}
\|D_1\|_{\rm F}&\leq \|D_1+D_2\|_{\rm F} + \|D_2\|_{\rm F} \leq \|D_1+D_2\|_{\rm F} + \|\Delta_F\|_{\rm F}^2 \\
& \leq  \|D_1+D_2\|_{\rm F} + \frac{1-\kappa_{r}}{20}\sqrt{\sigma_r}\|\Delta_F\|_{\rm F} < \frac{63\sqrt{r}}{20(1-\kappa_{r})}\left\|\calP_\Omega(N_r)-N_r\right\|,
\end{align*}
tegother with \eqref{gradient:lowbound}, we have
$$
\langle \nabla_X f,\Delta_X\rangle + \langle \nabla_Y f,\Delta_Y\rangle 
\geq  \frac{69(1-\kappa_{r})}{100}\sigma_r\|\Delta_F\|_{\rm F}^2 - {5r\| \calP_\Omega(N_r)-N_r\|^2\over (1-\kappa_{r})}.
$$
\end{proof}

\begin{proof}[Proof of Lemma \ref{b2}]

By Cauchy-Schwartz inequality, for any $V \in \RR^{d_1\times r}$ such that $\|V\|_\textup{F}=1$,
\begin{eqnarray}
\left|\langle \nabla_X f,V\rangle \right|^2 
&= &\left| \langle \calP_{\Omega}\left(XY^\top - A\right) , VY^\top \rangle +{1\over 2} \langle (X^\top X - Y^\top Y) , X^\top V \rangle\right|^2 \nonumber\\
 &= &2\left| \langle \calP_\Omega(XY^\top-ST^\top)-\calP_\Omega(N_r),VY^\top  \rangle\right|^2  \nonumber\\
 &&\hskip 50pt+ {1\over 2}\left| \langle (X^\top X - Y^\top Y) , X^\top V \rangle\right|^2  \nonumber\\
&\leq &4 \langle \calP_\Omega(XY^\top-ST^\top), VY^\top  \rangle^2 + 4 \langle \calP_\Omega(N_r),VY^\top \rangle^2   \nonumber\\
&&\hskip 50pt +{1\over 2}\left\| X^\top X - Y^\top Y\right\|_{\rm F}^2 \|X\|^2 \|V\|_{\rm F}^2 .\label{d.1}
\end{eqnarray}
The first term of \eqref{d.1} can be bounded by
\begin{eqnarray*}
&&\left| \langle \calP_\Omega(XY^\top-ST^\top),VY^\top\rangle \right|^2 \\
&\leq & \left\langle \calP_\Omega(XY^\top-ST^\top) ,XY^\top-ST^\top \right\rangle \left\langle \calP_\Omega(VY^\top),VY^\top \right\rangle \\
 &\leq& \left[\left\langle \calP_\Omega(S(\Delta_Y)^\top), S(\Delta_Y)^\top \right\rangle^{1/2}+ \left\langle \calP_\Omega((\Delta_X)Y^\top), (\Delta_X)Y^\top \right\rangle^{1/2} \right]^2\cdot\left\langle \calP_\Omega(VY^\top),VY^\top \right\rangle.
\end{eqnarray*}
In the last line, we used the fact that $XY^\top-ST^\top = S(\Delta_Y)^\top + (\Delta_X)Y^\top$ and the Cauchy-Schwartz inequality. We shall need the following concentration inequality.
\begin{lemma}\label{f1}
Assume that each entry of $\Omega$ is independently sampled from binomial trails with probability given by \eqref{eq:defp}. There exists a numerical constant $C_1 >0$ such that, for any matrices $X,Y$, 
\begin{align*}
\langle \calP_{\Omega}(XY^\top),XY^\top\rangle  \leq 2 \|A_r\|_{\rm F}^2 \min \left\{ \|X\|_\textup{F}^2\cdot \max_j  \left(\frac{\|Y_{j.}\|^2}{\|(A_r)_{.j}\|^2}\right),\|Y\|_\textup{F}^2\cdot \max_i  \left(\frac{\|X_{i.}\|^2}{\|(A_r)_{i.}\|^2}\right) \right\},
\end{align*}
with probability at least $1-d^{-\alpha}$ provided that $n \geq C_1\alpha d\log d \cdot \mu_r(A)\|A\|_{\rm F}^2/\|A_r\|_{\rm F}^2$.
\end{lemma}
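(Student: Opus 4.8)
The statement must hold \emph{uniformly} over all $X$ and $Y$, since in the proof of Lemma~\ref{b2} it is applied with $X,Y$ equal to the (random, $\Omega$-dependent) iterates and with the test matrix ranging over the unit sphere; conditioning on fixed factors is therefore not an option. The plan is to avoid a covering argument entirely by reducing this uniform bound to a finite collection of scalar concentration events, one per row and one per column of $A$, after which the inequality will hold for \emph{every} $X,Y$ deterministically on the good event.

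First I would write out
$$\langle \calP_\Omega(XY^\top),XY^\top\rangle=\sum_{i,j}{\omega_{ij}\over p_{ij}}\langle X_{i\cdot},Y_{j\cdot}\rangle^2,$$
and bound $\langle X_{i\cdot},Y_{j\cdot}\rangle^2\le \|X_{i\cdot}\|^2\|Y_{j\cdot}\|^2$ by Cauchy--Schwarz. Setting $c=\max_j \|Y_{j\cdot}\|^2/\|(A_r)_{\cdot j}\|^2$, so that $\|Y_{j\cdot}\|^2\le c\,\|(A_r)_{\cdot j}\|^2$, the double sum factors as
$$\langle \calP_\Omega(XY^\top),XY^\top\rangle\le c\sum_i \|X_{i\cdot}\|^2\Big(\underbrace{\sum_j {\omega_{ij}\over p_{ij}}\|(A_r)_{\cdot j}\|^2}_{=:R_i}\Big).$$
Hence the first term of the $\min$ follows, for all $X,Y$ at once, as soon as $R_i\le 2\|A_r\|_{\rm F}^2$ for every row $i$: the bracket is then $\le 2\|A_r\|_{\rm F}^2$ and $\sum_i\|X_{i\cdot}\|^2=\|X\|_{\rm F}^2$. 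The second term is obtained identically after interchanging the roles of rows and columns, reducing to $C_j:=\sum_i (\omega_{ij}/p_{ij})\|(A_r)_{i\cdot}\|^2\le 2\|A_r\|_{\rm F}^2$ for every column $j$.

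It remains to control the $R_i$ (the $C_j$ being symmetric). For fixed $i$, $R_i$ is a sum of independent nonnegative variables with $\EE R_i=\sum_j\|(A_r)_{\cdot j}\|^2=\|A_r\|_{\rm F}^2$; the terms with $p_{ij}=1$ are deterministic and cancel against the corresponding part of the mean, so that $R_i-\EE R_i$ equals the centered sum over the $p_{ij}<1$ terms and it suffices to bound this deviation by $\|A_r\|_{\rm F}^2$. For those terms \eqref{eq:defp} gives $p_{ij}\ge (n/3)\|A_{\cdot j}\|^2/(d_1\|A\|_{\rm F}^2)$, and since $\|(A_r)_{\cdot j}\|\le \mu_r(A)\|A_{\cdot j}\|$ by definition of $\mu_r(A)$, the maximal summand obeys $\max_j \|(A_r)_{\cdot j}\|^2/p_{ij}\le 3\mu_r(A)^2 d\|A\|_{\rm F}^2/n=:b$ (using $d_1\le d$), while the variance is at most $b\,\|A_r\|_{\rm F}^2$. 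Bernstein's inequality then makes $\PP(R_i-\EE R_i>\|A_r\|_{\rm F}^2)$ decay with exponent of order $n\|A_r\|_{\rm F}^2/(\mu_r(A)^2 d\|A\|_{\rm F}^2)$, which exceeds $(\alpha+1)\log d$ under the stated sample-size condition, so each failure probability is at most $d^{-(\alpha+1)}$. A union bound over the $d_1+d_2\le 2d$ events $\{R_i>2\|A_r\|_{\rm F}^2\}$ and $\{C_j>2\|A_r\|_{\rm F}^2\}$ then closes the argument, with the constant $C_1$ absorbing the factor $2d$.

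The main obstacle is conceptual rather than computational: recognizing that the lemma must be read uniformly in $X,Y$ and then finding the reduction above, which decouples the factors $X,Y$ from the randomness so that only $O(d)$ scalar events remain. Cauchy--Schwarz discards the off-diagonal structure of $XY^\top$, but this is exactly what permits the decoupling, and the resulting slack is precisely what the factor $2$ on the right-hand side accommodates (the post-Cauchy--Schwarz mean is at most $\|A_r\|_{\rm F}^2\|X\|_{\rm F}^2\,c$, i.e.\ half the target). Finally, the sharper incoherence estimate $\|(A_r)_{\cdot j}\|\le \mu_r(A)\|A_{\cdot j}\|$, rather than the crude $\|(A_r)_{\cdot j}\|\le \|A_{\cdot j}\|$, is what introduces the favorable $\mu_r(A)$ factor into the sample-size requirement.
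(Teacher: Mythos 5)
Your proposal is correct and follows essentially the same route as the paper: Cauchy--Schwarz on $\langle X_{i\cdot},Y_{j\cdot}\rangle^2$ to reduce to the rank-one/vector case, then Bernstein concentration of the per-row sums $\sum_j (\omega_{ij}/p_{ij})\|(A_r)_{\cdot j}\|^2$ around $\|A_r\|_{\rm F}^2$ using $p_{ij}\gtrsim n\|A_{\cdot j}\|^2/(d\|A\|_{\rm F}^2)$ and $\|(A_r)_{\cdot j}\|\le \mu_r(A)\|A_{\cdot j}\|$, followed by a union bound over rows and columns. Your explicit emphasis on uniformity in $X,Y$ and your $\mu_r(A)^2$ in the sample-size condition both match the paper's actual argument (the lemma statement's $\mu_r(A)$ appears to be a typo for $\mu_r^2(A)$).
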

By Lemma \ref{f1}
\begin{eqnarray*}
&& \left| \langle \calP_\Omega(XY^\top-ST^\top),VY^\top \rangle \right|^2  \\ 
&\leq& \frac{4}{\beta^2(1-\nu_r(A)^2)} \|A_r\|_{\rm F}^2\|\Delta_F\|_\textup{F}^2\cdot \frac{2}{\beta^2(1-\nu_r(A)^2)}  \|A_r\|_{\rm F}^2\|V\|_{\rm F}^2 \\
&=& \frac{8}{\beta^4(1-\nu_r(A)^2)^2} \|A_r\|_{\rm F}^4\|\Delta_F\|_\textup{F}^2,
\end{eqnarray*}
where we used the facts that
$$
\|S_{i.}\| \le \|(A_r)_{i.}\|/\sqrt{\sigma_{r}}\le \|(A_r)_{i.}\|/ (\beta\sqrt{1-\nu_r(A)^2}),$$
and
$$\|Y_{j.}\| \le {\|A_{.j}\|\over\beta} \le  \max_j \left\{{\|A_{.j}\| \over \|(A_r)_{.j}\|} \right\}\cdot {\|(A_r)_{.j}\|\over\beta} = {\|(A_r)_{.j}\|\over \beta\sqrt{1-\nu_r(A)^2}}.$$
Here, we used the definition of $\nu_r(A)$,
$$1-\nu_r(A)^2 = \min_{\substack{1\le i \le d_1\\1\le j \le d_2}}\left\{{\|(A_r)_{i\cdot}\|^2\over \|A_{i\cdot}\|^2},{\|(A_r)_{\cdot j}\|^2\over \|A_{\cdot j}\|^2}\right\} = \left[\max_{\substack{1\le i \le d_1\\1\le j \le d_2}}\left\{{ \|A_{i\cdot}\|^2\over \|(A_r)_{i\cdot}\|^2},{ \|A_{\cdot j}\|^2\over\|(A_r)_{\cdot j}\|^2}\right\}\right]^{-1}.$$
Similarly, we can bound the second term on the righthand side of \eqref{d.1} by
\begin{eqnarray*}
&&\langle \calP_\Omega(N_r), VY^\top\rangle^2 \\
&\leq& 2\langle \calP_\Omega(N_r)-N_r, VY^\top\rangle^2 + 2\langle N_r, VY^\top\rangle^2 \\
&= & 2\langle \calP_\Omega(N_r)-N_r, VY^\top\rangle^2 + 2 \cdot \tr\left( (\Delta_Y)^\top N_r^\top V \right)^2\\
&\leq & 2\| \calP_\Omega(N_r)-N_r\|^2 \cdot \|VY^\top\|_*^2 + 2 \sigma_{r+1}^2 \cdot \|V\|_\textup{F}^2\|\Delta_Y\|_\textup{F}^2\\
&\leq & 2\| \calP_\Omega(N_r)-N_r\|^2 \cdot r\|VY^\top\|_\textup{F}^2 + 2 \sigma_{r+1}^2 \|\Delta_Y\|_\textup{F}^2\\
&\leq & 2r\| \calP_\Omega(N_r)-N_r\|^2 \|Y\|^2 + 2 \sigma_{r+1}^2\|\Delta_Y\|_\textup{F}^2\\
&\leq & 2r\| \calP_\Omega(N_r)-N_r\|^2 \|Y\|^2 + {2 \|A_r\|_{\rm F}^4\over\beta^4(1-\nu_r(A)^2)^2}\|\Delta_Y\|_\textup{F}^2,
\end{eqnarray*}
where we used the fact that 
$${\|A_r\|_{\rm F}^2 \over\beta^2(1-\nu_r(A)^2)}  \geq {\|A_r\|_{\rm F}^2\over \sigma_r} \geq \sigma_r,$$ 
in the last inequality.

The last term of \eqref{d.1} can be bounded by
\begin{eqnarray*}
&&{1\over 2}\left\| X^\top X - Y^\top Y\right\|_{\rm F}^2 \|X\|^2 \|V\|_{\rm F}^2\\
&= &{1\over 2}\left\| X^\top \Delta_X +\Delta_X^\top S - Y^\top \Delta_Y -\Delta_Y ^\top T \right\|_{\rm F}^2 \|X\|^2\\
&\leq & {1\over 2}\left[ (\| X\|+\|S\|) \|\Delta_X\|_{\rm F} + (\| Y\|+\|T\|) \|\Delta_Y\|_{\rm F} \right]^2 \|X\|^2\\ 
&\leq & {9\over 2} \sigma_1^2  \|\Delta_F\|_{\rm F}^2\\
&\leq& {9 \|A_r\|_{\rm F}^4\over 2\beta^4(1-\nu_r(A)^2)^2}\|\Delta_F\|_{\rm F}^2,
\end{eqnarray*}
where we used the facts that  
$$X = S+\Delta_X, \quad Y = T + \Delta_Y,\quad S^\top S = T^\top T,$$
$$\|A_r\|_{\rm F}^2 /(\beta^2(1-\nu_r(A)^2))  \geq \sigma_1,$$
\begin{equation}
\label{X:spectral_bound}
\|X\| \le \|S\| + \|\Delta_X\| \le \|S\| + \|\Delta_F\|_{\rm F} \le \sqrt{\sigma_1} + \frac{1-\kappa_{r}}{20\sqrt{\sigma_1}}\sigma_r \le 1.05\sqrt{\sigma_1},
\end{equation}
and same bound for $\|Y\|$ due to symmetry.

In summary, we have
$$
\| \nabla_X f\|_{\rm F}^2\leq { \|A_r\|_{\rm F}^4\cdot (73\|\Delta_F\|_\textup{F}^2 + 16 \|\Delta_Y\|_\textup{F}^2)\over 2\beta^4(1-\nu_r(A)^2)^2} +9r\sigma_1\| \calP_\Omega(N_r)-N_r\|^2.
$$
The claim then follows from a similar bound for $\| \nabla_Y f\|_{\rm F}$.
\end{proof}

\newpage

\appendix
\section{Proofs of Lemmas}
\begin{proof}[Proof of Lemma \ref{le:concentration}]
Note that
$$
\calP_\Omega(B)-B=\sum_{i,j} {b_{ij}(\omega_{ij}-p_{ij})\over p_{ij}}\cdot e_ie_j^\top=:\sum_{i,j} Z_{ij}.
$$
It is not hard to see that
$$
\left\| \sum_{i,j}\EE Z_{ij}Z_{ij}^\top\right\|=\max_i\left| \sum_{j} {b_{ij}^2(1-p_{ij})\over p_{ij}} \right|  \le {3d\|A\|_{\rm F}^2\over n}\cdot \max_i{\|B_{i\cdot}\|^2\over \|A_{i\cdot}\|^2}.
$$
A similar bound can be derived for $\left\| \sum_{i,j}\EE Z_{ij}^\top Z_{ij}\right\|$.
On the other hand,
$$
\|Z_{ij}\|=\left|{b_{ij}(\omega_{ij}-p_{ij})\over p_{ij}}\right|\le {|b_{ij}|\over p_{ij}}\II(1>p_{ij}>0).
$$
An application of Theorem 4.9 of \cite{latala2018dimension} yields
$$
\PP\left[\|\calP_\Omega(B)-B\|\ge 4\|A\|_{\rm F}\sqrt{d\over n}\max_{\substack{1\le i\le d_1\\1 \le j \le d_2}}\left\{{\|B_{i\cdot}\|\over \|A_{i\cdot}\|},{\|B_{\cdot j}\|\over \|A_{\cdot j}\|}\right\}+t\right]\le d\exp\left(-{C_0t^2\over \max_{i,j:1> p_{ij}>0}\{{b_{ij}^2/ p_{ij}^2}\} }\right).
$$
\end{proof}
\begin{proof}[Proof of Lemma \ref{le:apprest}]
First note that as we have argued in the \eqref{X:spectral_bound}, under the assumptions of Theorem \ref{th:main},
$$\|X_t\| \leq \|S_t\| + \|\Delta_{X_t}\|_{\rm F} < \sqrt{\sigma_1(A)}+0.05\sqrt{\sigma_r(A)} =1.05\sqrt{\sigma_1(A)}.$$
Therefore,
\begin{eqnarray*}
\|X_tY_t^\top-A_r\|_{\rm F}^2&=&\|X_t\Delta_{Y_t}^\top + \Delta_X{T_t}^\top\|_{\rm F}^2  \\
&\le& (\|X_t\|\|\Delta_{Y_t}\|_{\rm F} + \|T_t\|\|\Delta_{X_t}\|_{\rm F})^2\\
&\le& (\|X_t\|^2 + \|T_t\|^2)\|\Delta_{F_t}\|_{\rm F}^2\\
&\le& 3\sigma_1(A)\|\Delta_{F_t}\|_{\rm F}^2,
\end{eqnarray*}
which completes the proof.
\end{proof}

\begin{proof}[Proof of Lemma \ref{c1}]
The first claim is a generalization of Theorem 4.1 from \cite{candes2009exact}, and its proof follows a similar idea. Write $A_r = U_r\Sigma_r V_r^\top$ its SVD and denote by  $X^* = U_r\Sigma_r^{1/2}$ and $Y^* = V_r\Sigma_r^{1/2}$. We first bound $\|\calP_T\calP_\Omega \calP_T-\calP_T\|$, where 
$$T =\{ X^*X^\top + Y{Y^*}^\top |\ (X,Y) \in \RR^{d_2\times r} \times \RR^{ d_1 \times r} \},$$
 and $\calP_T$ is the orthogonal projection on $T$. As shown by \cite{candes2009exact},
$$\calP_T(B) = \calP_{X^*}B+B\calP_{Y^*}-\calP_{X^*}B\calP_{Y^*},$$
for an arbitrary matrix $B$, and
\begin{align}
\left\|\calP_T\left(e_ie_j^\top \right)\right\|_\textup{F}^2
& \leq \left\|\calP_{X^*}e_i\right\|^2  + \left\|\calP_{Y^*}e_j\right\|^2 ,\label{e.2}
\end{align}
where $\calP_{X^*}$, or $\calP_{Y^*}$ is the orthogonal projection onto the column space of $X^*$, or $Y^*$, and $e_i$ is the $i$-th standard basis. 

Observe that
\begin{eqnarray}
\left\|\calP_T\left(A \right)\right\|_\textup{F}^2 &=& \left\|\calP_{X^*} A + (I-\calP_{X^*})A\calP_{Y^*}\right\|_\textup{F}^2 \nonumber\\
&=&\left\|\calP_{X^*} A\right\|_\textup{F}^2 + \left\|(I-\calP_{X^*})A\calP_{Y^*}\right\|_\textup{F}^2 \nonumber\\
& \leq& \left\|A\right\|_\textup{F}^2  + \left\|A\calP_{Y^*}\right\|_\textup{F}^2 \nonumber\\
&\leq& 2\left\|A\right\|_\textup{F}^2.\label{e.1}
\end{eqnarray}
It is easy to verify that $\|(X^*)_{i.}\| \leq \|(A_r)_{i.}\| /\sqrt{\sigma_r} \leq \|A_{i.}\| /\sqrt{\sigma_r}$, then
\begin{eqnarray*}
\|P_{X^*}e_i\|^2 &=& \left\|X^*({X^*}^\top X^*)^{-1}{X^*}^\top e_i\right\|^2\\
 &=&(X^*)_{i.} ({X^*}^\top X^*)^{-1}(X^*)_{i.}^\top \\
& \leq& \frac{\left\|(X^*)_{i.}\right\|^2}{\sigma_{r}} \\
& \leq& \frac{\|A_{i.}\|^2}{\sigma_{r}^2}.
\end{eqnarray*}
We can derive similar result for $\|\calP_{Y^*}e_j\|$. Together with \eqref{e.2}, we have
\begin{align}\label{e.3}
\left\|\calP_T\left(e_ie_j^\top\right)\right\|_\textup{F}^2 \leq \frac{\|A_{i.}\|^2 + \|A_{.j}\|^2}{\sigma_{r}^2}.
\end{align}
For any matrix $B$, we have
$$\calP_\Omega \calP_T(B)-\calP_T(B) = \sum_{ij}\left[\left(\frac{\omega_{ij}}{p_{ij}}-1\right)\langle e_ie_j^\top, \calP_T(B)\rangle e_ie_j^\top\right].$$
It's easy to verify that 
$$\left\langle e_ie_j^\top, \calP_T(B)\right\rangle = \left\langle \calP_T\left(e_ie_j^\top \right), B\right\rangle,$$
by expanding $\calP_T(B)$ and the fact that $\calP_{X^*}$, $\calP_{Y^*}$ are symmetric. 

Therefore
\begin{align*}
\calP_T \calP_\Omega \calP_T(B)-\calP_T(B) &=  \sum_{ij}\left[\left(\frac{\omega_{ij}}{p_{ij}}-1\right)\langle e_ie_j^\top, \calP_T(B)\rangle \calP_T\left(e_ie_j^\top\right)\right] \\
& =  \sum_{ij}\left[\left(\frac{\omega_{ij}}{p_{ij}}-1\right)\left\langle \calP_T\left(e_ie_j^\top \right), B\right\rangle \calP_T\left(e_ie_j^\top\right)\right] \\
& \triangleq \sum_{ij}S_{ij}(B).
\end{align*}

Here, $S_{ij}$ is a transformation from $\mathbb{R}^{d_1\times d_2}$ to $\mathbb{R}^{d_1\times d_2}$, but we can view it as a linear transformation from $\mathbb{R}^{d_1d_2}$ to $\mathbb{R}^{d_1d_2}$ by vectorizing a $d_1\times d_2$ matrix into a $d_1d_2$-dimensional vector. Actually, in the view of $\mathbb{R}^{d_1d_2}$ space,
$$S_{ij} = \left(\frac{\omega_{ij}}{p_{ij}}-1\right) {\rm Vec}\left(P_T\left(e_ie_j^\top \right) \right) {\rm Vec}\left(P_T\left(e_ie_j^\top \right)\right)^\top,$$
where $ {\rm Vec(\cdot)}$ is vectorizing transformation. 

Then we can apply matrix Bernstein inequality.
We have 
$$\mathbbm{E}S_{ij}= 0,\qquad S_{ij}\text{ are mutually independent},$$
 and notice that
$$S_{ij} = 0\qquad\text{if }p_{ij}=1,$$ 
we can always assume
$$ 1 > p_{ij} \geq{n\over 3}\left({\|A_{i\cdot}\|^2\over d_2\|A\|_{\rm F}^2}+{\|A_{\cdot j }\|^2\over d_1\|A\|_{\rm F}^2}+{|a_{ij}|\over \|A\|_{\ell_1}}\right).$$
Together with \eqref{e.3},
\begin{align*}
\|S_{ij}(B)\|_\textup{F} & \leq \frac{1}{p_{ij}} \left\| \calP_T\left(e_ie_j^\top \right) \right\|_\textup{F} \|B\|_\textup{F} \left\|\calP_T\left(e_ie_j^\top\right) \right\|_\textup{F} \\
&\leq \frac{\|A_{i.}\|^2 + \|A_{.j}\|^2}{\sigma_{r}^2p_{ij}} \|B\|_\textup{F} \\
& \leq \frac{3d\|A\|_{\rm F}^2}{n \sigma_{r}^2} \|B\|_\textup{F},
\end{align*}
{therefore}, 
\begin{align}\label{e.4}
\|S_{ij}\| \leq \frac{3d\|A\|_{\rm F}^2}{n \sigma_{r}^2}.
\end{align}
Observe that
\begin{align*}
\left\|\sum_{ij}\mathbbm{E}S_{ij}\left(S_{ij}(B)\right) \right \|_F & =  \left\|\sum_{ij}\mathbbm{E}\left[\left(\frac{\omega_{ij}}{p_{ij}}-1\right)\left\langle e_ie_j^\top, \calP_T(S_{ij}(B))\right\rangle \calP_T\left(e_ie_j^\top\right)\right]\right\|_\textup{F} \\
& = \left\|\sum_{ij}\mathbbm{E}\left[\left(\frac{\omega_{ij}}{p_{ij}}-1\right)^2 \left\langle e_ie_j^\top, \calP_T(B) \right\rangle \left\langle e_ie_j^\top, \calP_T\left(e_ie_j^\top\right)\right\rangle \calP_T\left(e_ie_j^\top\right)\right]\right\|_\textup{F} \\
& = \left\|\sum_{ij}\left[\frac{1-p_{ij}}{p_{ij}} \left\langle e_ie_j^\top, \calP_T(B) \right\rangle \left\langle \calP_T\left(e_ie_j^\top\right), \calP_T\left(e_ie_j^\top\right)\right\rangle \calP_T\left(e_ie_j^\top\right)\right]\right\|_\textup{F} \\
& \leq \sqrt{2}\left\|\sum_{ij}\left[\frac{1-p_{ij}}{p_{ij}} \left\langle e_ie_j^\top, \calP_T(B) \right\rangle \left\langle \calP_T\left(e_ie_j^\top\right), \calP_T\left(e_ie_j^\top\right)\right\rangle e_ie_j^\top \right]\right\|_\textup{F} \\
& \leq \sqrt{2}\left[\max_{ij}\left(\frac{1-p_{ij}}{p_{ij}}\left\|\calP_T\left(e_ie_j^\top \right)\right\|_\textup{F}^2\right)\right] \cdot \left\|\sum_{ij}\left[\left\langle e_ie_j^\top, \calP_T(B)\right\rangle  e_ie_j^\top\right]\right\|_\textup{F} \\
& \leq \frac{3\sqrt{2}d\|A\|_{\rm F}^2}{n \sigma_{r}^2} \left\|\calP_T(B)\right\|_\textup{F}  \\
&\leq \frac{6d\|A\|_{\rm F}^2}{n \sigma_{r}^2} \|B\|_\textup{F},
\end{align*}
where we used \eqref{e.1} for the first and last inequality on the righthand side. Therefore,
$$\left\|\sum_{ij}\mathbbm{E}S_{ij}^2\right\| \leq \frac{6d\|A\|_{\rm F}^2}{n \sigma_{r}^2}.$$
Together with \eqref{e.4}, by matrix Bernstein inequality,
\begin{align}\label{e.5}
\|\calP_T\calP_\Omega \calP_T-\calP_T\| \leq \frac{1-\kappa_r}{20},
\end{align}
with probability at least $1-d^{-\alpha}$, provided that 
$$n \geq C_1(1+\alpha) d\log d \|A\|_{\rm F}^2/(\sigma_r - \sigma_{r+1})^2.$$ 
Note that $D_1 \in T$ by definition,
\begin{eqnarray*}
\langle \calP_\Omega(D_1),D_1\rangle  &=& \langle \calP_\Omega \calP_T(D_1),\calP_T(D_1) \rangle  = \langle \calP_T\calP_\Omega \calP_T(D_1),D_1 \rangle \\
& =&\|D_1\|^2_\textup{F} + \langle  \calP_T\calP_\Omega P_T(D_1) - D_1,D_1\rangle\\
&=&\|D_1\|^2_\textup{F} + \langle  \calP_T\calP_\Omega P_T(D_1) - \calP_T( D_1),D_1\rangle .
\end{eqnarray*}
Together with \eqref{e.5}, we have
\begin{align*}
| \langle \calP_\Omega(D_1),D_1\rangle - \|D_1\|^2_\textup{F} |  \leq \|\calP_T\calP_\Omega \calP_T-\calP_T\| \|D_1\|_\textup{F}^2 \leq \frac{1-\kappa_r}{20}\cdot \|D_1\|_\textup{F}^2.
\end{align*}
We now turn to the second claim. Denote by 
$$\hat{b}_1 =(\|(\Delta_X)_{i.}\|^2/\|(A_r)_{i.}\|)_{1\leq i \leq d_1},\quad \hat{b}_2 =(\|(\Delta_Y)_{j.}\|^2/\|(A_r)_{.j}\|)_{1\leq j \leq d_2},$$ 
two vectors and 
$$B = \{\|(A_r)_{i.}\|\|(A_r)_{.j}\|\}_{1\leq i\leq d_1,1\leq j\leq d_2},$$
a matrix. 
We can assume 
$$p_{ij} \geq{n\over 3}\left({\|A_{i\cdot}\|^2\over d_2\|A\|_{\rm F}^2}+{\|A_{\cdot j }\|^2\over d_1\|A\|_{\rm F}^2}+{|a_{ij}|\over \|A\|_{\ell_1}}\right),$$
as we only care those $p_{ij} < 1$. It's easy to verify that
$$\max_{ij:p_{ij}<1}\left|\frac{b_{ij}}{p_{ij}}\right| \leq \frac{3d\|A\|_{\rm F}^2}{2n}\mu_r^2(A),$$
and
\begin{eqnarray*}
\max_i \sum_{j:p_{ij}<1}\frac{b^2_{ij}}{p_{ij}} &\leq&  \max_i \sum_j\frac{\|(A_r)_{i.}\|^2\|(A_r)_{.j}\|^2}{n \|A_{i.}\|^2/\left(3d\|A\|_{\rm F}^2\right)} \\
& =& \max_i \frac{3d\|A\|_{\rm F}^2\|A_r\|_{\rm F}^2\|(A_r)_{i.}\|^2}{n \cdot \|A_{i.}\|^2} \\
&\leq& \frac{3d\mu_r^2(A)\|A\|_{\rm F}^2\|A_r\|_{\rm F}^2}{n},
\end{eqnarray*}
where we used the fact that $\|(A_r)_{i.}\|\le \mu_r(A)\|A_{i.}\|$ and $\|(A_r)_{.j}\|\le \mu_r(A)\|A_{.j}\|$.
Same bound can be derived for $\max_j \sum_{i:p_{ij}<1}(b^2_{ij}/p_{ij})$. 

Similar to Lemma \ref{le:concentration}, we can prove
\begin{equation*}
\|\calP_{\Omega}(B)-B\| \leq O\left(\sqrt{d/n} \mu_r(A)\|A\|_{\rm F}\|A_r\|_{\rm F}\right),
\end{equation*}
with probability at least $1 - d^{-\alpha}$, provided that $n \geq C_1(1+\alpha)d\log d \cdot  \mu_r^2(A)\|A\|_{\rm F}^2/\|A_r\|_{\rm F}^2$.

Observe that
\begin{eqnarray*}
\left \langle \calP_\Omega(D_2),D_2 \right \rangle &=& \sum_{ij}\frac{\omega_{ij}}{p_{ij}}\langle (\Delta_X)_{i.}, (\Delta_Y)_{j.}\rangle^2\\
&\leq&   \sum_{ij}\frac{\omega_{ij}}{p_{ij}}\|(\Delta_X)_{i.}\|^2\| (\Delta_Y)_{j.}\|^2 \\
&=& \hat{b}_1^\top \left[\calP_\Omega(B)\right] \hat{b}_2\\
&= &\hat{b}_1^\top \left[\calP_\Omega(B)-B\right] \hat{b}_2+ \|\Delta_X\|_\textup{F}^2 \|\Delta_Y\|_\textup{F}^2  \\
 &\leq & \|\calP_{\Omega}(B)-B\| \left\|\hat{b}_1 \right\| \left\|\hat{b}_2\right\|+\|\Delta_X\|_\textup{F}^2 \|\Delta_Y\|_\textup{F}^2\\
&\leq &O\left(\sqrt{d\over n} \mu_r(A)\|A\|_{\rm F} \|A_r\|_{\rm F}\right) \cdot \sqrt{\sum_i \frac{\|(\Delta_X)_{i.}\|^4}{\|(A_r)_{i.}\|^2}\sum_j \frac{\|(\Delta_Y)_{j.}\|^4}{\|(A_r)_{.j}\|^2}} \\
&&\hskip100pt +\|\Delta_X\|_\textup{F}^2 \|\Delta_Y\|_\textup{F}^2 \\
&\le &  O\left((\sqrt{d\over n} \mu_r(A) \|A\|_{\rm F} \|A_r\|_{\rm F}   \right)\cdot\frac{4}{\beta^2(1-\nu_r(A)^2)} \|\Delta_X\|_\textup{F} \|\Delta_Y\|_\textup{F}\\
&&\hskip100pt   +\|\Delta_X\|_\textup{F}^2 \|\Delta_Y\|_\textup{F}^2,
\end{eqnarray*}
where we used the facts that
\begin{eqnarray*}
\|(\Delta_X)_{i.}\|  &\leq& \|X_{i.}\|+\|S_{i.}\| \\
&\leq& \|A_{i.}\|/\beta +  \|(A_r)_{i.}\|/\sqrt{\sigma_{r}}\\
&\leq&    \max_i \left\{{\|A_{i.}\| \over \|(A_r)_{i.}\|} \right\}\cdot {\|(A_r)_{i.}\|\over\beta} +  {\|(A_r)_{i.}\|\over \beta\sqrt{1-\nu_r(A)^2}}\\
&\le& {2\|(A_r)_{i.}\|\over \beta\sqrt{1-\nu_r(A)^2}},
\end{eqnarray*}
and, similarly, $\|(\Delta_Y)_{j.}\|\leq  2 \|(A_r)_{.j}\|/(\beta\sqrt{1-\nu_r(A)^2})$ for the last inequality. 

Together with the fact that
$$\|\Delta_X\|_\textup{F} \|\Delta_Y\|_\textup{F} \leq {1\over2}\|\Delta_F\|_\textup{F}^2\leq  {1\over2}\left(\frac{1-\kappa_r}{20} \right)^2{\sigma_r^2\over \sigma_1} \le  {1\over2}\left(\frac{1-\kappa_r}{20} \right)^2\sigma_r, $$
we get
\begin{equation*}
\left \langle \calP_\Omega(D_2),D_2 \right \rangle \leq \left(\frac{1-\kappa_r}{20}\right)^2\cdot \sigma_r\left\|\Delta_F \right\|_\textup{F}^2,
\end{equation*}
provided
$$n \geq C_1(1+\alpha)d\log d \cdot \frac{\mu_r^2(A) \|A\|_{\rm F}^2\|A_r\|_{\rm F}^2}{(1-\kappa_r)^4\beta^4(1-\nu_r(A)^2)^2\sigma_r^2}.$$ 
\end{proof}

\begin{proof}[Proof of Lemma \ref{f1}]
Let's first look at the case where $x$ and $y$ are vectors,
\begin{align}
\langle \calP_{\Omega}(xy^\top),xy^\top\rangle &= \sum_i x_i^2 \sum_j \frac{\omega_{ij}}{p_{ij}}y_j^2 \nonumber\\
& = \sum_i x_i^2 \sum_j \frac{\omega_{ij}\|(A_r)_{.j}\|^2}{p_{ij}}\cdot \frac{y_j^2}{\|(A_r)_{.j}\|^2}\label{f.1}.
\end{align}
We shall first bound
\begin{align}
&\sum_j \frac{\omega_{ij}\|(A_r)_{.j}\|^2}{p_{ij}} \nonumber\\
= &\sum_{j}\|(A_r)_{.j}\|^2 + \sum_j \frac{(\omega_{ij}-p_{ij})\|(A_r)_{.j}\|^2}{p_{ij}} \nonumber\\
= &\|A_r\|_{\rm F}^2 +  \sum_j \frac{(\omega_{ij}-p_{ij})\|(A_r)_{.j}\|^2}{p_{ij}}\label{f.2}.
\end{align}
Observe that
\begin{align*}
\left| \frac{(\omega_{ij}-p_{ij})\|(A_r)_{.j}\|^2}{p_{ij}} \right| \leq \frac{\|(A_r)_{.j}\|^2}{p_{ij}} \leq \frac{3d\|A\|_{\rm F}^2}{n}\mu_r^2(A),
\end{align*}
and
\begin{eqnarray*}
\mathbbm{E}\sum_j \frac{(\omega_{ij}-p_{ij})^2\|(A_r)_{.j}\|^4}{p_{ij}^2}  &\leq& \sum_j \frac{\|(A_r)_{.j}\|^4}{p_{ij}} \\
&\leq & \sum_j \frac{3d\mu_r^2(A)\|A\|_{\rm F}^2\|(A_r)_{.j}\|^2}{n} \\
&=& \frac{3d\mu_r^2(A)\|A\|_{\rm F}^2\|A_r\|_{\rm F}^2}{n},
\end{eqnarray*}
where we used the facts that $\|(A_r)_{i.}\|\le\mu_r(A)\|A_{i.}\|$ and $\|(A_r)_{.j}\|\le\mu_r(A)\|A_{.j}\|$.

By Bernstein inequality,
\begin{align*}
\sum_j \frac{(\omega_{ij}-p_{ij})\|A_{.j}\|^2}{p_{ij}} \leq O \left(\sqrt{\alpha d\log d/n}\cdot\mu_r(A)\|A\|_{\rm F}\|A_r\|_{\rm F} \right),
\end{align*}
with probability at least $1- d^{-\alpha}$, provided that $n \geq C_1\alpha d\log d\cdot \mu_r^2(A)\|A\|_{\rm F}^2/\|A_r\|_{\rm F}^2$.

Together with \eqref{f.2}, whenever $n \geq C_1\alpha d\log d \cdot \mu_r^2(A)\|A\|_{\rm F}^2/\|A_r\|_{\rm F}^2$,
\begin{align*}
\sum_j \frac{\omega_{ij}\|A_{.j}\|^2}{p_{ij}} \leq 2\|A_r\|_{\rm F}^2.
\end{align*}
Combined with \eqref{f.1},
\begin{align*}
\langle \calP_{\Omega}(xy^\top),xy^\top\rangle & \leq   \sum_i x_i^2 \cdot 2\|A_r\|_{\rm F}^2 \max_j \left( \frac{y_j^2}{\|(A_r)_{.j}\|^2} \right)\\
&= 2\|x\|^2 \|A_r\|_{\rm F}^2\cdot \max_j  \left(\frac{y_j^2}{\|(A_r)_{.j}\|^2}\right).
\end{align*}
Similarly, we have
\begin{align*}
\langle \calP_{\Omega}(xy^\top),xy^\top\rangle  \leq  2\|y\|^2 \|A_r\|_{\rm F}^2\cdot \max_i  \left(\frac{x_i^2}{\|(A_r)_{i.}\|^2}\right).
\end{align*}
Further,
\begin{align*}
\langle \calP_{\Omega}(XY^\top),XY^\top\rangle  = \sum_{i,j} \frac{\omega_{ij}}{p_{ij}}\langle X_{i.},Y_{j.}\rangle^2 \leq \sum_{i,j} \frac{\omega_{ij}}{p_{ij}}\| X_{i.}\|^2\|Y_{j.}\|^2 .
\end{align*}
Applying the previous results to vector $x = (\| X_{i.}\|)_{1 \leq i \leq d_1}$ and $y = (\| Y_{j.}\|)_{1 \leq j \leq d_2}$ gives us the result in the matrix form.
\end{proof}

\bibliographystyle{plainnat}
\bibliography{reference}

\begin{thebibliography}{36}
\providecommand{\natexlab}[1]{#1}
\providecommand{\url}[1]{\texttt{#1}}
\expandafter\ifx\csname urlstyle\endcsname\relax
  \providecommand{\doi}[1]{doi: #1}\else
  \providecommand{\doi}{doi: \begingroup \urlstyle{rm}\Url}\fi

\bibitem[Achlioptas and McSherry(2007)]{achlioptas2007fast}
Dimitris Achlioptas and Frank McSherry.
\newblock Fast computation of low-rank matrix approximations.
\newblock \emph{Journal of the ACM (JACM)}, 54\penalty0 (2):\penalty0 9, 2007.

\bibitem[Achlioptas et~al.(2013)Achlioptas, Karnin, and
  Liberty]{achlioptas2013near}
Dimitris Achlioptas, Zohar~S Karnin, and Edo Liberty.
\newblock Near-optimal entrywise sampling for data matrices.
\newblock In \emph{Advances in Neural Information Processing Systems}, pages
  1565--1573, 2013.

\bibitem[Anderson(2003)]{anderson2003introduction}
T.W. Anderson.
\newblock \emph{An Introduction to Multivariate Statistical Analysis}.
\newblock Wiley Series in Probability and Statistics. Wiley, 2003.
\newblock ISBN 9780471360919.

\bibitem[Arora et~al.(2006)Arora, Hazan, and Kale]{arora2006fast}
Sanjeev Arora, Elad Hazan, and Satyen Kale.
\newblock A fast random sampling algorithm for sparsifying matrices.
\newblock In \emph{Approximation, Randomization, and Combinatorial
  Optimization. Algorithms and Techniques}, pages 272--279. Springer, 2006.

\bibitem[Bai and Silverstein(2010)]{bai2010spectral}
Zhidong Bai and Jack~W Silverstein.
\newblock \emph{Spectral analysis of large dimensional random matrices},
  volume~20.
\newblock Springer, 2010.

\bibitem[Boutsidis et~al.(2009)Boutsidis, Mahoney, and
  Drineas]{boutsidis2009improved}
Christos Boutsidis, Michael~W Mahoney, and Petros Drineas.
\newblock An improved approximation algorithm for the column subset selection
  problem.
\newblock In \emph{Proceedings of the twentieth annual ACM-SIAM symposium on
  Discrete algorithms}, pages 968--977. SIAM, 2009.

\bibitem[Cand{\`e}s and Recht(2009)]{candes2009exact}
Emmanuel~J Cand{\`e}s and Benjamin Recht.
\newblock Exact matrix completion via convex optimization.
\newblock \emph{Foundations of Computational mathematics}, 9\penalty0
  (6):\penalty0 717, 2009.

\bibitem[Cand{\`e}s and Tao(2010)]{candes2010power}
Emmanuel~J Cand{\`e}s and Terence Tao.
\newblock The power of convex relaxation: Near-optimal matrix completion.
\newblock \emph{IEEE Transactions on Information Theory}, 56\penalty0
  (5):\penalty0 2053--2080, 2010.

\bibitem[Chen and Chi(2018)]{chen2018harnessing}
Yudong Chen and Yuejie Chi.
\newblock Harnessing structures in big data via guaranteed low-rank matrix
  estimation: Recent theory and fast algorithms via convex and nonconvex
  optimization.
\newblock \emph{IEEE Signal Processing Magazine}, 35\penalty0 (4):\penalty0
  14--31, 2018.

\bibitem[Chen and Wainwright(2015)]{chen2015fast}
Yudong Chen and Martin~J Wainwright.
\newblock Fast low-rank estimation by projected gradient descent: General
  statistical and algorithmic guarantees.
\newblock \emph{arXiv preprint arXiv:1509.03025}, 2015.

\bibitem[Chen et~al.(2015)Chen, Bhojanapalli, Sanghavi, and
  Ward]{chen2015completing}
Yudong Chen, Srinadh Bhojanapalli, Sujay Sanghavi, and Rachel Ward.
\newblock Completing any low-rank matrix, provably.
\newblock \emph{The Journal of Machine Learning Research}, 16\penalty0
  (1):\penalty0 2999--3034, 2015.

\bibitem[Chen et~al.(2020)Chen, Chi, Fan, Ma, and Yan]{chen2020noisy}
Yuxin Chen, Yuejie Chi, Jianqing Fan, Cong Ma, and Yuling Yan.
\newblock Noisy matrix completion: Understanding statistical guarantees for
  convex relaxation via nonconvex optimization.
\newblock \emph{SIAM Journal on Optimization}, 30\penalty0 (4):\penalty0
  3098--3121, 2020.

\bibitem[Clauset et~al.(2004)Clauset, Newman, and Moore]{clauset2004finding}
Aaron Clauset, Mark~EJ Newman, and Cristopher Moore.
\newblock Finding community structure in very large networks.
\newblock \emph{Physical review E}, 70\penalty0 (6):\penalty0 066111, 2004.

\bibitem[Davenport and Romberg(2016)]{davenport2016overview}
Mark~A Davenport and Justin Romberg.
\newblock An overview of low-rank matrix recovery from incomplete observations.
\newblock \emph{IEEE Journal of Selected Topics in Signal Processing},
  10\penalty0 (4):\penalty0 608--622, 2016.

\bibitem[Drineas and Zouzias(2011)]{drineas2011note}
Petros Drineas and Anastasios Zouzias.
\newblock A note on element-wise matrix sparsification via a matrix-valued
  bernstein inequality.
\newblock \emph{Information Processing Letters}, 111\penalty0 (8):\penalty0
  385--389, 2011.

\bibitem[Drineas et~al.(2008)Drineas, Mahoney, and
  Muthukrishnan]{drineas2008relative}
Petros Drineas, Michael~W Mahoney, and Shan Muthukrishnan.
\newblock Relative-error cur matrix decompositions.
\newblock \emph{SIAM Journal on Matrix Analysis and Applications}, 30\penalty0
  (2):\penalty0 844--881, 2008.

\bibitem[Frieze et~al.(2004)Frieze, Kannan, and Vempala]{frieze2004fast}
Alan Frieze, Ravi Kannan, and Santosh Vempala.
\newblock Fast monte-carlo algorithms for finding low-rank approximations.
\newblock \emph{Journal of the ACM (JACM)}, 51\penalty0 (6):\penalty0
  1025--1041, 2004.

\bibitem[Ge et~al.(2016)Ge, Lee, and Ma]{ge2016matrix}
Rong Ge, Jason~D Lee, and Tengyu Ma.
\newblock Matrix completion has no spurious local minimum.
\newblock \emph{Advances in Neural Information Processing Systems}, pages
  2981--2989, 2016.

\bibitem[Ge et~al.(2017)Ge, Jin, and Zheng]{ge2017no}
Rong Ge, Chi Jin, and Yi~Zheng.
\newblock No spurious local minima in nonconvex low rank problems: A unified
  geometric analysis.
\newblock In \emph{Proceedings of the 34th International Conference on Machine
  Learning-Volume 70}, pages 1233--1242. JMLR. org, 2017.

\bibitem[Gross(2011)]{gross2011recovering}
David Gross.
\newblock Recovering low-rank matrices from few coefficients in any basis.
\newblock \emph{IEEE Transactions on Information Theory}, 57\penalty0
  (3):\penalty0 1548--1566, 2011.

\bibitem[Jain and Kar(2017)]{jain2017non}
Prateek Jain and Purushottam Kar.
\newblock Non-convex optimization for machine learning.
\newblock \emph{Foundations and Trends{\textregistered} in Machine Learning},
  10\penalty0 (3-4):\penalty0 142--336, 2017.

\bibitem[Keshavan et~al.(2010)Keshavan, Montanari, and Oh]{keshavan2010matrix}
Raghunandan~H Keshavan, Andrea Montanari, and Sewoong Oh.
\newblock Matrix completion from a few entries.
\newblock \emph{IEEE transactions on information theory}, 56\penalty0
  (6):\penalty0 2980--2998, 2010.

\bibitem[Kluger et~al.(2003)Kluger, Basri, Chang, and
  Gerstein]{kluger2003spectral}
Yuval Kluger, Ronen Basri, Joseph~T Chang, and Mark Gerstein.
\newblock Spectral biclustering of microarray data: coclustering genes and
  conditions.
\newblock \emph{Genome research}, 13\penalty0 (4):\penalty0 703--716, 2003.

\bibitem[Krishnamurthy and Singh(2013)]{krishnamurthy2013low}
Akshay Krishnamurthy and Aarti Singh.
\newblock Low-rank matrix and tensor completion via adaptive sampling.
\newblock In \emph{Advances in Neural Information Processing Systems}, pages
  836--844, 2013.

\bibitem[Lata{\l}a et~al.(2018)Lata{\l}a, van Handel, and
  Youssef]{latala2018dimension}
Rafa{\l} Lata{\l}a, Ramon van Handel, and Pierre Youssef.
\newblock The dimension-free structure of nonhomogeneous random matrices.
\newblock \emph{Inventiones mathematicae}, 214\penalty0 (3):\penalty0
  1031--1080, 2018.

\bibitem[Mahoney(2011)]{mahoney2011randomized}
Michael~W Mahoney.
\newblock Randomized algorithms for matrices and data.
\newblock \emph{Foundations and Trends{\textregistered} in Machine Learning},
  3\penalty0 (2):\penalty0 123--224, 2011.

\bibitem[Mahoney and Drineas(2009)]{mahoney2009cur}
Michael~W Mahoney and Petros Drineas.
\newblock Cur matrix decompositions for improved data analysis.
\newblock \emph{Proceedings of the National Academy of Sciences}, 106\penalty0
  (3):\penalty0 697--702, 2009.

\bibitem[O'Rourke et~al.(2018)O'Rourke, Vu, and Wang]{o2018random}
Sean O'Rourke, Van Vu, and Ke~Wang.
\newblock Random perturbation and matrix sparsification and completion.
\newblock \emph{arXiv preprint arXiv:1803.00679}, 2018.

\bibitem[Raskutti and Mahoney(2016)]{raskutti2016statistical}
Garvesh Raskutti and Michael~W Mahoney.
\newblock A statistical perspective on randomized sketching for ordinary
  least-squares.
\newblock \emph{The Journal of Machine Learning Research}, 17\penalty0
  (1):\penalty0 7508--7538, 2016.

\bibitem[Recht(2011)]{recht2011simpler}
Benjamin Recht.
\newblock A simpler approach to matrix completion.
\newblock \emph{Journal of Machine Learning Research}, 12\penalty0 (12), 2011.

\bibitem[Scott(2017)]{scott2017social}
John Scott.
\newblock \emph{Social network analysis}.
\newblock Sage, 2017.

\bibitem[Smith et~al.(2004)Smith, Jenkinson, Woolrich, Beckmann, Behrens,
  Johansen-Berg, Bannister, De~Luca, Drobnjak, and Flitney]{smith2004advances}
Stephen~M Smith, Mark Jenkinson, Mark~W Woolrich, Christian~F Beckmann,
  Timothy~EJ Behrens, Heidi Johansen-Berg, Peter~R Bannister, Marilena De~Luca,
  Ivana Drobnjak, and David~E Flitney.
\newblock Advances in functional and structural mr image analysis and
  implementation as {FSL}.
\newblock \emph{Neuroimage}, 23:\penalty0 S208--S219, 2004.

\bibitem[Stelzl et~al.(2005)Stelzl, Worm, Lalowski, Haenig, Brembeck, Goehler,
  Stroedicke, Zenkner, Schoenherr, and Koeppen]{stelzl2005human}
Ulrich Stelzl, Uwe Worm, Maciej Lalowski, Christian Haenig, Felix~H Brembeck,
  Heike Goehler, Martin Stroedicke, Martina Zenkner, Anke Schoenherr, and
  Susanne Koeppen.
\newblock A human protein-protein interaction network: a resource for
  annotating the proteome.
\newblock \emph{Cell}, 122\penalty0 (6):\penalty0 957--968, 2005.

\bibitem[Sun and Luo(2016)]{sun2016guaranteed}
Ruoyu Sun and Zhi-Quan Luo.
\newblock Guaranteed matrix completion via non-convex factorization.
\newblock \emph{IEEE Transactions on Information Theory}, 62\penalty0
  (11):\penalty0 6535--6579, 2016.

\bibitem[Wainwright(2019)]{wainwright2019high}
M.J. Wainwright.
\newblock \emph{High-Dimensional Statistics: A Non-Asymptotic Viewpoint}.
\newblock Cambridge Series in Statistical and Probabilistic Mathematics.
  Cambridge University Press, 2019.
\newblock ISBN 9781108571234.
\newblock URL \url{https://books.google.com/books?id=tMKIDwAAQBAJ}.

\bibitem[Woodruff et~al.(2014)]{woodruff2014sketching}
David~P Woodruff et~al.
\newblock Sketching as a tool for numerical linear algebra.
\newblock \emph{Foundations and Trends{\textregistered} in Theoretical Computer
  Science}, 10\penalty0 (1--2):\penalty0 1--157, 2014.

\end{thebibliography}

\end{document}